\newcommand\norm[1]{\left\lVert#1\right\rVert}
\newenvironment{proof}[1][Proof]{\noindent \textbf{#1.} }{\  \rule{0.5em}{0.5em}}
\providecommand{\customgenericname}{}
\newcommand{\newcustomtheorem}[2]{%
  \newenvironment{#1}[1]
  {%
   \renewcommand\customgenericname{#2}%
   \renewcommand\theinnercustomgeneric{##1}%
   \innercustomgeneric
  }
  {\endinnercustomgeneric}
}
\newtheorem{ass}{Assumption}[section]
\newtheorem{cor}{Corollary}[section]
\newtheorem{prop}{Proposition}[section]
\newtheorem{rem}{Remark}[section]
\newtheorem{lem}{Lemma}[section]
\definecolor{ao}{rgb}{0.0, 0.5, 0.0}
\begin{document}

	\def\spacingset#1{\renewcommand{\baselinestretch}%
		{#1}\small\normalsize} \spacingset{1}

	%%%%%%%%%%%%%%%%%%%%%%%%%%%%%%%%%%%%%%%%%%%%%%%%%%%%%%%%%%%%%%%%%%%%%%%%%%%%%%

\title{ 
\huge On the Properties of the Synthetic Control Estimator with Many Periods and Many Controls\footnote{I would like to thank Alberto Abadie, Matias Cattaneo, Aureo de Paula, Marcelo Fernandes,  Antonio Galvao,  Victor Filipe Martins-da-Rocha,  Ricardo Masini,  Eduardo Mendes,  Whitney Newey, Vitor Possebom, and Pedro Sant'Anna for comments and suggestions. I also thank seminar and conference participants at the Chamberlain seminar, Warwick University, University of Cambridge, Queen Mary University, and of the Simonsen Lecture of the 2019 LACEA/LAMES conference.   Luis Alvarez and Lucas Barros provided truly exceptional research assistance, being fundamental in both the derivation of the theoretical results and in the simulations. I also gratefully acknowledge financial support from FAPESP. }}

\author{
Bruno Ferman\footnote{email: bruno.ferman@fgv.br; address: Sao Paulo School of Economics, FGV, Rua Itapeva no. 474, Sao Paulo - Brazil, 01332-000; telephone number: +55 11 3799-3350}  \\
\\
Sao Paulo School of Economics - FGV \\
\\
\footnotesize
First Draft: June 14th, 2019 \\ \footnotesize This Draft: May 25th, 2020
\footnotesize
}

\date{}
\maketitle

	\newsavebox{\tablebox} \newlength{\tableboxwidth}
	
	%%%%%%%%%%%%%%%%%%%%%%%%%%%%%%%%%%
	%%       Abstract                %
	%%%%%%%%%%%%%%%%%%%%%%%%%%%%%%%%%%

	\begin{center}

\

\textbf{Abstract}

\end{center}

We consider the asymptotic properties of the Synthetic Control (SC) estimator when both the number of pre-treatment periods and control units are large. If potential outcomes follow a linear factor model, we provide conditions under which the factor loadings of the SC unit converge in probability to the factor loadings of the treated unit. This happens when there are weights diluted among an increasing number of control units such that a weighted average of the factor loadings of the control units asymptotically reconstructs the factor loadings of the treated unit. In this case, the SC estimator is asymptotically unbiased even when treatment assignment is correlated with time-varying unobservables. This result can be valid even when the number of control units is larger than the number of pre-treatment periods.

\

	\noindent%
	{\it Keywords: counterfactual analysis, comparative studies, synthetic control, policy evaluation, panel data, factor models. } 
	
	\
	
	\noindent%	
	{\it JEL Codes: C13; C21; C23
} 
	
		\vfill

	\newpage
	\spacingset{1.45} % DON'T change the spacing!
	
	%%%%%%%%%%%%%%%%%%%%%%%%%%%%%%%%%
	%Introduction %                                     
	%%%%%%%%%%%%%%%%%%%%%%%%%%%%%%%%

\newpage
\spacingset{1.5} % DON'T change the spacing!
\section{Introduction}
\label{sec:intro}

The Synthetic Control (SC) estimator, proposed in a series of influential papers by \cite{Abadie2003}, \cite{Abadie2010}, and  \cite{Abadie2015}, quickly became one of the most popular methods for policy evaluation (e.g., \cite{Athey_Imbens}). An important advantage of the SC method is that it can potentially allow for correlation between treatment assignment and time-varying unobserved covariates.  Assuming a perfect pre-treatment fit condition, \cite{Abadie2010} show that the bias of the SC estimator is bounded by a function that asymptotes to zero when the number of pre-treatment periods increases and the number of control units is fixed.\footnote{We refer to perfect pre-treatment fit as the existence of weights such that a weighted average of the control units equal to outcome of the treated unit for all pre-treatment periods. \cite{FB} and \cite{Powell} also consider the properties of the SC and related estimators under a perfect pre-treatment fit condition. } However, when the perfect pre-treatment fit condition is relaxed and the number of control units is fixed, \cite{FP_SC} show that the SC estimator is generally biased when there are time-varying unobserved confounders.  In settings where the number of control units and pre-treatment periods are both large, there  are  alternative methods, many of them based on the original SC estimator,  that allow for selection on time-varying unobservables.\footnote{See, for example,  \cite{SDID}, \cite{Matrix}, \cite{Gobillon},  \cite{Bai}, and \cite{XU}.}  However, the properties of the original SC estimator --- which remains commonly used in empirical applications ---  when both the number of pre-treatment periods and control units go to infinity received less attention.  
 
In this paper, we consider the asymptotic properties of the  SC estimator when both the number of pre-treatment periods and the number of control units increase, and the pre-treatment fit is imperfect. We consider a linear factor model structure for potential outcomes, and  derive conditions under which, in this setting, the factor loadings of the SC unit --- which is a weighted average of the factor loadings  of the control units --- converge in probability to the factor loadings of the treated unit. We show that this will be the case when, as the number of control units goes to infinity, there are weights diluted among an increasing number of control units that (asymptotically) recover the factor loadings of the treated unit.   This holds even in settings in which  the number of control units is at the same magnitude or even larger than the  number of pre-treatment periods, which is common  in SC applications (e.g., \cite{Doudchenko}).

The intuition is the following. \cite{FP_SC} show that, in a setting with a fixed number of control units and imperfect pre-treatment fit, the SC weights converge to weights that, in general, do \emph{not} converge to weights that recover the factor loadings of the treated unit when the number of pre-treatment periods increases. The reason is that the SC weights converge to weights that attempt to, at the same time, recover the factor loadings of the treated unit \emph{and} minimize the variance of a weighted average of the idiosyncratic shocks of the control units. However, when the number of control units increases, the importance of the variance of this  weighted average of the idiosyncratic shocks vanishes \emph{if}  it is possible to recover the factor loadings of the treated unit with weights that are diluted among an increasing number of control units. In this case, the SC weights  converge to weights that recover the factor loadings of the treated unit. As a consequence, the SC estimator is asymptotically unbiased even when treatment assignment is correlated with time-varying unobservables.\footnote{\cite{SDID} show that SC weights with an $L_2$ penalization, to ensure that in large samples there will be many units with positive weights, consistently estimates a low-rank matrix structure when the penalization constraint becomes tighter. We do not require an $L_2$ penalization in the estimation of the weights, so our results are valid for the original SC weights, which does not use such penalization. }   

While increasing the number of control units increases the number of parameters to be estimated, as shown by \cite{Chernozhukov}, the non-negativity and adding-up constraints work as a regularization method. This is why  it is possible to consistently estimate the factor loadings of the treated unit even when the number of control units grows at a faster rate than the number of pre-treatment periods. We provide conditions for the consistency of the factor loadings of the SC unit even when the linear factor model structure induces a non-zero correlation between the outcome of the control units and the error in a linear model that relates the outcomes of the treated and the control units using balancing weights. We refer to ``balancing weights'' as weights such that a linear combination of the factor loadings of the control units recover the factor loadings of the treated unit.

We also show that such regularization implies that, asymptotically, there is no over-fitting. Asymptotically, the SC unit absorbs only the common factor structure, so that the pre-treatment fit will \emph{not} be perfect due to the idiosyncratic shocks, even when the number of control units increases. This highlights that the asymptotic unbiasedness of the SC estimator we derive does \emph{not} come from improvements in the pre-treatment fit due to an increased number of control units. Rather, it comes from the fact that, under the conditions we consider for the factor loadings, it is possible to construct balancing weights such that the variance of a  linear combination of the idiosyncratic shocks of the control units using those weights converges to zero. 

Overall, these results  extend the set of possible applications in which the SC estimator can be reliably used. While the original SC papers recommend that the method should only be used in applications that present a good pre-treatment fit for a long series of pre-treatment periods, we show that, under some conditions, it can still be reliable even when the pre-treatment fit is imperfect. The conditions we derive for asymptotic unbiasedness provide a guideline on how applied researchers should justify the use of the method in empirical applications with imperfect pre-treatment fit.

If we relax the  non-negativity constraint on the weights, then the  estimator for the factor loadings of the treated unit will still be asymptotically unbiased when both the number of pre-treatment periods and the number of controls increase.\footnote{In this case, we need that the number of pre-treatment periods is greater or equal to the number of control units, so that the estimator is well defined. We  rely on stronger assumptions for the case in which the ratio between the number of control units and the number of pre-treatment periods converges to one. } However, due to the lack of regularization, this estimator may not be consistent if the ratio between the number of control units and the number of pre-treatment periods converges to one. We provide a simple example showing that, while the bias of the estimator for the treatment effects when we relax these constraints converges to zero when the number of control units goes to infinity, the variance of its asymptotic distribution is increasing with the ratio between the number of control units and pre-treatment periods. When this ratio becomes close to one, the variance of this asymptotic distribution diverges. This highlights the importance of using regularization methods when the number of pre-treatment periods is not much larger than the number of control units. 

We present a baseline SC setting in Section \ref{setting}. In Section \ref{Section_original_SC}, we analyze the asymptotic properties of the original SC estimator when both the number of pre-treatment periods and the number of control units go to infinity. In Section \ref{Section_OLS}, we analyze the asymptotic properties of the SC estimator when we relax the non-negativity and adding-up constraints in this setting. We present a simple Monte Carlo exercise in Section \ref{MC} to illustrate the theoretical results presented in Sections \ref{Section_original_SC} and \ref{Section_OLS}. Section \ref{conclusion} concludes.

\section{Setting}
\label{setting}
There are $i=0,1,...,J$ units, where unit 0 is treated and the other units  are controls. Potential outcomes when unit $i$ at time $t$ is treated ($y_{it}^I $) and non-treated ($y_{it}^N$) are determined by a linear factor model,
\begin{eqnarray} \label{model}
\begin{cases} y_{it}^N =  \boldsymbol{\lambda}_t \boldsymbol{\mu}_i + \epsilon_{it}  \\ 
y_{it}^I = \alpha_{it} + y_{it}^N, \end{cases}
\end{eqnarray}
where $\boldsymbol{\lambda}_t = [\lambda_{1t} ~ ... ~ \lambda_{Ft}]$ is an $1 \times F$ vector of unknown common factors, $\boldsymbol{\mu}_i$ is an $F \times 1$ vector of unknown factor loadings, and the error terms $\epsilon_{it}$ are unobserved idiosyncratic shocks.

We only observe $y_{it} = d_{it} y_{it}^I  +  (1-d_{it}) y_{it}^N$, where $d_{it}=1$ if unit $i$ is treated at time $t$. We analyze the properties of the SC estimator considering a repeated sampling framework over the distribution of $\epsilon_{it}$, conditional on fixed sequence of $\boldsymbol{\lambda}_t$ and $\boldsymbol{\mu}_i$. We define $\mathbf{M}_J$ as the $J \times F$ matrix that collects the information on the factor loadings of the control units (that is, the $j$-th row of $\mathbf{M}_J$ is equal to $\boldsymbol{\mu}_j'$).  We  observe $(y_{0t},...,y_{Jt})$ for periods $t \in \{ -T_0+1,...,-1,0,1,...,T_1 \}$, where treatment is assigned to unit 0 after time 0. Therefore, we have $T_0$ pre-treatment periods and $T_1$ post-treatment periods. Let $\mathcal{T}_0$ ($\mathcal{T}_1$) be the set of time indices in the pre-treatment (post-treatment) periods.  The main goal of the SC method is to estimate the effect of the treatment for unit 0 for each  $t \in \mathcal{T}_1$,  $\{ \alpha_{01},...,\alpha_{0T_1} \}$.

In a sequence of papers,  \cite{Abadie2003}, \cite{Abadie2010}, and \cite{Abadie2015}  proposed the SC method to estimate weights for the control units to construct a counter-factual for $\{ y_{01}^N,...,y_{0T_1}^N \}$.   In a version of the method where all pre-treatment outcome lags are included as predictor variables, those weights are estimated by minimizing the pre-treatment sum of squared residuals subject to the constraints that weights must be non-negative and sum one. \cite{Abadie2010} show that, if there are weights that provide a perfect pre-treatment fit, then the bias of the SC estimator is bounded by a function that asymptotes to zero when $T_0$ increases, even when $J$ is fixed. By perfect pre-treatment fit we mean that there is a $(w_1,...,w_J) \in \Delta^{J-1}$  such that $y_{0t} = \sum_{j =1}^J {w_j} y_{jt} $ for all $t \in \mathcal{T}_0$,  where $\Delta^{J-1} \equiv \{ (w_1,...,w_J) \in \mathbb{R}^{J} | w_j \geq 0 \mbox{ and } \sum_{j=1}^J w_j = 1\}$. However, \cite{FP_SC} show that, if the pre-treatment fit is imperfect, then the SC weights will not generally recover the factor loadings of the treated unit, so the SC estimator will be biased if there is selection on unobservables. They show that this result is valid even when $T_0 \rightarrow \infty$, as long as $J$ is fixed.  The main reason is that, for any $\mathbf{w}^\ast \in \mathbb{R}^J$ such that $\boldsymbol{\mu}_0 = {\mathbf{M}_J}' \mathbf{w}^\ast$, it is possible to write
\begin{eqnarray} \label{pop_reg}
y_{0t}^N = \mathbf{y}_t ' \mathbf{w}^\ast+ \epsilon_{0t}  - \boldsymbol{\epsilon}_t ' \mathbf{w}^\ast,
\end{eqnarray}
where $\mathbf{y}_t = (y_{1t},...,y_{Jt})'$, and  $\boldsymbol{\epsilon}_t = (\epsilon_{1t},...,\epsilon_{Jt})'$. Therefore, the outcomes of the control units serve as a proxy for the factor loadings of the treated unit. However, the linear factor model structure inherently generates a correlation between $\mathbf{y}_t$ and the error in this model due to the idiosyncratic shocks $\boldsymbol{\epsilon}_t$. As a consequence, with $J$ fixed, the SC weights will generally not converge in probability to a $\mathbf{w}^\ast$ such that $\boldsymbol{\mu}_0 = {\mathbf{M}_J}' \mathbf{w}^\ast$, even when $T_0 \rightarrow \infty$.

\section{Asymptotic Behavior of the Original SC Estimator with Large $T_0$ and Large $J$} \label{Section_original_SC}

We analyze the properties of the SC estimator when both the number of control units ($J$)  and the number of pre-treatment periods ($T_0$) increase. This provides a better asymptotic approximation to settings in which the number of pre-treatment periods and the number of control observations are roughly of the same size, as is common in SC applications (e.g., \cite{Doudchenko}). 

 Considering a SC specification that includes all pre-treatment outcome lags as predictors, the SC weights are given by
\begin{eqnarray} \label{SC_eq}
\widehat{\mathbf{w}}_{\mbox{\tiny SC}} =  \underset{{\mathbf{w}}  \in \Delta^{J-1}}{\mbox{argmin}} \left \{ \frac{1}{T_0} \sum_{t \in \mathcal{T}_0} \left( y_{0t} - { \mathbf{w}} ' {\mathbf{y}}_t  \right)^2    \right\}. 
\end{eqnarray}

The main challenge in analysing the behavior of the SC estimator in a setting with large $J$ and large $T_0$ is that, when $T_0 \rightarrow \infty$, the dimension of $\widehat{{\mathbf{w}}}_{\mbox{\tiny SC}}$ increases. However, we are not inherently interested in $\widehat{{\mathbf{w}}}_{\mbox{\tiny SC}}$, but in the \emph{implied} estimator of the factor loadings of the treated unit that is generated from $\widehat{{\mathbf{w}}}_{\mbox{\tiny SC}}$, the $F \times 1$ vector $\widehat{{\boldsymbol{\mu}}}_{\mbox{\tiny SC}} =  {\mathbf{M}_J}' \widehat{{\mathbf{w}}}_{\mbox{\tiny SC}}$. 
We consider, therefore, the asymptotic behavior of $\widehat{{\boldsymbol{\mu}}}_{\mbox{\tiny SC}}$. Note that model (\ref{model}) is equivalent to a model $y^N_{it} = \widetilde{\boldsymbol{\lambda}}_t \widetilde{\boldsymbol{\mu}}_i + \epsilon_{it} $, where $\widetilde{\boldsymbol{\lambda}}_t = \boldsymbol{\lambda}_t \mathbf{A}^{-1}$ and $\widetilde{\boldsymbol{\mu}}_i = \mathbf{A} \boldsymbol{\mu}_i$ for any invertible $(F \times F)$ matrix $\mathbf{A}$. This, however, does not invalidate our analysis.   If we consider $(\widetilde{\boldsymbol{\lambda}}_t, \widetilde{\boldsymbol{\mu}}_i)$ instead of $(\boldsymbol{\lambda}_t,\boldsymbol{\mu}_i)$ for any  invertible $(F \times F)$ matrix $\mathbf{A}$, then the synthetic control weights would remain the same, and the implied estimator for the factor loadings of the treated unit, given a sequence of common factors $\widetilde{\boldsymbol{\lambda}}_t$, would be $\widehat{\widetilde{\boldsymbol{\mu}}}_{\mbox{\tiny SC}} = \mathbf{A} {\mathbf{M}_J}' \widehat{{\mathbf{w}}}_{\mbox{\tiny SC}} = \mathbf{A} \widehat{\boldsymbol{\mu}}_{\mbox{\tiny SC}}$. Therefore, we have that $\widehat{\boldsymbol{\mu}}_{\mbox{\tiny SC}}  \buildrel p \over \rightarrow \boldsymbol{\mu}_0$ if, and only if, $\widehat{\widetilde{\boldsymbol{\mu}}}_{\mbox{\tiny SC}}  \buildrel p \over \rightarrow \widetilde{\boldsymbol{\mu}}_0 = \mathbf{A}\boldsymbol{\mu}_0$. Importantly, note that the estimator $\widehat{{\boldsymbol{\mu}}}_{\mbox{\tiny SC}}$ is not observed, because $\mathbf{M}_J$ is not observed. Rather, this is a construct to analyze whether the SC weights lead to a SC unit that is affected by the common factors in the same way as the treated unit. We do not aim to directly estimate $\boldsymbol{\mu}_0$, so this lack of identification for factor models does not pose any problem for our analysis.   

For a given ${\mathbf{w}}$, let $\boldsymbol{\mu} \equiv   {\mathbf{M}_J}' {{\mathbf{w}}}$. From the objective function in equation $(\ref{SC_eq})$, 
\begin{eqnarray} \label{eq_objective}
  \frac{1}{T_0} \sum_{t \in \mathcal{T}_0} \left( y_{0t} -  \mathbf{w} ' {\mathbf{y}}_t  \right)^2  =      \frac{1}{T_0} \sum_{t \in \mathcal{T}_0} \left( \boldsymbol{\lambda}_t (\boldsymbol{\mu}_0 - \boldsymbol{\mu}) + \epsilon_{0t} -  { \mathbf{w}} ' {\boldsymbol{\epsilon}}_t  \right)^2.
 \end{eqnarray}

Now define
\begin{eqnarray} \label{eq_H}  
\mathcal{H}_{J} (\boldsymbol{\mu}) =  {\underset{{\mathbf{w}}  \in \Delta^{J-1}: ~ { {\mathbf{M}_J}'{\mathbf{w}}  = \boldsymbol{\mu}}}{\mbox{min}} \left \{   \frac{1}{T_0} \sum_{t \in \mathcal{T}_0} ( \bar {\lambda}_t(\boldsymbol{\mu}) -   { \mathbf{w}} ' {\boldsymbol{\epsilon}}_t )  ^2 \right \}},
 \end{eqnarray}
where  $\bar {\lambda}_t(\boldsymbol{\mu}) \equiv \boldsymbol{\lambda}_t (\boldsymbol{\mu}_0 - \boldsymbol{\mu}) + \epsilon_{0t}$. Then $\widehat{{\boldsymbol{\mu}}}_{\mbox{\tiny SC}} =  {\mathbf{M}_J}' {\widehat{\mathbf{w}}}_{\mbox{\tiny SC}} = \underset{ \boldsymbol{\mu} \in \mathcal{M}_J  }{\mbox{argmin}} \mathcal{H}_{J}(\boldsymbol{\mu}) $, where  $ \mathcal{M}_J \equiv \{ \boldsymbol{\mu} \in \mathbb{R}^F | \boldsymbol{\mu} =   {\mathbf{M}_J}' {\mathbf{w}} \mbox{ for some } {\mathbf{w}} \in \Delta^{J-1}  \}$ is the set of factor loadings that can be attained with weights $\mathbf{w} \in \Delta^{J-1}$ when there are $J$ control units.

Using this characterization of $\widehat{{\boldsymbol{\mu}}}_{\mbox{\tiny SC}}$, we provide conditions under which $\widehat{{\boldsymbol{\mu}}}_{\mbox{\tiny SC}} \buildrel p \over \rightarrow \boldsymbol{\mu}_0$ when $T_0$ and $J \rightarrow \infty$. We consider the following assumptions on the idiosyncratic shocks. 

\begin{ass}{(idiosyncratic shocks)} \label{Assumption_e}
\normalfont
(a) $\mathbb{E}[\epsilon_{it}]=0$  for all $i$ and $t$; (b) $\{ \epsilon_{it} \}_{t \in \mathcal{T}_0 \cup \mathcal{T}_1}$ are independent across $i$; (c) $\{\epsilon_{0t},...,\epsilon_{Jt} \}_{t \in \mathcal{T}_0}$ is $\alpha$-mixing; (d) $\epsilon_{it}$ have uniformly bounded fourth moments across $i$ and $t$, and $\frac{1}{T_0}\sum_{t \in \mathcal{T}_0} \mathbb{E}[\epsilon_{0t}^2] \rightarrow \sigma^2_0$; (e) $\exists \underline{\gamma}>0$ such that  $ \mathbb{E}[\epsilon_{it}^2]  \geq \underline{\gamma}$ across $i$ and $t$.
\end{ass}

Since we are considering treatment assignment, factor loadings, and common factors as  fixed,  Assumption \ref{Assumption_e}(a) implies that the idiosyncratic shocks are uncorrelated with the treatment assignment and with the factor structure.\footnote{We can think of Assumption \ref{Assumption_e}(a) as the expected value of the idiosyncratic shocks being equal to zero conditional on treatment assignment, factor loadings, and common factors. Therefore, if we consider an underlying distribution for the treatment assignment, factor loadings, and common factors, then Assumption   \ref{Assumption_e}(a)  implies that the idiosyncratic shocks are mean independent conditional on these variables.  } Note, however, that this would allow for, for example, dependence between $var(\epsilon_{it})$ and treatment assignment or the factor structure.  Importantly, by conditioning on the treatment assignment, factor loadings, and common factors, we do not impose any restriction on the relationship between treatment assignment and the factor structure.  Assumption \ref{Assumption_e}(b)  implies that the idiosyncratic shocks are uncorrelated across units, so that all spatial correlation is captured by the factor structure. While we allow for serial correlation in $\epsilon_{it}$, Assumption \ref{Assumption_e}(c) restricts such dependence by assuming a mixing condition. Finally, while we do not require stationarity, Assumptions \ref{Assumption_e}(d) and \ref{Assumption_e}(e) impose some restrictions on the moments of $\epsilon_{it}$.

We also consider the following assumptions on the sequence of factor loadings and common factors. Let $\norm{.}_2$ be the Frobenius norm. 

\begin{ass}{(factor loadings)} \label{Assumption_mu}
\normalfont
(a) As $T_0 \rightarrow \infty$, there is a sequence $\mathbf{w}^\ast_J \in \Delta^{J-1}$ such that $\norm{{\mathbf{M}_J}' \mathbf{w}^\ast_J - \boldsymbol{\mu}_0}_2 \rightarrow 0$, and $\norm{\mathbf{w}^\ast_J}_2 \rightarrow 0$, and (b) the sequence $\boldsymbol{\mu}_i$ is uniformly bounded. 

\end{ass}

Assumption \ref{Assumption_mu}(a) implies that there is a sequence of weights ($\mathbf{w}^\ast_J$) diluted among an increasing number of control units, and that are such that the implied factor loadings associates with those weights ($\boldsymbol{\mu}^\ast_J \equiv {\mathbf{M}_J}' \mathbf{w}^\ast_J$)  reconstruct the factor loadings of the treated unit ($\boldsymbol{\mu}_0$) in the limit. Importantly, Assumption \ref{Assumption_mu} implies that $J \rightarrow \infty$ when $T_0 \rightarrow \infty$. Otherwise, it would not be possible to reconstruct $\boldsymbol{\mu}_0$ with weights such that $\norm{\mathbf{w}^\ast_J}_2 \rightarrow 0$.

Recall that this analysis is conditional on a fixed sequence of factor loadings. If we assume, for example, that the underlying  distribution of factor loadings has finite support $\{\mathbf{m}_1,...,\mathbf{m}_{\bar q} \}$, with $Pr(\boldsymbol{\mu}_i = \mathbf{m}_q) = p_q > 0$ independent across $i$, then the conditions imposed in Assumption \ref{Assumption_mu}  for the factor loadings would be satisfied with probability one  (details in Appendix \ref{Appendix_mu}).  This assumption would also be satisfied with probability one even if we consider a case in which the distributions of $\boldsymbol{\mu}_0$ and $\boldsymbol{\mu}_i$ for $i>0$  are different, as long as every point in the support of the distribution of $\boldsymbol{\mu}_0$ is in the convex hull of $\{\mathbf{m}_1,...,\mathbf{m}_{\bar q} \}$. Assumption \ref{Assumption_mu}(b) guarantees that the parameter space $\mathcal{M} =\mbox{cl} \left(\cup_{J \in \mathbb{N}} \mathcal{M}_J \right)$, which is the closure of $\cup_{J \in \mathbb{N}} \mathcal{M}_J$, is compact. 

\begin{ass}{(common factors)} \label{Assumption_lambda}
\normalfont
$\frac{1}{T_0} \sum_{t \in \mathcal{T}_0} \boldsymbol{\lambda}_t ' \boldsymbol{\lambda}_t   \rightarrow \boldsymbol{\Omega}$ positive definite.

\end{ass}

Assumption \ref{Assumption_lambda} implies that common factors generate enough independent variation so that we can identify the effects of each factor on the pre-treatment outcomes.  \cite{Abadie2010} consider a similar assumption. If we consider an underlying distribution for $\boldsymbol{\lambda}_t$ such that, for example, $\boldsymbol{\lambda}_t$ is $\alpha$-mixing with uniformly bounded fourth moments, and that ${T_0}^{-1} \sum_{t \in \mathcal{T}_0} \mathbb{E}[ \boldsymbol{\lambda}_t ' \boldsymbol{\lambda}_t ] \rightarrow \boldsymbol{\Omega}$, then ${T_0}^{-1} \sum_{t \in \mathcal{T}_0} \boldsymbol{\lambda}_t ' \boldsymbol{\lambda}_t  \buildrel a.s. \over \rightarrow \boldsymbol{\Omega}$. In this case,  Assumption \ref{Assumption_lambda} would be satisfied with probability one.

We also assume some technical conditions that are important to take into account that the number of control units goes to infinity with the number of pre-treatment periods.  
 
\begin{ass}{(other assumptions)} \label{Assumption_technical}
\normalfont
(a) $\underset{1 \leq j \leq J}{\mbox{max}} \left\{  \left| \frac{1}{T_0} \sum_{t \in \mathcal{T}_0} \epsilon_{0t} \epsilon_{jt} \right| \right\}=o_p(1)$ and, for all $f=1,...,F$, $\underset{0 \leq j \leq J}{\mbox{max}} \left\{ \left| \frac{1}{T_0} \sum_{t \in \mathcal{T}_0} \lambda_{ft} \epsilon_{jt} \right| \right\}=o_p(1)$; (b) $\exists$ $c>0$ such that  $\underset{{1 \leq j \leq J}}{\mbox{min}} \left\{ \sum_{t \in \mathcal{T}_0} \left| \epsilon_{jt}^2 \right| \right\}\geq c T_0$ with probability $1-o(1)$, and $\underset{1 \leq i, j \leq J, i \neq j}{\mbox{max}} \left\{ \left| \frac{1}{T_0} \sum_{t \in \mathcal{T}_0}  \epsilon_{it} \epsilon_{jt} \right| \right\}=o_p(1)$.

\end{ass}

These high-level conditions essentially determine the rate in which $J$ can diverge when $T_0 \rightarrow \infty$. Whether these conditions are satisfied depend   crucially on the rates in which $J$ and $T_0$ diverge,  on the dependence of $\epsilon_{it}$, and on the number of uniformly bounded moments of  $\epsilon_{it}$. If we allow $J$  to diverge at a faster rate than $T_0$, or we allow time-series dependence on $\epsilon_{it}$, then we need a larger number of uniformly bounded moments of  $\epsilon_{it}$. See Appendix \ref{Appendix_technical} for some simple examples in which Assumption \ref{Assumption_technical} is satisfied even when $J$ diverges at a faster rate than $T_0$. 

Given these conditions, we derive the following results.
 
\begin{prop} \label{Convergence_SC}

Suppose we observe $(y_{0t},...,y_{Jt})$ for periods $t \in \{ -T_0+1,...,-1,0,1,...,T_1 \}$, where $J$ is a function of $T_0$. Potential outcomes are defined in equation (\ref{model}). Let $\widehat{{\boldsymbol{\mu}}}_{\mbox{\tiny SC}}$ be defined as ${\mathbf{M}_J}' \widehat{{\mathbf{w}}}_{\mbox{\tiny SC}}$, where $\widehat{{\mathbf{w}}}_{\mbox{\tiny SC}}$ is defined in equation $(\ref{SC_eq})$. Suppose Assumptions \ref{Assumption_e} to \ref{Assumption_lambda}, and Assumption \ref{Assumption_technical}(a) hold. Then, as $T_0 \rightarrow \infty$, (i) $\widehat{{\boldsymbol{\mu}}}_{\mbox{\tiny SC}} \buildrel p \over \rightarrow \boldsymbol{\mu}_0$, and (ii) $ \frac{1}{T_0} \sum_{t \in \mathcal{T}_0} \left( y_{0t} -\widehat{\mathbf{w}}_{\mbox{\tiny SC}} ' {\mathbf{y}}_t  \right)^2  \buildrel p \over \rightarrow \sigma_0^2$. Moreover, if we add Assumption \ref{Assumption_technical}(b), then (iii) $\norm{\widehat{{\mathbf{w}}}_{\mbox{\tiny SC}}}_2 \buildrel p \over \rightarrow 0$.

\end{prop}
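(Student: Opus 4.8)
The plan is to exploit the profiled representation $\widehat{\boldsymbol{\mu}}_{\mbox{\tiny SC}}=\arg\min_{\boldsymbol{\mu}\in\mathcal{M}_J}\mathcal{H}_{J}(\boldsymbol{\mu})$ already derived in the text, which collapses the expanding-dimensional optimization over $\mathbf{w}\in\Delta^{J-1}$ into a minimization over the fixed, $F$-dimensional \emph{compact} set $\mathcal{M}$; this is what makes a standard argmin-consistency argument feasible despite $J\rightarrow\infty$. First I would expand the square inside $\mathcal{H}_J$, writing, for any feasible $\mathbf{w}$ (i.e. $\mathbf{M}_J'\mathbf{w}=\boldsymbol{\mu}$), $\frac{1}{T_0}\sum_t(\bar\lambda_t(\boldsymbol{\mu})-\mathbf{w}'\boldsymbol{\epsilon}_t)^2=A(\boldsymbol{\mu})-2B(\boldsymbol{\mu},\mathbf{w})+C(\mathbf{w})$, where $A(\boldsymbol{\mu})=\frac{1}{T_0}\sum_t\bar\lambda_t(\boldsymbol{\mu})^2$, $B(\boldsymbol{\mu},\mathbf{w})=\frac{1}{T_0}\sum_t\bar\lambda_t(\boldsymbol{\mu})\,\mathbf{w}'\boldsymbol{\epsilon}_t$, and $C(\mathbf{w})=\frac{1}{T_0}\sum_t(\mathbf{w}'\boldsymbol{\epsilon}_t)^2\ge 0$. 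Since $A$ does not involve $\mathbf{w}$, $\mathcal{H}_J(\boldsymbol{\mu})=A(\boldsymbol{\mu})+\min_{\mathbf{w}:\mathbf{M}_J'\mathbf{w}=\boldsymbol{\mu}}[C(\mathbf{w})-2B(\boldsymbol{\mu},\mathbf{w})]$.

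The two workhorse facts I would establish are a uniform convergence of the ``signal'' term, $\sup_{\boldsymbol{\mu}\in\mathcal{M}}|A(\boldsymbol{\mu})-\{(\boldsymbol{\mu}_0-\boldsymbol{\mu})'\boldsymbol{\Omega}(\boldsymbol{\mu}_0-\boldsymbol{\mu})+\sigma_0^2\}|=o_p(1)$, and a uniform negligibility of the cross term, $\sup_{\boldsymbol{\mu}\in\mathcal{M},\,\mathbf{w}\in\Delta^{J-1}}|B(\boldsymbol{\mu},\mathbf{w})|=o_p(1)$. For the first, $A(\boldsymbol{\mu})$ is a quadratic in $\boldsymbol{\mu}$ whose coefficients converge: $\frac{1}{T_0}\sum\boldsymbol{\lambda}_t'\boldsymbol{\lambda}_t\rightarrow\boldsymbol{\Omega}$ (Assumption \ref{Assumption_lambda}), $\frac{1}{T_0}\sum\lambda_{ft}\epsilon_{0t}=o_p(1)$ (Assumption \ref{Assumption_technical}(a) at $j=0$), and $\frac{1}{T_0}\sum\epsilon_{0t}^2\buildrel p \over \rightarrow\sigma_0^2$ (an $\alpha$-mixing LLN under Assumption \ref{Assumption_e}); uniformity over the compact $\mathcal{M}$ then follows since a quadratic with converging coefficients converges uniformly on compacta. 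For the second, using $\bar\lambda_t(\boldsymbol{\mu})=\boldsymbol{\lambda}_t(\boldsymbol{\mu}_0-\boldsymbol{\mu})+\epsilon_{0t}$ and $\mathbf{w}'\boldsymbol{\epsilon}_t=\sum_j w_j\epsilon_{jt}$, the simplex constraint gives $|B|\le\sum_f|\mu_{0f}-\mu_f|\max_j|\frac{1}{T_0}\sum_t\lambda_{ft}\epsilon_{jt}|+\max_j|\frac{1}{T_0}\sum_t\epsilon_{0t}\epsilon_{jt}|$, both $o_p(1)$ by Assumption \ref{Assumption_technical}(a) together with boundedness of $\boldsymbol{\mu}$ on $\mathcal{M}$. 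Since the constrained minimum of $C-2B$ is at least $-2\sup|B|$, I obtain the lower bound $\mathcal{H}_J(\boldsymbol{\mu})\ge A(\boldsymbol{\mu})-o_p(1)$ uniformly.

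For the matching upper bound I would evaluate $\mathcal{H}_J$ at the diluted weights. Since $\mathbf{w}^\ast_J$ is feasible for $\boldsymbol{\mu}^\ast_J=\mathbf{M}_J'\mathbf{w}^\ast_J$, $\mathcal{H}_J(\boldsymbol{\mu}^\ast_J)\le A(\boldsymbol{\mu}^\ast_J)-2B(\boldsymbol{\mu}^\ast_J,\mathbf{w}^\ast_J)+C(\mathbf{w}^\ast_J)$; here $\boldsymbol{\mu}^\ast_J\rightarrow\boldsymbol{\mu}_0$ (Assumption \ref{Assumption_mu}(a)) forces $A(\boldsymbol{\mu}^\ast_J)\buildrel p \over \rightarrow\sigma_0^2$ and $B=o_p(1)$. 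The conceptual crux is that $C(\mathbf{w}^\ast_J)=o_p(1)$ \emph{without} invoking Assumption \ref{Assumption_technical}(b): because $\mathbf{w}^\ast_J$ is non-random (the loadings are fixed) and the shocks are mean-zero and independent across units (Assumptions \ref{Assumption_e}(a),(b)), $\mathbb{E}[C(\mathbf{w}^\ast_J)]=\sum_j(w^\ast_{J,j})^2\frac{1}{T_0}\sum_t\mathbb{E}[\epsilon_{jt}^2]\le\bar C\,\norm{\mathbf{w}^\ast_J}_2^2\rightarrow 0$ by the uniform second-moment bound and $\norm{\mathbf{w}^\ast_J}_2\rightarrow 0$, so Markov's inequality gives $C(\mathbf{w}^\ast_J)\buildrel p \over \rightarrow 0$ — precisely the paper's intuition that dilution kills the variance of the weighted shock. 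The argmin inequality $\mathcal{H}_J(\widehat{\boldsymbol{\mu}}_{\mbox{\tiny SC}})\le\mathcal{H}_J(\boldsymbol{\mu}^\ast_J)\le\sigma_0^2+o_p(1)$, combined with the lower bound $\mathcal{H}_J(\widehat{\boldsymbol{\mu}}_{\mbox{\tiny SC}})\ge A(\widehat{\boldsymbol{\mu}}_{\mbox{\tiny SC}})-o_p(1)\ge(\boldsymbol{\mu}_0-\widehat{\boldsymbol{\mu}}_{\mbox{\tiny SC}})'\boldsymbol{\Omega}(\boldsymbol{\mu}_0-\widehat{\boldsymbol{\mu}}_{\mbox{\tiny SC}})+\sigma_0^2-o_p(1)$, yields $(\boldsymbol{\mu}_0-\widehat{\boldsymbol{\mu}}_{\mbox{\tiny SC}})'\boldsymbol{\Omega}(\boldsymbol{\mu}_0-\widehat{\boldsymbol{\mu}}_{\mbox{\tiny SC}})\le o_p(1)$; positive-definiteness of $\boldsymbol{\Omega}$ (its smallest eigenvalue bounds the quadratic form below) then gives $\norm{\widehat{\boldsymbol{\mu}}_{\mbox{\tiny SC}}-\boldsymbol{\mu}_0}_2^2=o_p(1)$, proving (i). Part (ii) is immediate: both sides of the sandwich converge to $\sigma_0^2$ once the quadratic form vanishes, and the optimized SC objective equals $\mathcal{H}_J(\widehat{\boldsymbol{\mu}}_{\mbox{\tiny SC}})\buildrel p \over \rightarrow\sigma_0^2$.

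Finally, for (iii) I would read off from (ii) that $C(\widehat{\mathbf{w}}_{\mbox{\tiny SC}})=\mathcal{H}_J(\widehat{\boldsymbol{\mu}}_{\mbox{\tiny SC}})-A(\widehat{\boldsymbol{\mu}}_{\mbox{\tiny SC}})+2B(\widehat{\boldsymbol{\mu}}_{\mbox{\tiny SC}},\widehat{\mathbf{w}}_{\mbox{\tiny SC}})\buildrel p \over \rightarrow 0$, using $A(\widehat{\boldsymbol{\mu}}_{\mbox{\tiny SC}})\buildrel p \over \rightarrow\sigma_0^2$ (from (i)) and $B=o_p(1)$. Assumption \ref{Assumption_technical}(b) then converts this into a norm statement: decomposing $C(\widehat{\mathbf{w}}_{\mbox{\tiny SC}})=\sum_j\widehat w_j^2\,\frac{1}{T_0}\sum_t\epsilon_{jt}^2+\sum_{i\neq j}\widehat w_i\widehat w_j\,\frac{1}{T_0}\sum_t\epsilon_{it}\epsilon_{jt}$, the off-diagonal piece is at most $\max_{i\neq j}\left|\frac{1}{T_0}\sum_t\epsilon_{it}\epsilon_{jt}\right|\cdot\sum_{i\neq j}\widehat w_i\widehat w_j\le\max_{i\neq j}\left|\frac{1}{T_0}\sum_t\epsilon_{it}\epsilon_{jt}\right|=o_p(1)$ (the simplex gives $\sum_{i\neq j}\widehat w_i\widehat w_j\le 1$), while the diagonal piece is at least $\left(\min_j\frac{1}{T_0}\sum_t\epsilon_{jt}^2\right)\norm{\widehat{\mathbf{w}}_{\mbox{\tiny SC}}}_2^2\ge c\,\norm{\widehat{\mathbf{w}}_{\mbox{\tiny SC}}}_2^2$ with probability $1-o(1)$. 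Hence $c\,\norm{\widehat{\mathbf{w}}_{\mbox{\tiny SC}}}_2^2\le C(\widehat{\mathbf{w}}_{\mbox{\tiny SC}})+o_p(1)=o_p(1)$, giving (iii). I expect the genuinely delicate points to be the uniform control of $A$ and $B$ as the ambient dimension of $\mathbf{w}$ diverges — tamed by combining the simplex normalization with the maximal bounds of Assumption \ref{Assumption_technical}(a) — and the non-random diluted-weights bound on $C(\mathbf{w}^\ast_J)$, which is exactly what lets parts (i)--(ii) dispense with Assumption \ref{Assumption_technical}(b).
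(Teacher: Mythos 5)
Your proposal is correct, and it reaches all three conclusions by a genuinely more direct route than the paper. The paper builds a Lipschitz-penalized extension $\widetilde{\mathcal{H}}_{J}$ of $\mathcal{H}_J$ to the fixed compact set $\mathcal{M}$ (with a data-dependent penalty constant $K$ calibrated to a Lipschitz bound), lower-bounds it by relaxing the weights to the $\ell_1$-ball $\mathcal{W}$, proves pointwise convergence of that relaxed criterion --- the delicate step being a Chernozhukov-style no-overfitting argument showing $\min_{\mathbf{b}\in\mathcal{W}}\frac{1}{T_0}\sum_t(\epsilon_{0t}-\mathbf{b}'\boldsymbol{\epsilon}_t)^2\buildrel p \over \rightarrow\sigma_0^2$ via $\frac{1}{T_0}\norm{\mathbf{E}\widehat{\mathbf{c}}}_2^2\leq\frac{2}{T_0}\norm{\mathbf{E}'\boldsymbol{\epsilon}_0}_\infty$ --- then upgrades to uniform convergence (Newey's corollary) and finishes with a Newey--McFadden argmin-consistency argument. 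You dispense with all of this scaffolding by exploiting $C(\mathbf{w})\geq 0$: flooring the profiled criterion at $A(\boldsymbol{\mu})-2\sup|B|$, with $\sup|B|=o_p(1)$ uniformly by H\"older ($\norm{\mathbf{w}}_1=1$) plus the maximal bounds of Assumption \ref{Assumption_technical}(a), and then noting that the limit of $A$ is the explicit positive-definite quadratic $\sigma^2_{\bar\lambda}(\boldsymbol{\mu})$, so the argmin inequality converts an $o_p(1)$ gap in criterion values into $\lambda_{\min}(\boldsymbol{\Omega})\norm{\widehat{\boldsymbol{\mu}}_{\mbox{\tiny SC}}-\boldsymbol{\mu}_0}_2^2=o_p(1)$ with no extension, no uniform-convergence corollary, and no separate identification step. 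The underlying mechanism is the same --- your uniform bound on the cross term $B$ is exactly what the paper's inequality mirroring Chernozhukov et al.\ delivers, namely that weights with unit $\ell_1$ norm cannot fit $\epsilon_{0t}$ when the maximal correlations vanish --- but your organization makes the no-overfitting conclusion a by-product of nonnegativity rather than a lemma. Two further genuine simplifications are worth noting: for the upper bound you control $C(\mathbf{w}^\ast_J)$ by Markov's inequality on a nonnegative variable with $\mathbb{E}[C(\mathbf{w}^\ast_J)]\leq\bar\gamma\norm{\mathbf{w}^\ast_J}_2^2\rightarrow 0$ (cross-sectional independence kills the off-diagonal first-moment terms, and serial dependence is irrelevant for a first-moment bound), whereas the paper self-normalizes and invokes Andrews (1988) with fourth-moment computations; and for (iii) you obtain $C(\widehat{\mathbf{w}}_{\mbox{\tiny SC}})=o_p(1)$ in one line from (i)--(ii) and $B=o_p(1)$, then apply Assumption \ref{Assumption_technical}(b) directly through the diagonal/off-diagonal split, where the paper re-derives the same fact from a long basic-inequality expansion and closes with a contradiction argument. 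What the paper's heavier machinery buys is robustness of the template: the extension-plus-uniform-convergence route would survive if the limit criterion were merely uniquely minimized on $\mathcal{M}$ rather than strongly convex, while your shortcut leans on the positive-definiteness in Assumption \ref{Assumption_lambda} to get a quantitative identification margin --- which is available here, so your path is the cleaner one for this proposition.
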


The first result in Proposition \ref{Convergence_SC} shows that, asymptotically, the SC unit will be affected by the common shocks $\boldsymbol{\lambda}_t$ in the same way as the treated unit. The main idea of the proof is the following. We consider an extension of the function $\mathcal{H}_{J} (\boldsymbol{\mu})$ to the domain $\mathcal{M} =\mbox{cl} \left( \cup_{J \in  \mathbb{N}}\mathcal{M}_J \right)$ such that $ \underset{ \boldsymbol{\mu} \in \mathcal{M}  }{\mbox{argmin}} \widetilde{\mathcal{H}}_{J}(\boldsymbol{\mu}) = \underset{ \boldsymbol{\mu} \in \mathcal{M}_J  }{\mbox{argmin}} \mathcal{H}_{J}(\boldsymbol{\mu}) $. Then we show that  $\widetilde{\mathcal{H}}_{J} (\boldsymbol{\mu}_0) \buildrel p \over \rightarrow  \sigma^2_0$, and that  $\widetilde{\mathcal{H}}_{J} (\boldsymbol{\mu})$ is bounded from below by a function $\widetilde{\mathcal{H}}^{LB}_{T_0} (\boldsymbol{\mu})$ that converges uniformly in $\boldsymbol{\mu} \in \mathcal{M}$ to $(\boldsymbol{\mu}_0 -  \boldsymbol{\mu})' \boldsymbol{\Omega} (\boldsymbol{\mu}_0 -  \boldsymbol{\mu})+ \sigma^2_0$. Under Assumption \ref{Assumption_lambda}, $(\boldsymbol{\mu}_0 -  \boldsymbol{\mu})' \boldsymbol{\Omega} (\boldsymbol{\mu}_0 -  \boldsymbol{\mu})+ \sigma^2_0$ is uniquely minimized at $\boldsymbol{\mu}_0$, which implies that $\widehat{{\boldsymbol{\mu}}}_{\mbox{\tiny SC}}  \buildrel p \over \rightarrow \boldsymbol{\mu}_0$. The details of the proof are presented in  Appendix \ref{Proof_Convergence_SC}.

 \cite{FP_SC} show that, when the number of control units is fixed,  the SC weights converge to weights that do not, in general, recover $\boldsymbol{\mu}_0$. This happens because, in a setting with a fixed number of control units, the SC weights converge to weights that simultaneously attempt to minimize both the second moments of the remaining common shocks, and the variance of a weighted average of the idiosyncratic shocks of the control units. Intuitively, this first result from  Proposition \ref{Convergence_SC} comes from the fact that, when  both the number of pre-treatment periods and the number of controls increase, the importance of this variance of a weighted average of the idiosyncratic shocks of the control units vanishes. As a consequence, the asymptotic bias of  $\widehat{{\boldsymbol{\mu}}}_{\mbox{\tiny SC}}$ disappears when both the number of pre-treatment periods and the number of controls increase.  

A crucial condition for this result is that, as the number of control units increases, it is possible to recover $\boldsymbol{\mu}_0$ with weights that are diluted among an increasing number of control units (Assumption \ref{Assumption_mu}). If we consider, for example, a setting such that there is only a fixed number of control units that can be used to recover $\boldsymbol{\mu}_0$, and the additional control units are uncorrelated with $y_{0t}$, then the result from \cite{FP_SC} would still apply, and  $\widehat{{\boldsymbol{\mu}}}_{\mbox{\tiny SC}}$ would not converge to $\boldsymbol{\mu}_0$. Such setting would  be inconsistent with Assumption \ref{Assumption_mu}.

Proposition \ref{Convergence_SC} also shows that the SC weights will get diluted among an increasing number of control units, so that $\norm{\widehat{{\mathbf{w}}}_{\mbox{\tiny SC}}}_2 \buildrel p \over \rightarrow 0$. An immediate consequence is that, if we assume that idiosyncratic shocks in the post-treatment periods are independent from the idiosyncratic shocks in the pre-treatment periods, then, for any $t \in \mathcal{T}_1$, $\hat  \alpha_{0t}^{\mbox{\tiny SC}} \equiv y_{0t} - \mathbf{y}_t '\widehat{{\mathbf{w}}}_{\mbox{\tiny SC}}  \buildrel p \over \rightarrow \alpha_{0t} + \epsilon_{0t}$ when $T_0 \rightarrow \infty$.

\begin{cor} \label{corollary}

Suppose all assumptions for Proposition \ref{Convergence_SC} are satisfied, and that, for all $t \in \mathcal{T}_1$, $\epsilon_{it}$ is independent from $\{\epsilon_{i\tau}\}_{\tau \in \mathcal{T}_0}$. Then, for any $t \in \mathcal{T}_1$, $\hat  \alpha_{0t}^{\mbox{\tiny SC}}  \buildrel p \over \rightarrow \alpha_{0t} + \epsilon_{0t}$ when $T_0 \rightarrow \infty$.
 
\end{cor}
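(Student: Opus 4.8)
The plan is to write $\hat\alpha_{0t}^{\mbox{\tiny SC}} - (\alpha_{0t} + \epsilon_{0t})$ as a sum of two terms and show each is $o_p(1)$. First I would substitute the factor model of equation (\ref{model}) into the definition $\hat\alpha_{0t}^{\mbox{\tiny SC}} \equiv y_{0t} - \mathbf{y}_t'\widehat{\mathbf{w}}_{\mbox{\tiny SC}}$. For $t \in \mathcal{T}_1$ unit $0$ is treated while the controls are not, so $y_{0t} = \alpha_{0t} + \boldsymbol{\lambda}_t \boldsymbol{\mu}_0 + \epsilon_{0t}$ and $y_{jt} = \boldsymbol{\lambda}_t \boldsymbol{\mu}_j + \epsilon_{jt}$ for $j \geq 1$. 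Hence $\mathbf{y}_t'\widehat{\mathbf{w}}_{\mbox{\tiny SC}} = \boldsymbol{\lambda}_t \widehat{\boldsymbol{\mu}}_{\mbox{\tiny SC}} + \boldsymbol{\epsilon}_t'\widehat{\mathbf{w}}_{\mbox{\tiny SC}}$, and
\[
\hat\alpha_{0t}^{\mbox{\tiny SC}} - (\alpha_{0t} + \epsilon_{0t}) = \boldsymbol{\lambda}_t(\boldsymbol{\mu}_0 - \widehat{\boldsymbol{\mu}}_{\mbox{\tiny SC}}) - \boldsymbol{\epsilon}_t'\widehat{\mathbf{w}}_{\mbox{\tiny SC}},
\]
so it suffices to show both terms on the right converge to zero in probability. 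The first term is immediate: $\boldsymbol{\lambda}_t$ is a fixed $1\times F$ vector (the analysis conditions on the sequence $\boldsymbol{\lambda}_t$), and Proposition \ref{Convergence_SC}(i) gives $\widehat{\boldsymbol{\mu}}_{\mbox{\tiny SC}} \buildrel p \over \rightarrow \boldsymbol{\mu}_0$, so $\boldsymbol{\lambda}_t(\boldsymbol{\mu}_0 - \widehat{\boldsymbol{\mu}}_{\mbox{\tiny SC}}) \buildrel p \over \rightarrow 0$ by the continuous mapping theorem.

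The second term is the crux and is where the added independence hypothesis enters. Write $\boldsymbol{\epsilon}_t'\widehat{\mathbf{w}}_{\mbox{\tiny SC}} = \sum_{j=1}^J \hat w_j \epsilon_{jt}$ and condition on the $\sigma$-algebra $\mathcal{F}$ generated by the pre-treatment shocks $\{\epsilon_{i\tau}\}_{\tau \in \mathcal{T}_0}$. Since $\widehat{\mathbf{w}}_{\mbox{\tiny SC}}$ is a function of the pre-treatment outcomes, it is $\mathcal{F}$-measurable; and the corollary's hypothesis, combined with the cross-unit independence of Assumption \ref{Assumption_e}(b), makes the whole vector $\boldsymbol{\epsilon}_t$ ($t\in\mathcal{T}_1$) independent of $\mathcal{F}$. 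Using Assumption \ref{Assumption_e}(a), the conditional mean is $\mathbb{E}[\boldsymbol{\epsilon}_t'\widehat{\mathbf{w}}_{\mbox{\tiny SC}} \mid \mathcal{F}] = \sum_j \hat w_j \mathbb{E}[\epsilon_{jt}] = 0$, and using independence across units together with the uniform second-moment bound implied by Assumption \ref{Assumption_e}(d) (say $\mathbb{E}[\epsilon_{jt}^2] \leq C$),
\[
\mathbb{E}\!\left[(\boldsymbol{\epsilon}_t'\widehat{\mathbf{w}}_{\mbox{\tiny SC}})^2 \mid \mathcal{F}\right] = \sum_{j=1}^J \hat w_j^2\, \mathbb{E}[\epsilon_{jt}^2] \leq C \norm{\widehat{\mathbf{w}}_{\mbox{\tiny SC}}}_2^2 .
\]

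To finish I would invoke Proposition \ref{Convergence_SC}(iii), which gives $\norm{\widehat{\mathbf{w}}_{\mbox{\tiny SC}}}_2^2 \buildrel p \over \rightarrow 0$, and note that $\widehat{\mathbf{w}}_{\mbox{\tiny SC}} \in \Delta^{J-1}$ forces $\norm{\widehat{\mathbf{w}}_{\mbox{\tiny SC}}}_2^2 \leq 1$. A conditional Chebyshev inequality yields $\Pr(|\boldsymbol{\epsilon}_t'\widehat{\mathbf{w}}_{\mbox{\tiny SC}}| > \delta \mid \mathcal{F}) \leq C\norm{\widehat{\mathbf{w}}_{\mbox{\tiny SC}}}_2^2/\delta^2$ for any $\delta>0$; taking expectations gives $\Pr(|\boldsymbol{\epsilon}_t'\widehat{\mathbf{w}}_{\mbox{\tiny SC}}| > \delta) \leq (C/\delta^2)\,\mathbb{E}[\norm{\widehat{\mathbf{w}}_{\mbox{\tiny SC}}}_2^2] \to 0$, where the convergence of the expectation follows from bounded convergence (the integrand tends to $0$ in probability and is dominated by $1$). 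Hence $\boldsymbol{\epsilon}_t'\widehat{\mathbf{w}}_{\mbox{\tiny SC}} \buildrel p \over \rightarrow 0$, and adding the two terms proves the claim. I expect the main obstacle to be exactly this second term: the point of the added independence assumption is to decouple the data-dependent weights $\widehat{\mathbf{w}}_{\mbox{\tiny SC}}$ from the post-treatment shocks $\boldsymbol{\epsilon}_t$, and the conditioning-on-$\mathcal{F}$ device is what legitimizes the conditional-mean-zero and conditional-variance computations. The delicate step is passing from the conditional bound to an unconditional probability statement, where boundedness of $\norm{\widehat{\mathbf{w}}_{\mbox{\tiny SC}}}_2^2$ on the simplex supplies the domination needed for bounded convergence, since Proposition \ref{Convergence_SC}(iii) delivers only convergence in probability rather than in $L_1$.
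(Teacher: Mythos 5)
Your proof is correct and follows essentially the same route as the paper: the identical decomposition $\hat\alpha_{0t}^{\mbox{\tiny SC}} = \alpha_{0t} + \epsilon_{0t} + \boldsymbol{\lambda}_t(\boldsymbol{\mu}_0 - \widehat{\boldsymbol{\mu}}_{\mbox{\tiny SC}}) - \boldsymbol{\epsilon}_t'\widehat{\mathbf{w}}_{\mbox{\tiny SC}}$, with the first term killed by Proposition \ref{Convergence_SC}(i) and the second by the independence hypothesis plus the variance bound $C\norm{\widehat{\mathbf{w}}_{\mbox{\tiny SC}}}_2^2$ and Proposition \ref{Convergence_SC}(iii). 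Your only addition is to spell out what the paper leaves implicit --- the conditioning on the pre-treatment $\sigma$-algebra and the bounded-convergence argument (using $\norm{\widehat{\mathbf{w}}_{\mbox{\tiny SC}}}_2^2 \leq 1$ on the simplex) needed to pass from convergence in probability to $\mathbb{E}\left[\norm{\widehat{\mathbf{w}}_{\mbox{\tiny SC}}}_2^2\right] \rightarrow 0$, which is a legitimate and careful filling-in of the same argument.
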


This happens because not only $\widehat{{\boldsymbol{\mu}}}_{\mbox{\tiny SC}}   \buildrel p \over \rightarrow \boldsymbol{\mu}_0$, but also $\widehat{\mathbf{w}}_{\mbox{\tiny SC}}$ is diluted among an increasing number of control units, implying that $\boldsymbol{\epsilon}_t'\widehat{{\mathbf{w}}}_{\mbox{\tiny SC}}  \buildrel p \over \rightarrow 0$ (see details in Appendix \ref{Proof_corollary}). Therefore, if treatment assignment is uncorrelated with $ \epsilon_{0t}$, as considered in Assumption \ref{Assumption_e}(a), then the SC estimator is asymptotically unbiased for $\alpha_{0t}$ even when treatment assignment is correlated with the factor structure. Moreover, the asymptotic distribution of the SC estimator depends only on the idiosyncratic shocks of the treated unit in period $t$.   In Appendix \ref{Appendix_corollary} we present Corollary \ref{appendix_corollary}, in which  we derive $\hat  \alpha_{0t}^{\mbox{\tiny SC}}  \buildrel p \over \rightarrow \alpha_{0t} + \epsilon_{0t}$ allowing for   time dependence in the idiosyncratic shocks.

Our results are closely linked to Theorem 5 by \cite{SDID}, who  consider a penalized version of the SC weights. These penalized SC weights  solve the minimization problem presented in equation $(\ref{SC_eq})$ subject to the additional constraint that $|| {\mathbf{w}}||_2 \leq a_w$. Since $|| {\mathbf{w}} ||_1 = 1 \Rightarrow ||{\mathbf{w}}||_2 \leq 1$, note that the original SC weights are equivalent to the penalized SC weights with $a_w=1$. They show that the approximation error for their low-rank matrix structure goes to zero if, among other conditions,  $a_w \rightarrow 0$. In contrast, we show that, in our setting, the SC weights achieve such balancing even when the penalty term $a_w$ does not go to zero, so it is not necessary  to force weights to be positive for many control units in large samples with an $L_2$ penalization term.  In this case, the original SC method --- which does not include an $L_2$ penalization term --- provides a consistent estimator for $\boldsymbol{\mu}_0$ with weights such that $\widehat{\mathbf{w}}_{\mbox{\tiny SC}} ' \boldsymbol{\epsilon}_t  \buildrel p \over \rightarrow 0$.

Finally, under the assumptions considered in Proposition \ref{Convergence_SC}, $ {T_0}^{-1} \sum_{t \in \mathcal{T}_0} \left( y_{0t} -\widehat{\mathbf{w}}_{\mbox{\tiny SC}} ' {\mathbf{y}}_t  \right)^2$ converges in probability to $ \sigma_{\bar \lambda}^2( \boldsymbol{\mu}_0) = \sigma^2_0$, which is the asymptotic variance of $\epsilon_{0t}$. Therefore, the SC unit will asymptotically absorb all variability of $y_{0t}$ that is related to the factor structure, but will \emph{not} over-fit the idiosyncratic shocks of the treated unit. This happens because the non-negativity and adding-up constraints on the weights work as a regularization method, as presented by \cite{Chernozhukov}.\footnote{\cite{Chernozhukov} derive conditions under which the original SC estimator converges in probability. While they consider the case in which the outcomes of the control units are uncorrelated with the error in a model similar to the one presented in equation $(\ref{pop_reg})$, such condition would not be satisfied if we consider a linear factor model as the one presented in model (\ref{model}) for the potential outcomes. Proposition \ref{Convergence_SC} provides conditions under  which  the original SC estimator converges to weights that recover $\boldsymbol{\mu}_0$ even when the linear factor model structure induces such correlation. Increasing the number of control units is not sufficient to generate this result. It is crucial that  the number of control units that can be used to recover $\boldsymbol{\mu}_0$ increases with the total number of control units, so that Assumption \ref{Assumption_mu} is satisfied. } 
 This implies that  we should \emph{not} expect a perfect pre-treatment fit in this setting, even when ${J}$ grows at a faster rate than  $T_0$. Therefore, we provide conditions in which the SC estimator can be reliably used even in a setting in which the original SC papers recommend that the method should not be used (e.g., \cite{Abadie2010} and \cite{Abadie2015}). Moreover, this highlights that the asymptotic unbiasedness result from Proposition \ref{Convergence_SC} does not come from a better pre-treatment fit when we increase the number of control units. Rather, it comes from the fact that increasing the number of control units implies existence of balancing weights that are diluted among an increasing number of control units, implying that the problems highlighted by \cite{FP_SC}  become asymptotically irrelevant.

\begin{rem}
\normalfont
{Proposition \ref{Convergence_SC} remains valid if we consider a demeaned SC estimator, as proposed by \cite{FP_SC}, which is numerically the same as including a constant in the minimization problem (\ref{SC_eq}), as proposed by \cite{Doudchenko}. We show in Appendix \ref{Appendix_demeaned} that this would require only minor adjustments in the proof of Proposition \ref{Convergence_SC}.  }

\end{rem}

\begin{rem}

\normalfont
While we focus on the SC specification that includes all pre-treatment outcome lags as predictors, we consider a setting with covariates in Appendix \ref{Appendix_covariates}. We show that the conclusions from Proposition \ref{Convergence_SC}  remain valid for SC specifications that include time-invariant covariates as predictors, as long as the number of pre-treatment outcomes lags used as predictors goes to infinity when $T_0 \rightarrow \infty$. This result is an extension of the conclusions from \cite{FPP} for the case in which both $J$ and $T_0$ diverge. We also present in Appendix \ref{Appendix_covariates}  Monte Carlo simulations considering a setting with covariates and different SC specifications. In this setting, both the SC weights estimated from equation (\ref{SC_eq}), and the SC weights using half of the pre-treatment outcomes and covariates as predictors, approximately recover both the factor loadings and the time-invariant covariates of the treated unit when  $(T_0,J)$ are large.\footnote{We also include in our simulations in Appendix  \ref{Appendix_covariates}  a SC specification that does not satisfy the condition on the number of pre-treatment outcomes used as predictors going to infinity. In particular, we evaluate a SC specification that includes the average of the pre-treatment outcomes and additional covariates as predictors. In this case, the SC weights failed to recover the factor loadings of the treated unit even when $(T_0,J)$ are large.     } 

\end{rem}

\section{Relaxing the non-negativity constraints } \label{Section_OLS}

We consider now the importance of the regularization provided by the non-negativity and adding-up constraints for the results presented in Section \ref{Section_original_SC}. When the non-negativity constraint is relaxed, the estimator of $\boldsymbol{\mu}_0$ would remain asymptotically unbiased, but it may not be consistent if $J/T_0 \rightarrow 1$. We consider  the case without both the adding-up and the non-negativity constraints. The case with only the adding-up constraint is similar. In this case, the weights are estimated using the OLS regression
\begin{eqnarray} \label{OLS_eq}
\widehat{{\mathbf{b}}}_{\mbox{\tiny OLS}} =  \underset{{\mathbf{b}}  \in \mathbb{R}^{J}}{\mbox{argmin}} \left \{ \frac{1}{T_0} \sum_{t \in \mathcal{T}_0} \left( y_{0t} - { \mathbf{b}} ' {\mathbf{y}}_t  \right)^2    \right\}. 
\end{eqnarray}

Following the same arguments presented in Section \ref{Section_original_SC}, we can define
\begin{eqnarray} \label{eq_H_OLS}
\mathcal{H}^{\mbox{\tiny OLS}}_{J} (\boldsymbol{\mu}) ={\underset{{\mathbf{b}}  \in \mathbb{R}^{J} : ~  {\mathbf{M}_J}'{\mathbf{b}}  = \boldsymbol{\mu}}{\mbox{min}} \left \{   \frac{1}{T_0} \sum_{t \in \mathcal{T}_0} ( \bar {\lambda}_t(\boldsymbol{\mu}) -   { \mathbf{b}} ' {\boldsymbol{\epsilon}}_t )  ^2 \right \}}, 
 \end{eqnarray}
so that $\widehat{\boldsymbol{\mu}}_{\mbox{\tiny OLS}} \equiv  {\mathbf{M}_J}'\widehat{{\mathbf{b}}}_{\mbox{\tiny OLS}}$ is the solution to $\underset{ \boldsymbol{\mu} \in \mathcal{M}_J^{\mbox{\tiny OLS}}  }{\mbox{argmin}} \mathcal{H}^{\mbox{\tiny OLS}}_{T_0}(\boldsymbol{\mu}) $, where $ \mathcal{M}_J^{\mbox{\tiny OLS}} \equiv \{ \boldsymbol{\mu} \in \mathbb{R}^F | \boldsymbol{\mu} =   {\mathbf{M}_J}'{\mathbf{b}} \mbox{ for some } {\mathbf{b}} \in \mathbb{R}^{J}  \}$.

A crucial difference in this case is that, by not imposing any restriction on ${\mathbf{b}}$, this minimization problem is subject to over-fitting when $J$ increases with $T_0$. As a consequence, the lower bound we derive in the proof of Proposition \ref{Convergence_SC}, which in this case would be given by  ${\underset{{\mathbf{b}}  \in \mathbb{R}^{J} }{\mbox{min}} \left \{   \frac{1}{T_0} \sum_{t \in \mathcal{T}_0} ( \bar {\lambda}_t(\boldsymbol{\mu}) -   { \mathbf{b}} ' {\boldsymbol{\epsilon}}_t )  ^2 \right \}}$, would not generally converge to $\sigma^2_{\bar \lambda}(\boldsymbol{\mu})$.  In the extreme example in which $T_0 = J$, this lower bound would be equal to zero for all $\boldsymbol{\mu}$  with probability one. We can still show, however, that, when $J/T_0 \rightarrow c <1$,  $\widehat{\boldsymbol{\mu}}_{\mbox{\tiny OLS}}  \buildrel p \over \rightarrow \boldsymbol{\mu}_0$. Moreover, we can show that, under some conditions, $\mathbb{E}[\widehat{\boldsymbol{\mu}}_{\mbox{\tiny OLS}} - \boldsymbol{\mu}_0] \rightarrow 0$ even when $J/T_0 \rightarrow 1$.

Consider first the case in which $J/T_0 \rightarrow c <1$. We continue to consider the properties of the  estimator over the distribution of the idiosyncratic shocks, and conditional on fixed sequences of common factors and factor loadings. Regarding the idiosyncratic shocks, we continue to consider  Assumption \ref{Assumption_e}. We impose the following assumptions on the sequence of factor loadings.

\begin{ass}{(factor loadings)}
\normalfont \label{Assumption_mu_ols}
For some $\underline{a},\bar{a}>0$, let $R$ be the number of disjoint groups of $F$ control units we can arrange such that the $F \times F$ matrix with the factor loadings for each of those groups has its smallest eigenvalue greater than $\underline{a}$, and its largest eigenvalue smaller than $\bar{a}$. We assume that $R \rightarrow \infty$ when $J \rightarrow \infty$.

\end{ass}

Differently from the setting considered in Section \ref{Section_original_SC}, note that there will always exist a $\boldsymbol{\beta} \in \mathbb{R}^J$ such that ${\mathbf{M}_J}'\boldsymbol{\beta} = \boldsymbol{\mu}_0$, as long as there is at least one group of $F$ control units such that their factor loadings form a basis for $\mathbb{R}^F$. Assumption \ref{Assumption_mu_ols} implies not only that we can find a  $\boldsymbol{\beta} \in \mathbb{R}^J$ such that ${\mathbf{M}_J}' \boldsymbol{\beta} = \boldsymbol{\mu}_0$ when $J$ is large enough, but also that we can find a sequence of weights $\boldsymbol{\beta}^\ast_J \in \mathbb{R}^J$ such that ${\mathbf{M}_J}'\boldsymbol{\beta}^\ast_J = \boldsymbol{\mu}_0$ and $\norm{\boldsymbol{\beta}^\ast_J}_2 \rightarrow 0$.

Let $\boldsymbol{\Lambda}$  be the  $T_0 \times F$ matrix with rows equal to $\boldsymbol{\lambda}_t$ for $t \in \mathcal{T}_0$, and $\boldsymbol{\epsilon}_0$ be the $T_0 \times 1$ vector with $\epsilon_{0t}$ for $t \in \mathcal{T}_0$. We assume that the sequence $\boldsymbol{\Lambda}$ satisfies the following conditions.

\begin{ass}{(common factors)}
\normalfont \label{Assumption_lambda_ols}
Let $\mathbf{Q}_{T_0}$ be a sequence of random symmetric and idempotent $T_0 \times T_0$ matrices with rank $K$, where $K \rightarrow \infty$ when $T_0 \rightarrow \infty$. Then $\frac{1}{K}\boldsymbol{\Lambda}'\mathbf{Q}_{T_0} \boldsymbol{\Lambda}=O_p(1)$ and $\left(\frac{1}{K}\boldsymbol{\Lambda}'\mathbf{Q}_{T_0} \boldsymbol{\Lambda} \right)^{-1}=O_p(1)$. Moreover, if $\mathbf{Q}_{T_0} $ are independent of $\boldsymbol{\epsilon}_0 $, then $\frac{1}{K}\boldsymbol{\Lambda}'\mathbf{Q}_{T_0} \boldsymbol{\epsilon}_0 =o_p(1)$.
 
\end{ass}

Assumption \ref{Assumption_lambda_ols} is satisfied with probability one if the underlying distribution for $\boldsymbol{\lambda}_t$ is iid normal with mean zero, and  $\boldsymbol{\lambda}_t$ is independent from $\mathbf{Q}_{T_0}$.\footnote{Note that assuming $\boldsymbol{\lambda}_t$ iid normal with mean zero, we can assume without loss of generality that $\lambda_{ft}$ and $\lambda_{f't}$ are independent. In this case, we would just have to normalize the covariance matrix of $\boldsymbol{\lambda}_t$.} In this particular case, we would have $K^{-1}\boldsymbol{\Lambda}'\mathbf{Q}_{T_0} \boldsymbol{\Lambda} = K^{-1} \sum_{q=1}^K \widetilde{\boldsymbol{\lambda}}_q' \widetilde{\boldsymbol{\lambda}}_q$, where $\widetilde{\boldsymbol{\lambda}}_q$ is iid and has the same distribution as $\boldsymbol{\lambda}_q$, which implies that  $K^{-1}\boldsymbol{\Lambda}'\mathbf{Q}_{T_0} \boldsymbol{\Lambda}   \buildrel a.s. \over \rightarrow  \mathbb{E}[\boldsymbol{\lambda}_t ' \boldsymbol{\lambda}_t]$. Likewise, if we also have $\epsilon_{0t}$ iid normal and independent from $\boldsymbol{\lambda}_t$, then $K^{-1}\boldsymbol{\Lambda}'\mathbf{Q}_{T_0} \boldsymbol{\epsilon}_0 = K^{-1} \sum_{q=1}^K \widetilde{\boldsymbol{\lambda}}_q \tilde \epsilon_{0t}  \buildrel a.s. \over \rightarrow \mathbb{E}[ \widetilde{\boldsymbol{\lambda}}_q \tilde \epsilon_{0t}]=0$.

Given these assumptions, we show that, when $J/T_0 \rightarrow c<1$,   $\widehat{\boldsymbol{\mu}}_{\mbox{\tiny OLS}} \buildrel p \over \rightarrow \boldsymbol{\mu}_0$. 

\begin{prop} \label{Prop_OLS_K_large}

Suppose we observe $(y_{0t},...,y_{Jt})$ for periods $t \in \{ -T_0+1,...,-1,0,1,...,T_1 \}$, where $J$ is a function of $T_0$. Potential outcomes are defined in equation (\ref{model}).  Let  $\widehat{\boldsymbol{\mu}}_{\mbox{\tiny OLS}}$  be defined as ${\mathbf{M}_J}' \widehat{{\mathbf{b}}}_{\mbox{\tiny OLS}}$, where $\widehat{{\mathbf{b}}}_{\mbox{\tiny OLS}}$ is defined in equation $(\ref{OLS_eq})$. Assume that $J/T_0 \rightarrow c \in [0,1)$, and that Assumptions \ref{Assumption_e}, \ref{Assumption_mu_ols}, and  \ref{Assumption_lambda_ols} hold. Then, when  $T_0 \rightarrow \infty$,  $\widehat{\boldsymbol{\mu}}_{\mbox{\tiny OLS}} \buildrel p \over \rightarrow \boldsymbol{\mu}_0$.

\end{prop}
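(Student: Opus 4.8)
The plan is to reduce the $J$-dimensional least-squares problem --- which, in the proportional regime $J/T_0\to c<1$, lies in a range where $(\mathbf{Y}'\mathbf{Y})^{-1}$ admits no useful entrywise control and where $\mathbf{E}\mathbf{b}$ genuinely over-fits $\boldsymbol{\epsilon}_0$ --- to an $F$-dimensional problem governed by the well-conditioned matrix $\boldsymbol{\Lambda}'\mathbf{Q}\boldsymbol{\Lambda}$ of Assumption \ref{Assumption_lambda_ols}. Collect the pre-treatment data into the $T_0\times J$ matrices $\mathbf{Y}$ (control outcomes) and $\mathbf{E}$ (control shocks), so that $\mathbf{Y}=\boldsymbol{\Lambda}\mathbf{M}_J'+\mathbf{E}$ and $\mathbf{y}_0=\boldsymbol{\Lambda}\boldsymbol{\mu}_0+\boldsymbol{\epsilon}_0$. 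Since $J<T_0$ eventually, $\mathbf{E}$ and $\mathbf{Y}$ have full column rank with probability approaching one, so $\widehat{\mathbf{b}}_{\mbox{\tiny OLS}}=(\mathbf{Y}'\mathbf{Y})^{-1}\mathbf{Y}'\mathbf{y}_0$ and the projections $\mathbf{P}_E=\mathbf{E}(\mathbf{E}'\mathbf{E})^{-1}\mathbf{E}'$, $\mathbf{P}_Y=\mathbf{Y}(\mathbf{Y}'\mathbf{Y})^{-1}\mathbf{Y}'$ are well defined. I would take $\mathbf{Q}\equiv\mathbf{I}_{T_0}-\mathbf{P}_E$ as the idempotent matrix in Assumption \ref{Assumption_lambda_ols}: it is symmetric idempotent of rank $K=T_0-J$, with $K/T_0\to1-c>0$ so $K\to\infty$, and, crucially, it is a function of $\mathbf{E}$ alone and hence independent of $\boldsymbol{\epsilon}_0$ by Assumption \ref{Assumption_e}(b).

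The linchpin is the identity $\widehat{\boldsymbol{\mu}}_{\mbox{\tiny OLS}}=(\boldsymbol{\Lambda}'\mathbf{Q}\boldsymbol{\Lambda})^{-1}\boldsymbol{\Lambda}'\mathbf{Q}\,\mathbf{P}_Y\mathbf{y}_0$. I would obtain it by writing the fitted value as $\mathbf{P}_Y\mathbf{y}_0=\mathbf{Y}\widehat{\mathbf{b}}_{\mbox{\tiny OLS}}=\boldsymbol{\Lambda}\widehat{\boldsymbol{\mu}}_{\mbox{\tiny OLS}}+\mathbf{E}\widehat{\mathbf{b}}_{\mbox{\tiny OLS}}$ (using $\mathbf{M}_J'\widehat{\mathbf{b}}_{\mbox{\tiny OLS}}=\widehat{\boldsymbol{\mu}}_{\mbox{\tiny OLS}}$) and pre-multiplying by $\boldsymbol{\Lambda}'\mathbf{Q}$; since $\mathbf{Q}\mathbf{E}=0$ the idiosyncratic component is annihilated, leaving $\boldsymbol{\Lambda}'\mathbf{Q}\mathbf{P}_Y\mathbf{y}_0=\boldsymbol{\Lambda}'\mathbf{Q}\boldsymbol{\Lambda}\,\widehat{\boldsymbol{\mu}}_{\mbox{\tiny OLS}}$, where $\boldsymbol{\Lambda}'\mathbf{Q}\boldsymbol{\Lambda}$ is invertible by Assumption \ref{Assumption_lambda_ols}. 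Substituting $\mathbf{y}_0=\boldsymbol{\Lambda}\boldsymbol{\mu}_0+\boldsymbol{\epsilon}_0$ and subtracting $\boldsymbol{\mu}_0=(\boldsymbol{\Lambda}'\mathbf{Q}\boldsymbol{\Lambda})^{-1}\boldsymbol{\Lambda}'\mathbf{Q}\boldsymbol{\Lambda}\boldsymbol{\mu}_0$ then splits the error into
\[
\widehat{\boldsymbol{\mu}}_{\mbox{\tiny OLS}}-\boldsymbol{\mu}_0=\underbrace{(\boldsymbol{\Lambda}'\mathbf{Q}\boldsymbol{\Lambda})^{-1}\boldsymbol{\Lambda}'\mathbf{Q}(\mathbf{P}_Y-\mathbf{I}_{T_0})\boldsymbol{\Lambda}\boldsymbol{\mu}_0}_{\text{bias}}+\underbrace{(\boldsymbol{\Lambda}'\mathbf{Q}\boldsymbol{\Lambda})^{-1}\boldsymbol{\Lambda}'\mathbf{Q}\,\mathbf{P}_Y\boldsymbol{\epsilon}_0}_{\text{noise}}.
\]
Throughout I would use $\norm{(\boldsymbol{\Lambda}'\mathbf{Q}\boldsymbol{\Lambda})^{-1}}=K^{-1}\norm{(K^{-1}\boldsymbol{\Lambda}'\mathbf{Q}\boldsymbol{\Lambda})^{-1}}=O_p(1/T_0)$ from Assumption \ref{Assumption_lambda_ols}, and $\norm{\boldsymbol{\Lambda}}_{\text{op}}^2=\norm{\boldsymbol{\Lambda}'\boldsymbol{\Lambda}}_{\text{op}}=O_p(T_0)$ by applying the same assumption with $\mathbf{Q}_{T_0}=\mathbf{I}_{T_0}$.

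For the noise term I would condition on $\mathbf{E}$. Because $\mathbf{Q}$ and $\mathbf{P}_Y$ are functions of $\mathbf{E}$ while $\boldsymbol{\epsilon}_0$ is independent of $\mathbf{E}$ with mean zero (Assumption \ref{Assumption_e}(a),(b)), the term is conditionally mean zero with conditional variance $(\boldsymbol{\Lambda}'\mathbf{Q}\boldsymbol{\Lambda})^{-1}\boldsymbol{\Lambda}'\mathbf{Q}\mathbf{P}_Y\boldsymbol{\Sigma}_0\mathbf{P}_Y\mathbf{Q}\boldsymbol{\Lambda}(\boldsymbol{\Lambda}'\mathbf{Q}\boldsymbol{\Lambda})^{-1}$, with $\boldsymbol{\Sigma}_0=\mathrm{Var}(\boldsymbol{\epsilon}_0)$. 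Under Assumption \ref{Assumption_e}(c),(d) the autocovariances of $\epsilon_{0t}$ are summable (Davydov's inequality), so $\norm{\boldsymbol{\Sigma}_0}_{\text{op}}=O(1)$; using $\mathbf{P}_Y\boldsymbol{\Sigma}_0\mathbf{P}_Y\preceq\norm{\boldsymbol{\Sigma}_0}_{\text{op}}\mathbf{P}_Y$ and $\mathbf{Q}\mathbf{P}_Y\mathbf{Q}\preceq\mathbf{Q}$ (as $\mathbf{P}_Y\preceq\mathbf{I}_{T_0}$), the conditional variance is bounded by $\norm{\boldsymbol{\Sigma}_0}_{\text{op}}(\boldsymbol{\Lambda}'\mathbf{Q}\boldsymbol{\Lambda})^{-1}=O_p(1/T_0)\,\mathbf{I}_F$, and a conditional Markov inequality gives $\text{noise}=o_p(1)$. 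For the bias term, the decisive point is that the dilute weights of Assumption \ref{Assumption_mu_ols} make $\boldsymbol{\Lambda}\boldsymbol{\mu}_0$ nearly lie in the column space of $\mathbf{Y}$: taking the sequence $\boldsymbol{\beta}^\ast_J$ with $\mathbf{M}_J'\boldsymbol{\beta}^\ast_J=\boldsymbol{\mu}_0$ and $\norm{\boldsymbol{\beta}^\ast_J}_2\rightarrow0$, one has $\mathbf{Y}\boldsymbol{\beta}^\ast_J=\boldsymbol{\Lambda}\boldsymbol{\mu}_0+\mathbf{E}\boldsymbol{\beta}^\ast_J$, whence
\[
\norm{(\mathbf{I}_{T_0}-\mathbf{P}_Y)\boldsymbol{\Lambda}\boldsymbol{\mu}_0}\le\norm{\boldsymbol{\Lambda}\boldsymbol{\mu}_0-\mathbf{Y}\boldsymbol{\beta}^\ast_J}=\norm{\mathbf{E}\boldsymbol{\beta}^\ast_J}.
\]
Since $\mathbb{E}[\mathbf{E}'\mathbf{E}]$ is diagonal with entries at most $\bar\sigma^2 T_0$ (Assumption \ref{Assumption_e}(b),(d)), $\mathbb{E}\norm{\mathbf{E}\boldsymbol{\beta}^\ast_J}^2\le\bar\sigma^2 T_0\norm{\boldsymbol{\beta}^\ast_J}_2^2=o(T_0)$, so $\norm{(\mathbf{I}_{T_0}-\mathbf{P}_Y)\boldsymbol{\Lambda}\boldsymbol{\mu}_0}=o_p(\sqrt{T_0})$; combining with $\norm{\boldsymbol{\Lambda}'\mathbf{Q}}_{\text{op}}\le\norm{\boldsymbol{\Lambda}}_{\text{op}}=O_p(\sqrt{T_0})$ and $\norm{(\boldsymbol{\Lambda}'\mathbf{Q}\boldsymbol{\Lambda})^{-1}}=O_p(1/T_0)$ gives $\norm{\text{bias}}=o_p(1)$, so that $\widehat{\boldsymbol{\mu}}_{\mbox{\tiny OLS}}\buildrel p \over \rightarrow\boldsymbol{\mu}_0$.

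The step I expect to be the main obstacle is finding and justifying the $\mathbf{Q}$-representation itself: the naive expansion keeps reproducing $\mathbf{M}_J'(\mathbf{Y}'\mathbf{Y})^{-1}\mathbf{Y}'(\cdot)$, and only after recognizing that projecting off $\mathbf{E}$ with $\mathbf{Q}=\mathbf{I}_{T_0}-\mathbf{P}_E$ kills the over-fitting component while preserving the factor directions does everything collapse to the $F\times F$ matrix $\boldsymbol{\Lambda}'\mathbf{Q}\boldsymbol{\Lambda}$ controlled by Assumption \ref{Assumption_lambda_ols}. The two secondary subtleties are verifying that $\mathbf{Q}$ is a legitimate (random, $\boldsymbol{\epsilon}_0$-independent) instance of Assumption \ref{Assumption_lambda_ols} with $K\to\infty$ --- which is exactly where $c<1$ is used and where the case $c=1$ would break down, since then $K/T_0\to0$ and $(K^{-1}\boldsymbol{\Lambda}'\mathbf{Q}\boldsymbol{\Lambda})^{-1}$ need not be $O_p(1)$ --- and confirming that $\norm{\boldsymbol{\Sigma}_0}_{\text{op}}=O(1)$ under the mixing condition rather than i.i.d.\ shocks.
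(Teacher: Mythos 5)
Your proposal is correct in its architecture but takes a genuinely different route from the paper's proof. The paper works directly with the block structure of Assumption \ref{Assumption_mu_ols}: it relabels the controls into $R$ groups of $F$ units, performs an explicit change of variables via a matrix $\mathbf{H}$ whose first block isolates $\boldsymbol{\mu}_0$, and applies Frisch--Waugh--Lovell with $\mathbf{Q}$ equal to the residual maker for the $J-F$ transformed regressors $\boldsymbol{\epsilon}(p)-\boldsymbol{\epsilon}(1)(\boldsymbol{\mu}(1)')^{-1}\boldsymbol{\mu}(p)'$ (rank $K=T_0-J+F$); its crucial step is that $\mathbf{Q}\boldsymbol{\epsilon}(1)$ equals $\mathbf{Q}$ applied to an average of $R$ independent error blocks, so $\frac{1}{\sqrt{K}}\mathbf{Q}\boldsymbol{\epsilon}(1)=o_p(1)$, and it also uses $\mathbf{Q}\mathbf{E}\boldsymbol{\beta}^\ast_J=\mathbf{Q}\boldsymbol{\epsilon}(1)(\boldsymbol{\mu}(1)')^{-1}\boldsymbol{\mu}_0$ together with the third clause of Assumption \ref{Assumption_lambda_ols} for $\frac{1}{K}\boldsymbol{\Lambda}'\mathbf{Q}\boldsymbol{\epsilon}_0$. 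You instead take $\mathbf{Q}=\mathbf{I}_{T_0}-\mathbf{P}_E$, annihilating the \emph{entire} error space at once, and exploit the identity $\boldsymbol{\Lambda}'\mathbf{Q}\boldsymbol{\Lambda}\,\widehat{\boldsymbol{\mu}}_{\mbox{\tiny OLS}}=\boldsymbol{\Lambda}'\mathbf{Q}\mathbf{P}_Y\mathbf{y}_0$, which the paper never writes and which has the bonus of being valid for \emph{any} minimizer $\widehat{\mathbf{b}}_{\mbox{\tiny OLS}}$, since the fitted value $\mathbf{Y}\widehat{\mathbf{b}}_{\mbox{\tiny OLS}}=\mathbf{P}_Y\mathbf{y}_0$ is always unique. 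Your bias bound via the best-approximation property of $\mathbf{P}_Y$ with the dilute weights is a clean shortcut replacing the paper's term-by-term accounting, and your conditional-variance treatment of the noise substitutes for the paper's use of Assumption \ref{Assumption_lambda_ols}'s last clause. Since both your $\mathbf{Q}$ (and your $\mathbf{Q}_{T_0}=\mathbf{I}_{T_0}$ with $K=T_0$, used for $\norm{\boldsymbol{\Lambda}}^2_{\mathrm{op}}=O_p(T_0)$) and the paper's $\mathbf{Q}$ are functions of $\mathbf{E}$ alone, your invocation of Assumption \ref{Assumption_lambda_ols} is on exactly the same footing as the paper's. One small omission: Assumption \ref{Assumption_mu_ols} delivers $\boldsymbol{\beta}^\ast_J$ with $\mathbf{M}_J'\boldsymbol{\beta}^\ast_J=\boldsymbol{\mu}_0$ and $\norm{\boldsymbol{\beta}^\ast_J}_2\to0$ only after the one-line block construction the paper spells out ($\boldsymbol{\beta}^\ast_J(p)=\frac{1}{R}(\boldsymbol{\mu}(p)')^{-1}\boldsymbol{\mu}_0$ on each of the $R$ blocks, zero elsewhere, giving $\norm{\boldsymbol{\beta}^\ast_J}_2=O(R^{-1/2})$); you should state it rather than cite the assumption as if it asserted this directly.

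There is one genuine slip, though it is repairable within your own argument: the claim $\norm{\boldsymbol{\Sigma}_0}_{\mathrm{op}}=O(1)$ via ``Davydov, hence summable autocovariances'' does not follow from Assumption \ref{Assumption_e}(c), which imposes $\alpha$-mixing with \emph{no rate}. Davydov's inequality with uniformly bounded fourth moments gives $\left|\mathrm{Cov}(\epsilon_{0t},\epsilon_{0s})\right|\leq C\,\alpha(|t-s|)^{1/2}$, but $\sum_h \alpha(h)^{1/2}$ need not converge. Fortunately you do not need $O(1)$: the row-sum bound for the operator norm of a symmetric matrix plus Ces\`{a}ro summation (using only $\alpha(h)\to0$) yields $\norm{\boldsymbol{\Sigma}_0}_{\mathrm{op}}\leq 2C\sum_{h=0}^{T_0-1}\alpha(h)^{1/2}=o(T_0)$, and since your conditional variance bound is $\norm{\boldsymbol{\Sigma}_0}_{\mathrm{op}}\cdot\norm{(\boldsymbol{\Lambda}'\mathbf{Q}\boldsymbol{\Lambda})^{-1}}_{\mathrm{op}}=o(T_0)\,O_p(1/T_0)=o_p(1)$, the noise term still vanishes by conditional Chebyshev. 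With that correction (and the standard with-probability-approaching-one full-column-rank caveat on $\mathbf{E}$, which the paper glosses over equally when asserting $\mathrm{rank}(\mathbf{Q})=K$ for its own residual maker), your proof is complete.
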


The intuition is the same as the intuition in Proposition \ref{Convergence_SC}.  When the number of control units increases, we are able to have a diluted weighted average of the control units that recover $\boldsymbol{\mu}_0$. This reduces the importance of the variance of the linear combination of the idiosyncratic shocks of the control units in the minimization problem $(\ref{OLS_eq})$ for the estimation of $\widehat{{\mathbf{b}}}_{\mbox{\tiny OLS}}$, making the problem raised by \cite{FP_SC} less relevant. Since the number of degrees of freedom, $T_0 - J$, goes to infinity, the  estimator $\widehat{\boldsymbol{\mu}}_{\mbox{\tiny OLS}}$ converges in probability even when $J \rightarrow \infty$. This is consistent with Theorem 1 from \cite{Cattaneo}, once we consider a change in variables so that we can divide the $J$ control variables into a group of $F$ variables such that their associated estimators give us $\widehat{\boldsymbol{\mu}}_{\mbox{\tiny OLS}}$, and a remaining group of $J-F$ variables that we are not inherently interested in. As in \cite{Cattaneo},  $J$ can be a nonvanishing  fraction of $T_0$, but we cannot have that $J/T_0 \rightarrow 1$. It is easy to show that the assumptions for Theorem 1 from \cite{Cattaneo} hold if we assume that the data is iid normal. We consider here  an alternative  proof for Proposition \ref{Prop_OLS_K_large} where we take advantage of the specific details of our application, so that we can consider a weaker set of assumptions. See details of the proof in Appendix \ref{Proof_OLS_large}.

When  $J/T_0 \rightarrow 1$, we will not generally have that $\widehat{\boldsymbol{\mu}}_{\mbox{\tiny OLS}} \buildrel p \over \rightarrow \boldsymbol{\mu}_0$. If we impose a stronger set assumptions, however, we can still show that  $\mathbb{E}[\widehat{\boldsymbol{\mu}}_{\mbox{\tiny OLS}} - \boldsymbol{\mu}_0] \rightarrow 0$, regardless of the ratio $J/T_0$. The only restriction is that $T_0 \geq J$, so that the OLS estimator is well specified.  In this case, we continue to condition on a sequence of $\boldsymbol{\mu}_i$, but we consider $\boldsymbol{\lambda}_t$ stochastic. 

\begin{ass}{(normality)}
\normalfont \label{Assumption_normal_ols}
$\epsilon_{it}  \sim N(0,\sigma_i^2)$ iid across $t$ for all $i \in \mathbb{N}\cup \{0\}$, and $\boldsymbol{\lambda}_t   \buildrel iid \over \sim N(0,\boldsymbol{\Omega})$, where $\boldsymbol{\Omega}$ is positive definite. All these variables are independent of each other.
 
\end{ass}

\begin{prop} \label{Prop_OLS_K_finite}

Suppose we observe $(y_{0t},...,y_{Jt})$ for periods $t \in \{ -T_0+1,...,-1,0,1,...,T_1 \}$, where $J$ is a function of $T_0$. Potential outcomes are defined in equation (\ref{model}).  Let  $\widehat{\boldsymbol{\mu}}_{\mbox{\tiny OLS}}$  be defined as ${\mathbf{M}_J}' \widehat{{\mathbf{b}}}_{\mbox{\tiny OLS}}$, where $\widehat{{\mathbf{b}}}_{\mbox{\tiny OLS}}$ is defined in equation $(\ref{OLS_eq})$. Assume that $T_0 \geq J$, and that Assumptions \ref{Assumption_mu_ols} and \ref{Assumption_normal_ols} hold. Then,  when $T_0 \rightarrow \infty$, $\mathbb{E}[\widehat{\boldsymbol{\mu}}_{\mbox{\tiny OLS}}  - \boldsymbol{\mu}_0] \rightarrow 0$. 

\end{prop}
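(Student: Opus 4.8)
The plan is to exploit the joint normality in Assumption \ref{Assumption_normal_ols} to replace the finite-sample estimator by a deterministic population object, and then to evaluate that object in closed form. Collect the controls in the $T_0\times J$ matrix $\mathbf{Y}$ with rows $\mathbf{y}_t'$ and the treated outcomes in $\mathbf{y}_0=(y_{0t})_{t\in\mathcal{T}_0}$, and write $\mathbf{D}=\mathrm{diag}(\sigma_1^2,\dots,\sigma_J^2)$. Under Assumption \ref{Assumption_normal_ols} the vectors $(y_{0t},\mathbf{y}_t')$ are i.i.d. mean-zero Gaussian across $t$, with $\mathrm{Var}(\mathbf{y}_t)=\boldsymbol{\Sigma}_{YY}=\mathbf{M}_J\boldsymbol{\Omega}\mathbf{M}_J'+\mathbf{D}$ and $\mathrm{Cov}(\mathbf{y}_t,y_{0t})=\boldsymbol{\Sigma}_{Y0}=\mathbf{M}_J\boldsymbol{\Omega}\boldsymbol{\mu}_0$. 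Let $\mathbf{b}^{\ast}=\boldsymbol{\Sigma}_{YY}^{-1}\boldsymbol{\Sigma}_{Y0}$ be the population projection coefficient. Because the data are jointly Gaussian, the population residual $e_t=y_{0t}-\mathbf{y}_t'\mathbf{b}^{\ast}$ is \emph{independent} of $\mathbf{y}_t$ (not merely uncorrelated) and has mean zero; stacking $\mathbf{e}=(e_t)_{t\in\mathcal{T}_0}$ gives $\mathbf{y}_0=\mathbf{Y}\mathbf{b}^{\ast}+\mathbf{e}$ with $\mathbf{e}$ independent of $\mathbf{Y}$. Since $T_0\geq J$ ensures $\mathbf{Y}$ has full column rank almost surely, $\widehat{\mathbf{b}}_{\mbox{\tiny OLS}}=\mathbf{b}^{\ast}+(\mathbf{Y}'\mathbf{Y})^{-1}\mathbf{Y}'\mathbf{e}$, and the conditional mean of the second term given $\mathbf{Y}$ is zero. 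I would therefore conclude $\mathbb{E}[\widehat{\boldsymbol{\mu}}_{\mbox{\tiny OLS}}]=\mathbf{M}_J'\mathbf{b}^{\ast}$, reducing the problem to the deterministic vector $\mathbf{M}_J'\mathbf{b}^{\ast}-\boldsymbol{\mu}_0$.

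Next I would compute $\mathbf{M}_J'\mathbf{b}^{\ast}$ explicitly with the Woodbury identity. Applying it to $\boldsymbol{\Sigma}_{YY}^{-1}=(\mathbf{D}+\mathbf{M}_J\boldsymbol{\Omega}\mathbf{M}_J')^{-1}$ and setting $\mathbf{P}=\mathbf{M}_J'\mathbf{D}^{-1}\mathbf{M}_J$ gives $\mathbf{M}_J'\boldsymbol{\Sigma}_{YY}^{-1}=\boldsymbol{\Omega}^{-1}(\boldsymbol{\Omega}^{-1}+\mathbf{P})^{-1}\mathbf{M}_J'\mathbf{D}^{-1}$. Substituting $\boldsymbol{\Sigma}_{Y0}=\mathbf{M}_J\boldsymbol{\Omega}\boldsymbol{\mu}_0$ and using $(\boldsymbol{\Omega}^{-1}+\mathbf{P})^{-1}\mathbf{P}=\mathbf{I}_F-(\boldsymbol{\Omega}^{-1}+\mathbf{P})^{-1}\boldsymbol{\Omega}^{-1}$ collapses everything to
\[
\mathbb{E}[\widehat{\boldsymbol{\mu}}_{\mbox{\tiny OLS}}]-\boldsymbol{\mu}_0=\mathbf{M}_J'\mathbf{b}^{\ast}-\boldsymbol{\mu}_0=-\boldsymbol{\Omega}^{-1}(\boldsymbol{\Omega}^{-1}+\mathbf{P})^{-1}\boldsymbol{\mu}_0 .
\]
The appeal of this form is that the whole bias is governed by the size of the $F\times F$ matrix $\mathbf{P}$: since $\boldsymbol{\Omega}^{-1}\succ0$,
\[
\norm{\mathbb{E}[\widehat{\boldsymbol{\mu}}_{\mbox{\tiny OLS}}]-\boldsymbol{\mu}_0}_2\leq\frac{\lambda_{\max}(\boldsymbol{\Omega}^{-1})\,\norm{\boldsymbol{\mu}_0}_2}{\lambda_{\min}(\mathbf{P})}.
\]

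It remains to show $\lambda_{\min}(\mathbf{P})\rightarrow\infty$, which is exactly what Assumption \ref{Assumption_mu_ols} buys. With the $\sigma_j^2$ uniformly bounded above by some $\bar{\sigma}^2$ we have $\mathbf{D}^{-1}\succeq\bar{\sigma}^{-2}\mathbf{I}_J$, hence $\mathbf{P}\succeq\bar{\sigma}^{-2}\mathbf{M}_J'\mathbf{M}_J=\bar{\sigma}^{-2}\sum_{j=1}^J\boldsymbol{\mu}_j\boldsymbol{\mu}_j'$. For each of the $R$ disjoint well-conditioned groups, the $F\times F$ loading matrix has its smallest (singular) value bounded below by $\underline{a}$, so that group's contribution satisfies $\sum_{j\in\text{group}}\boldsymbol{\mu}_j\boldsymbol{\mu}_j'\succeq\underline{a}^2\mathbf{I}_F$; summing over the disjoint groups (the remaining units only add positive-semidefinite terms) yields $\mathbf{M}_J'\mathbf{M}_J\succeq R\,\underline{a}^2\mathbf{I}_F$ and therefore $\lambda_{\min}(\mathbf{P})\geq R\,\underline{a}^2/\bar{\sigma}^2$. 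Since $R\rightarrow\infty$ as $J\rightarrow\infty$ by Assumption \ref{Assumption_mu_ols}, the bound above forces $\mathbb{E}[\widehat{\boldsymbol{\mu}}_{\mbox{\tiny OLS}}-\boldsymbol{\mu}_0]\rightarrow0$. Notably this computation never invokes the ratio $J/T_0$, so it covers the regime $J/T_0\rightarrow1$ that is excluded from the consistency statement in Proposition \ref{Prop_OLS_K_large}.

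The step I expect to be most delicate is the interchange $\mathbb{E}[\widehat{\boldsymbol{\mu}}_{\mbox{\tiny OLS}}]=\mathbf{M}_J'\mathbf{b}^{\ast}$, i.e. integrability of the stochastic term when $J$ is close to $T_0$. Conditional on $\mathbf{Y}$ the term $\mathbf{M}_J'(\mathbf{Y}'\mathbf{Y})^{-1}\mathbf{Y}'\mathbf{e}$ is Gaussian with mean zero, so its conditional mean vanishes for every full-rank $\mathbf{Y}$; turning this into a zero \emph{unconditional} mean requires a finite inverse-Wishart-type moment of $(\mathbf{Y}'\mathbf{Y})^{-1}$. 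The relevant moment here is of square-root order and is finite as soon as there is at least one excess degree of freedom ($T_0\geq J+1$), with the knife-edge $T_0=J$ handled separately; this is precisely where the normality of Assumption \ref{Assumption_normal_ols} and the constraint $T_0\geq J$ enter, while the remainder of the argument is purely deterministic linear algebra.
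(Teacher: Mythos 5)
Your proof is correct for $T_0 \geq J+1$ and takes a genuinely different route from the paper's. The paper recycles the block construction from the proof of Proposition \ref{Prop_OLS_K_large}: a change of variables plus Frisch--Waugh--Lovell with a residual-maker $\mathbf{Q}$ built from cross-block error contrasts, then the Gaussian linear conditional expectation $\mathbb{E}[\boldsymbol{\epsilon}(1) \mid \mathbf{Y}]$, and a key cancellation --- because $\mathbf{Q}\,\mathbb{E}[\mathbf{E}\mid\mathbf{Y}]\boldsymbol{\beta}^\ast_J=\mathbf{Q}\mathbf{Y}(1)\mathbf{B}^\ast(1)(\boldsymbol{\mu}(1)')^{-1}\boldsymbol{\mu}_0$, the factor $\left(\mathbf{Y}(1)'\mathbf{Q}\mathbf{Y}(1)\right)^{-1}$ cancels exactly, so the conditional bias is the \emph{deterministic} vector $\boldsymbol{\mu}(1)'\mathbf{B}^\ast(1)(\boldsymbol{\mu}(1)')^{-1}\boldsymbol{\mu}_0$, which vanishes because diluted cross-block predictors force $\mathbf{B}^\ast(1)\rightarrow 0$. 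You instead center at the population projection coefficient and evaluate the bias in closed form via Woodbury, obtaining $\mathbb{E}[\widehat{\boldsymbol{\mu}}_{\mbox{\tiny OLS}}]-\boldsymbol{\mu}_0=-\boldsymbol{\Omega}^{-1}(\boldsymbol{\Omega}^{-1}+\mathbf{P})^{-1}\boldsymbol{\mu}_0$; your algebra checks out, and it buys something the paper's argument does not: an exact finite-sample bias that is independent of $T_0$ altogether, with an explicit $O(1/R)$ rate from $\lambda_{\min}(\mathbf{P})\geq R\,\underline{a}^2/\bar{\sigma}^2$. (Your singular-value reading of Assumption \ref{Assumption_mu_ols} and your implicit use of $\sup_j \sigma_j^2<\infty$ are the same liberties the paper's own proof takes when it asserts uniformly bounded $\norm{(\boldsymbol{\mu}(p))^{-1}}_2$ and invokes the bound $\bar\gamma$.) The one substantive loose end is the knife edge you deferred: at $T_0=J$ the relevant square-root inverse-Wishart moment reduces to $\mathbb{E}[(\chi^2_1)^{-1/2}]=\infty$, and since conditionally on $\mathbf{Y}$ the stochastic term is symmetric Gaussian with that random scale, both tails of $\widehat{\boldsymbol{\mu}}_{\mbox{\tiny OLS}}$ have infinite expectation --- the unconditional mean is then undefined rather than zero by symmetry, so $T_0=J$ cannot in fact be ``handled separately'' within your framework except by excluding it or restating the claim conditionally. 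Note, however, that the paper's statement shares this defect: its cancellation trick avoids inverse moments only in the conditional formulation, and passing to the unconditional expectation by iterated expectations likewise presupposes integrability, which fails at $T_0=J$. So this is a boundary flaw of the proposition itself, not of your argument on $T_0\geq J+1$.
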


Proposition \ref{Prop_OLS_K_finite} reinforces that the bias in the estimator of $\boldsymbol{\mu}_0$ goes to zero when $J \rightarrow \infty$ because it makes the endogeneity problem highlighted  by \cite{FP_SC} less relevant (if Assumption \ref{Assumption_mu_ols} holds).  The only assumption we make on the number of control units and pre-treatment periods is that $T_0 \geq J$, so that the OLS estimator is well defined. Therefore, this conclusion is valid  even when $T_0 - J$ does not go to infinity.  However, relaxing the non-negativity and adding-up constraints when $T_0 - J$ does not diverge comes at the cost of having an estimator for $\boldsymbol{\mu}_0$ that may  not be consistent (even though the bias goes to zero), which translates into larger variance. See details of the proof of Proposition \ref{Prop_OLS_K_finite} in Appendix \ref{Proof_OLS_finite}. The assumption that $\mathbb{E}[\boldsymbol{\lambda}_t]=0$ can be relaxed if we consider the demeaned SC estimator.

To illustrate the trade-offs between using the constraints or not, we consider a very simple example in which we can derive the asymptotic distribution of $\widehat{\alpha}_{0t}^{\mbox{\tiny OLS}}$ when $T_0 \rightarrow \infty$, depending on the value of $c \in [0,1)$ such that $J/T_0 \rightarrow c$. Consider a setting with $F=1$, where $y_{it}^N = {\lambda}_t + \epsilon_{it}$ for all $i \in \mathbb{N} \cup \{0\}$, and Assumption  \ref{Assumption_normal_ols} holds with $\sigma_i^2 = \sigma^2$ for all $i$. From Proposition \ref{Convergence_SC} and Corollary \ref{corollary}, we know that the SC estimator converges in distribution to a $N(\alpha_{0t}, \sigma^2)$ in this case. Moreover, from Proposition \ref{Prop_OLS_K_large}, we know that $\widehat{{\mu}}_{\mbox{\tiny OLS}} \buildrel p \over \rightarrow \mu_0 = 1$. In Appendix \ref{Appendix_simple}, we show that $\widehat{\alpha}_{0t}^{\mbox{\tiny OLS}}$ converges in distribution to a $N\left( \alpha_{0t}, {\sigma^2}(1-c)^{-1} \right)$. Therefore, the asymptotic variance of $\widehat{\alpha}_{0t}^{\mbox{\tiny OLS}}$  equals the asymptotic variance of the SC estimator when $J/T_0 \rightarrow 0$. However, if $J/T_0  \rightarrow c>0$,  the asymptotic variance of $\widehat{\alpha}_{0t}^{\mbox{\tiny OLS}}$ is larger than the asymptotic variance of the SC estimator. Moreover, the asymptotic variance of $\widehat{\alpha}_{0t}^{\mbox{\tiny OLS}}$ diverges to infinity when $c \rightarrow 1$. 

Overall, combining the results from Sections \ref{Section_original_SC} and \ref{Section_OLS}, we have that  using an OLS regression to estimate the weights without any regularization method can be a reasonable idea when the number of control units is large, but the number of pre-treatment periods is much larger than the number of controls units. An advantage relative to the original SC estimator is that Assumption \ref{Assumption_mu_ols} requires a sequence of factor loadings that reconstructs $\boldsymbol{\mu}_0$ without the constraints on the weights.  However, an important disadvantage of using the  OLS estimator without any regularization method is that the variance of the estimator may be larger. As we show in our simple example, this cost can be substantial when the number of pre-treatment periods is not much larger than the number of control units.     Including only the adding-up constraint (without the non-negativity constraint) only increases the number of degrees of freedom by one, so all results in this section   remain valid in this case.  When the number of pre-treatment periods is not much larger than the number of control units, other regularization methods could be used, as considered by, for example,  \cite{Doudchenko}, \cite{SDID},  \cite{arco},  \cite{Chernozhukov}, \cite{Hsiao}, and \cite{Li}.

\section{Monte Carlo simulations} \label{MC}

We present a simple Monte Carlo (MC) exercise to illustrate the main results presented in Sections \ref{Section_original_SC} and \ref{Section_OLS}.  We consider a setting in which there are two common factors, $\lambda_{1t}$ and $\lambda_{2t}$. Potential outcomes for the treated unit and for half of the control units depend on the first common factor, so $y_{jt} = \lambda_{1t} + \epsilon_{jt}$ for $j=0,1,...,J/2$, while $y_{jt} = \lambda_{2t} + \epsilon_{jt}$ for $j=J/2+1,...,J$. In this case, $\boldsymbol{\mu}_0 = (\mu_{1,0}, {\mu}_{2,0}) = (1,0)$. Therefore, the goal of the SC method is to set positive weights only to units  $j=1,...,J/2$, which would imply that the asymptotic distribution of $\hat \alpha_{0t}$ for $t \in \mathcal{T}_1$ does not depend on the common factors.    The common factors are normally distributed with a serial correlation equal to  0.5 and variance equal to 1; $\lambda_{1t}$ and $\lambda_{2t}$ are independent. The idiosyncratic shocks $\epsilon_{jt}$ are iid normally distributed with variance equal to 1.

Columns 1 to 4 of  Table \ref{Table_MC} present results for the SC method. Panel A considers a setting with $T_0 = J+5 $, so the number of pre-treatment periods and the number of control units are roughly of the same size. When the number of control units is small ($J=4$ or $J=10$), there is distortion in the proportion of weights allocated to the control units that follow the same common factor as the treated unit. For example, when there are 10 control units, around 82\% of the weights are correctly allocated, while around 18\% of the weights are misallocated. When $J$ and $T_0$ increase, the proportion of misallocated weights goes to zero, which is consistent with Proposition \ref{Convergence_SC}. Interestingly, the standard error of $\widehat{{\boldsymbol{\mu}}}_{\mbox{\tiny SC}}$ goes to zero when $J$ increases, even when $J$ and $T_0$ remain roughly at the same size.   Moreover, the standard error of the treatment effect one period ahead, $\hat \alpha_{01}^{\mbox{\tiny SC}}$, converges to the standard deviation of the idiosyncratic shocks ($\sqrt{var(\epsilon_{jt})}=1$), which is consistent with  Corollary \ref{corollary}. We find similar results when $T_0 = 2 \times J$ (columns 1 to 4, Panel B). 

\begin{center}

[Table \ref{Table_MC} here]

\end{center}

Columns 5 to 8 of  Table \ref{Table_MC} present results using OLS to estimate the weights. In this case, $\mathbb{E}[\hat \mu_{1,0}^{\mbox{\tiny OLS} }] < 1$ when $J$ is small, due to the endogeneity generated by the idiosyncratic shocks of the control units. When $J$ increases, however, $\mathbb{E}[\hat \mu_{1,0}^{\mbox{\tiny OLS} }] \rightarrow 1$, which is consistent with Propositions \ref{Prop_OLS_K_large} and \ref{Prop_OLS_K_finite}. However, differently from the SC weights,  the standard error of $\hat \mu_{1,0}^{\mbox{\tiny OLS} } $  does not go to zero, and remains roughly constant when $J$  increases but $J$ and $T_0$ remains roughly at the same size (Panel A). In contrast, when  $T_0 - J$ increases (Panel B), then the standard error of $\hat \mu_{1,0}^{\mbox{\tiny OLS} } $ goes to zero. The standard error of $\hat \alpha_{01}^{\mbox{\tiny OLS}}$ diverge with $J$ when $T_0 = J+5$. In contrast, it is decreasing with $J$ when   $T_0 = 2 \times J$, although it never reaches the standard error of $\hat \alpha_{01}^{\mbox{\tiny SC}}$. These results are  consistent with the simple example presented in Section \ref{Section_OLS}.  

When weights are estimated with OLS using only the adding-up constraint, results are similar to the unrestricted OLS. The only difference is that  $\mathbb{E}[\hat \mu_{2,0}^{\mbox{\tiny OLS} }] = 0 $ regardless of $J$ when we consider the unrestricted OLS. This happens because $\mu_{2,0} = 0$, so there is no endogeneity problem for this parameter when we consider the unrestricted OLS. In contrast, there is distortion in $\hat \mu_{2,0}$ when we include the restriction that weights should sum one (see columns 9 to 12 of  Table \ref{Table_MC}).

In Appendix \ref{Appendix_covariates}, we present Monte Carlo simulations considering a setting with covariates and different SC specifications.

\section{Conclusion}
\label{conclusion}

We provide conditions under which the SC estimator is asymptotically unbiased when both the number  of pre-treatment periods and the number of control units increase. This will be the case when, as the number of control units goes to infinity, there are weights diluted among an increasing number of control units that  asymptotically recover the factor loadings of the treated unit.  Under this condition, the SC estimator can be asymptotically unbiased even when treatment assignment is correlated with time-varying unobserved confounders. 

We show that the non-negative and adding-up constraints are crucial for this result, as they provide regularization for cases in which the number of parameters to be estimated is larger than the number of pre-treatment periods. Without these constraints, the estimator for the treatment effect remains asymptotically unbiased, but it will generally have a larger  variance, unless the number of pre-treatment periods is much larger than then number of control units. 

Overall, our results extend the set of possible applications in which the SC estimator can be reliably used. While the original SC papers recommend that the method should only be used in applications that present a good pre-treatment fit for a long series of pre-treatment periods, we show that, under some conditions, it can still allow for time-varying unobserved confounders even when the pre-treatment fit is imperfect.  In this case, however, researchers would have to evaluate the plausibility of the conditions we present in this paper. Observing that the SC weights are diluted among a large number of control units in a given application provides supportive evidence that these conditions hold, although it would not be a sufficient condition for the validity of the method. 
In this case, we need  that  possible time-varying unobserved confounders that may be correlated with treatment assignment also affect a large number of control units, so that a weighted average of the control units with diluted weights could absorb such effects. In this case, the SC estimator would be asymptotically unbiased when both the number of pre-treatment periods and the number of control units increase, even in settings where we should not expect to have a good pre-treatment fit.

  \pagebreak

\bigskip
\begin{center}
{\large\bf SUPPLEMENTARY MATERIAL}
\end{center}

\appendix

 \section{Appendix}
 
 \subsection{Proof of main results}

For a generic $m \times n$ matrix $\mathbf{A}$, define $\norm{\mathbf{A}}_2 =\sqrt{ \sum_{p=1}^m \sum_{q=1}^n \left| a_{pq} \right|^2}$, $\norm{\mathbf{A}}_\infty =\underset{p \in \{ 1,...,m\}, q \in \{ 1,...,n\}}{\mbox{max}} \left\{\left| a_{pq} \right| \right\}$, and $\norm{\mathbf{A}}_1 ={ \sum_{p=1}^m \sum_{q=1}^n \left| a_{pq} \right|}$.

 \subsubsection{Proof of Proposition \ref{Convergence_SC}} \label{Proof_Convergence_SC}

 \begin{proposition_b}{\ref{Convergence_SC}} 
 
Suppose we observe $(y_{0t},...,y_{Jt})$ for periods $t \in \{ -T_0+1,...,-1,0,1,...,T_1 \}$, where $J$ is a function of $T_0$. Potential outcomes are defined in equation (\ref{model}). Let $\widehat{{\boldsymbol{\mu}}}_{\mbox{\tiny SC}}$ be defined as ${\mathbf{M}_J}' \widehat{{\mathbf{w}}}_{\mbox{\tiny SC}}$, where $\widehat{{\mathbf{w}}}_{\mbox{\tiny SC}}$ is defined in equation $(\ref{SC_eq})$. Suppose Assumptions \ref{Assumption_e} to \ref{Assumption_lambda}, and Assumption \ref{Assumption_technical}(a) hold. Then, as $T_0 \rightarrow \infty$, (i) $\widehat{{\boldsymbol{\mu}}}_{\mbox{\tiny SC}} \buildrel p \over \rightarrow \boldsymbol{\mu}_0$, and (ii) $ \frac{1}{T_0} \sum_{t \in \mathcal{T}_0} \left( y_{0t} -\widehat{\mathbf{w}}_{\mbox{\tiny SC}} ' {\mathbf{y}}_t  \right)^2  \buildrel p \over \rightarrow \sigma_0^2$. Moreover, if we add Assumption \ref{Assumption_technical}(b), then (iii) $\norm{\widehat{{\mathbf{w}}}_{\mbox{\tiny SC}}}_2 \buildrel p \over \rightarrow 0$.

\end{proposition_b}

 \begin{proof}

 We start by extending the function $\mathcal{H}_{J}(\boldsymbol{\mu})$ to the domain $  \mathcal{M} = \mbox{cl} \left( \cup_{J \in \mathbb{N}} \mathcal{M}_J \right)$, where  $\mbox{cl} \left(\mathcal{A}\right)$ is the closure of set $\mathcal{A}$. For $\boldsymbol{\mu} \in \mathcal{M}$, we define
\begin{eqnarray} \label{Definition_H_tilde}
\widetilde{\mathcal{H}}_{J} (\boldsymbol{\mu})  =  \underset{ \widetilde{\boldsymbol{\mu}} \in \mathcal{M}_J }{\mbox{min}} \left\{  \mathcal{H}_J (\widetilde{\boldsymbol{\mu}}) + K \norm{\boldsymbol{\mu} - \widetilde{\boldsymbol{\mu}}}_2  \right\},
\end{eqnarray} 
where we define later in the proof what $K$ is. For now, it suffices to consider that $K>0$ almost surely  and that  $K=O_p(1)$.  Since for any $\boldsymbol{\mu} \in \mathcal{M} \setminus\mathcal{M}_J$ there is an $\boldsymbol{\mu}' \in \mathcal{M}_J$ such that $\widetilde{\mathcal{H}}_{J} ( \boldsymbol{\mu}') < \widetilde{\mathcal{H}}_{J} (\boldsymbol{\mu}) $, it follows that $\widehat{{\boldsymbol{\mu}}}_{\mbox{\tiny SC}} =  {\mathbf{M}_J}' {\widehat{\mathbf{w}}}_{\mbox{\tiny SC}} = \underset{ \boldsymbol{\mu} \in \mathcal{M}_J  }{\mbox{argmin}} \mathcal{H}_{J}(\boldsymbol{\mu}) = \underset{ \boldsymbol{\mu} \in \mathcal{M}  }{\mbox{argmin}} \widetilde{\mathcal{H}}_{J}(\boldsymbol{\mu})$.\footnote{Even if the minimization problem presented in equation \ref{eq_objective} has multiple solutions, we would still have that the set of solutions of the two minimization problems would be the same.  } Therefore, we can analyze the behavior of the implied estimator for the factor loadings of the treated unit considering the objective function $\widetilde{\mathcal{H}}_{J} (\boldsymbol{\mu})  $.  

Define the function $\sigma_{\bar \lambda}^2(\boldsymbol{\mu}) \equiv  (\boldsymbol{\mu}_0 - \boldsymbol{\mu})' \boldsymbol{\Omega} (\boldsymbol{\mu}_0 - \boldsymbol{\mu}) + \sigma^2_0$, where $\boldsymbol{\Omega}$ is positive definite by Assumption \ref{Assumption_lambda}. Therefore, $\sigma_{\bar \lambda}^2(\boldsymbol{\mu}) $ is  uniquely minimized at $\boldsymbol{\mu}_0$.  We first show that $\widetilde{\mathcal{H}}_{J} (\boldsymbol{\mu}_0) $ is bounded from above by a term that converges in probability to  $\sigma_{ \bar \lambda}^2( \boldsymbol{\mu}_0)$. Consider the $\mathbf{w}^\ast_J$ defined in Assumption \ref{Assumption_mu}, and let $\boldsymbol{\mu}_J^\ast \equiv  {\mathbf{M}_J}'\mathbf{w}^\ast_J$. Then, since $\boldsymbol{\mu}_J^\ast  \in \mathcal{M}_J$, and since  $\mathbf{w}^\ast_J$ is a candidate solution for the minimization problem defined in $ \mathcal{H}_J (\boldsymbol{\mu}^\ast_J)$, it follows that
\begin{eqnarray}
\widetilde{\mathcal{H}}_{J} (\boldsymbol{\mu}_0) & \leq &\mathcal{H}_J(\boldsymbol{\mu}^\ast_J) + K \norm{\boldsymbol{\mu}_0-\boldsymbol{\mu}_J^\ast}_2 \leq \frac{1}{T_0} \sum_{t \in \mathcal{T}_0} \left( \bar {\lambda}_t(\boldsymbol{\mu}^\ast_J) - ({\boldsymbol{\epsilon}}_t'\mathbf{w}^\ast_J)  \right)^2 + K \norm{\boldsymbol{\mu}_0-\boldsymbol{\mu}_J^\ast}_2 \\ \nonumber
&=& \frac{1}{T_0} \sum_{t \in \mathcal{T}_0} \bar {\lambda}_t(\boldsymbol{\mu}^\ast_J)^2 + \frac{1}{T_0} \sum_{t \in \mathcal{T}_0} ({\boldsymbol{\epsilon}}_t'\mathbf{w}^\ast_J)^2 - 2\frac{1}{T_0} \sum_{t \in \mathcal{T}_0} \bar {\lambda}_t(\boldsymbol{\mu}^\ast_J) ({\boldsymbol{\epsilon}}_t'\mathbf{w}^\ast_J)  +  K \norm{\boldsymbol{\mu}_0-\boldsymbol{\mu}_J^\ast}_2 \\ \nonumber
&\equiv& \widetilde{\mathcal{H}}_{J}^{UB} (\boldsymbol{\mu}_0).
\end{eqnarray} 

The first term of $\widetilde{\mathcal{H}}_{J}^{UB} (\boldsymbol{\mu}_0)$ equals to 
\begin{eqnarray}
 \frac{1}{T_0} \sum_{t \in \mathcal{T}_0} \bar {\lambda}_t(\boldsymbol{\mu}^\ast_J)^2 =   (\boldsymbol{\mu}_0 - \boldsymbol{\mu}_J^\ast)' \left(\frac{1}{T_0} \sum_{t \in \mathcal{T}_0} \boldsymbol{\lambda}_t' \boldsymbol{\lambda}_t \right)(\boldsymbol{\mu}_0 - \boldsymbol{\mu}_J^\ast) + 2 (\boldsymbol{\mu}_0 - \boldsymbol{\mu}_J^\ast) ' \frac{1}{T_0} \sum_{t \in \mathcal{T}_0} \boldsymbol{\lambda}_t' \epsilon_{0t} + \frac{1}{T_0} \sum_{t \in \mathcal{T}_0}  \epsilon^2_{0t},
\end{eqnarray} 
where $(\boldsymbol{\mu}_0 - \boldsymbol{\mu}_J^\ast)' \left(\frac{1}{T_0} \sum_{t \in \mathcal{T}_0} \boldsymbol{\lambda}_t' \boldsymbol{\lambda}_t \right)(\boldsymbol{\mu}_0 - \boldsymbol{\mu}_J^\ast) = o_p(1)$ because $(\boldsymbol{\mu}_0 - \boldsymbol{\mu}_J^\ast) = o(1)$ and $\left(\frac{1}{T_0} \sum_{t \in \mathcal{T}_0} \boldsymbol{\lambda}_t' \boldsymbol{\lambda}_t \right)= O(1)$, $\left| (\boldsymbol{\mu}_0 - \boldsymbol{\mu}_J^\ast) ' \frac{1}{T_0} \sum_{t \in \mathcal{T}_0} \boldsymbol{\lambda}_t' \epsilon_{0t}\right| \leq \norm{\boldsymbol{\mu}_0 - \boldsymbol{\mu}_J^\ast}_1 \norm{\frac{1}{T_0} \sum_{t \in \mathcal{T}_0} \boldsymbol{\lambda}_t' \epsilon_{0t}}_\infty$, where $\norm{\boldsymbol{\mu}_0 - \boldsymbol{\mu}_J^\ast}_1=o(1)$ from Assumption \ref{Assumption_mu} and  $\norm{\frac{1}{T_0} \sum_{t \in \mathcal{T}_0} \boldsymbol{\lambda}_t' \epsilon_{0t}}_\infty = o_p(1)$ from Assumption \ref{Assumption_technical}(a), and $\frac{1}{T_0} \sum_{t \in \mathcal{T}_0}  \epsilon^2_{0t}  \buildrel p \over \rightarrow \sigma^2_0$. Therefore, $ \frac{1}{T_0} \sum_{t \in \mathcal{T}_0} \bar {\lambda}_t(\boldsymbol{\mu}^\ast_J)^2  \buildrel p \over \rightarrow \sigma_{\bar \lambda}^2(\boldsymbol{\mu}_0)$.  

For the term $\frac{1}{T_0} \sum_{t \in \mathcal{T}_0} ({\boldsymbol{\epsilon}}_t'\mathbf{w}^\ast_J)^2$, note that  $var({\boldsymbol{\epsilon}}_t'\mathbf{w}^\ast_J) \leq \bar \gamma \norm{\mathbf{w}^\ast_J}_2^2$, where $\bar \gamma = \mbox{sup}_{j,t}  \left \{var(\epsilon_{jt}) \right\} < \infty$ since $\epsilon_{jt}$ has uniformly bounded fourth moments. Therefore,
\begin{eqnarray} \label{eq_w_star}
\frac{1}{T_0} \sum_{t \in \mathcal{T}_0} ({\boldsymbol{\epsilon}}_t'\mathbf{w}^\ast_J)^2 \leq \bar \gamma \norm{\mathbf{w}^\ast_J}_2^2 \frac{1}{T_0} \sum_{t \in \mathcal{T}_0} \left( \frac{{\boldsymbol{\epsilon}}_t'\mathbf{w}^\ast_J}{\sqrt{var({\boldsymbol{\epsilon}}_t'\mathbf{w}^\ast_J)}} \right)^2.
\end{eqnarray} 

If we define $z_t \equiv \left( \frac{{\boldsymbol{\epsilon}}_t'\mathbf{w}^\ast_J}{\sqrt{var({\boldsymbol{\epsilon}}_t'\mathbf{w}^\ast_J)}} \right)^2 - 1$, then $\mathbb{E}[z_t] = 0 $. Moreover,
\begin{eqnarray*}
var(z_t) &=& \mathbb{E}\left[\left( \frac{{\boldsymbol{\epsilon}}_t'\mathbf{w}^\ast_J}{\sqrt{var({\boldsymbol{\epsilon}}_t'\mathbf{w}^\ast_J)}} \right)^4 \right] - 2 \mathbb{E}\left[\left( \frac{{\boldsymbol{\epsilon}}_t'\mathbf{w}^\ast_J}{\sqrt{var({\boldsymbol{\epsilon}}_t'\mathbf{w}^\ast_J)}} \right)^2 \right] + 1 =\frac{\mathbb{E}[({\boldsymbol{\epsilon}}_t'\mathbf{w}^\ast_J)^4]}{(var({\boldsymbol{\epsilon}}_t'\mathbf{w}^\ast_J))^2} - 1.
\end{eqnarray*}

Now note that $\mathbb{E}[({\boldsymbol{\epsilon}}_t'\mathbf{w}^\ast_J)^4] = \sum_{p,q,r,s} \mathbb{E}[\epsilon_{pt}\epsilon_{qt}\epsilon_{rt}\epsilon_{st}]w^\ast_{p}w^\ast_{q}w^\ast_{r}w^\ast_{s} = \sum_{p,q} \mathbb{E}[\epsilon^2_{pt}\epsilon^2_{qt}](w^\ast_{p})^2(w^\ast_{q})^2$, where the last equality follows from $\epsilon_{it}$ independent across $i$, and $\mathbb{E}[\epsilon_{it}]=0$. Now given that $\epsilon_{it}$ has uniformly bounded fourth moments across $i$ and $t$, we can define $\bar \xi = \mbox{sup}_{i,t} \{\mathbb{E}[\epsilon^4_{it}] \} < \infty$. It follows that $\mathbb{E}[({\boldsymbol{\epsilon}}_t'\mathbf{w}^\ast_J)^4] \leq \mbox{max} \left\{  \bar \xi , \bar \gamma^2 \right\}  \sum_{p,q}(w^\ast_{p})^2(w^\ast_{q})^2 = \mbox{max} \left\{ \bar \xi , \bar \gamma^2 \right\}  \norm{\mathbf{w}^\ast_J}_2^4$. For the denominator, if we define $\underline \gamma = \mbox{inf}_{j,t}  \left \{var(\epsilon_{jt}) \right\} >0$, then $(var({\boldsymbol{\epsilon}}_t'\mathbf{w}^\ast_J))^2 \geq \underline \gamma^2 \norm{\mathbf{w}^\ast_J}_2^4$. Combining these two results, we have that $var(z_t)$ is uniformly bounded. It follows  from \cite{Andrews1988} that $\frac{1}{T_0} \sum_{t \in \mathcal{T}_0} z_t  \buildrel p \over \rightarrow 0$, which implies that $\frac{1}{T_0} \sum_{t \in \mathcal{T}_0} \left( \frac{{\boldsymbol{\epsilon}}_t'\mathbf{w}^\ast_J}{\sqrt{var({\boldsymbol{\epsilon}}_t'\mathbf{w}^\ast_J)}} \right)^2  \buildrel p \over \rightarrow 1$.  Since $\norm{\mathbf{w}^\ast_J}_2^2 \rightarrow 0 $ by Assumption \ref{Assumption_mu}, it follows that this term is $o_p(1)$.  

The term $\frac{1}{T_0} \sum_{t \in \mathcal{T}_0} \bar {\lambda}_t(\boldsymbol{\mu}^\ast_J) ({\boldsymbol{\epsilon}}_t'\mathbf{w}^\ast_J)$  is given by $(\boldsymbol{\mu}_0 - \boldsymbol{\mu}^\ast_J)' \frac{1}{T_0} \sum_{t \in \mathcal{T}_0} \boldsymbol{\lambda}_t'  ({\boldsymbol{\epsilon}}_t'\mathbf{w}^\ast_J) + \frac{1}{T_0} \sum_{t \in \mathcal{T}_0} \epsilon_{0t}  ({\boldsymbol{\epsilon}}_t'\mathbf{w}^\ast_J)$. Note that
\begin{eqnarray}
\norm{\frac{1}{T_0} \sum_{t \in \mathcal{T}_0} \boldsymbol{\lambda}_t'  ({\boldsymbol{\epsilon}}_t'\mathbf{w}^\ast_J)}_1 = \sum_{f=1}^F \left| \frac{1}{T_0} \sum_{t \in \mathcal{T}_0} {\lambda}_{ft}  ({\boldsymbol{\epsilon}}_t'\mathbf{w}^\ast_J) \right|  \leq  \sum_{f=1}^F \norm{\frac{1}{T_0} \sum_{t \in \mathcal{T}_0} {\lambda}_{ft}  {\boldsymbol{\epsilon}}_t}_\infty \norm{\mathbf{w}^\ast_J}_1 \leq \sum_{f=1}^F\norm{\frac{1}{T_0} \sum_{t \in \mathcal{T}_0} \lambda_{ft}  {\boldsymbol{\epsilon}}_t}_\infty,
\end{eqnarray}
where the first inequality follows from H\"{o}lder's inequality, and the second one follows from $\mathbf{w}^\ast_J \in \Delta^{J-1}$. From Assumption \ref{Assumption_technical}(a), we have  $ \norm{\frac{1}{T_0} \sum_{t \in \mathcal{T}_0} \lambda_{ft}  {\boldsymbol{\epsilon}}_t}_\infty = o_p(1)$. Since ${\boldsymbol{\mu}_0 - \boldsymbol{\mu}^\ast_J}$ is bounded, it follows that $(\boldsymbol{\mu}_0 - \boldsymbol{\mu}^\ast_J)' \frac{1}{T_0} \sum_{t \in \mathcal{T}_0} \boldsymbol{\lambda}_t'  ({\boldsymbol{\epsilon}}_t'\mathbf{w}^\ast_J)=o_p(1)$. Also, note that $\epsilon_{0t}  ({\boldsymbol{\epsilon}}_t'\mathbf{w}^\ast_J)$ has zero mean and uniformly bounded variance, given that $\epsilon_{0t} $ and $ ({\boldsymbol{\epsilon}}_t'\mathbf{w}^\ast_J)$ have mean zero and uniformly bounded variance, and they are independent. Since we also have that $\epsilon_{0t}  ({\boldsymbol{\epsilon}}_t'\mathbf{w}^\ast_J)$ is $\alpha$-mixing, it follows from \cite{Andrews1988} that  $\frac{1}{T_0} \sum_{t \in \mathcal{T}_0} \epsilon_{0t}  ({\boldsymbol{\epsilon}}_t'\mathbf{w}^\ast_J)=o_p(1)$.  Finally, the fourth term is $o_p(1)$ because $K=O_p(1)$ and $\norm{\boldsymbol{\mu}_0-\boldsymbol{\mu}_J^\ast}_2 = o(1)$. Therefore, $\widetilde{\mathcal{H}}_{J} (\boldsymbol{\mu}_0)  \leq  \widetilde{\mathcal{H}}_{J}^{UB} (\boldsymbol{\mu}_0)  \buildrel p \over \rightarrow \sigma_{\bar \lambda}^2(\boldsymbol{\mu}_0)$.

We show next that $\widetilde{\mathcal{H}}_{J} (\boldsymbol{\mu}) $ is bounded from below by a function that converges uniformly in probability to  $\sigma_{\bar \lambda}^2(\boldsymbol{\mu})$. We have that 
\begin{eqnarray}
\widetilde{\mathcal{H}}_{J} (\boldsymbol{\mu})  \geq  \underset{ \widetilde{\boldsymbol{\mu}} \in \mathcal{M} }{\mbox{min}} \left\{ {\underset{{\mathbf{b}}  \in \mathcal{W}}{\mbox{min}} \left \{   \frac{1}{T_0} \sum_{t \in \mathcal{T}_0} ( \bar {\lambda}_t(\widetilde{\boldsymbol{\mu}}) -   {\mathbf{b}} ' {\boldsymbol{\epsilon}}_t )  ^2 \right \}} + K \norm{\boldsymbol{\mu} - \widetilde{\boldsymbol{\mu}}}_2  \right\} \equiv \widetilde{\mathcal{H}}_{J}^{LB} (\boldsymbol{\mu}),
\end{eqnarray} 
where $\mathcal{W} = \{ \mathbf{w} \in \mathbb{R}^J | \norm{\mathbf{w}}_1 \leq 1  \}$. This inequality holds because we are relaxing three  constraints in the minimization problems from  $\widetilde{\mathcal{H}}_{J} ({\boldsymbol{\mu}})$. We consider $\widetilde{\boldsymbol{\mu}} \in \mathcal{M} \supset \mathcal{M}_J$, we relax the condition ${\mathbf{M}_J}'{\mathbf{w}} = \widetilde{\boldsymbol{\mu}}$, and we consider a set $\mathcal{W} \supset \Delta^{J-1}$.  

We first show that the function $Q_J(\boldsymbol{\mu}) \equiv {\underset{{\mathbf{b}}  \in \mathcal{W}}{\mbox{min}} \left \{   \frac{1}{T_0} \sum_{t \in \mathcal{T}_0} ( \bar {\lambda}_t(\boldsymbol{\mu}) -   {\mathbf{b}} ' {\boldsymbol{\epsilon}}_t )  ^2 \right \}}$ is Lipschitz with a constant that is $O_p(1)$. Let ${\widehat{\mathbf{b}}}(\boldsymbol{\mu})$ be the solution to this minimization problem for a given $\boldsymbol{\mu}$. Since $\mathcal{M}$ is convex, from the mean value and the envelope theorems, it follows that, for any $\boldsymbol{\mu}$ and $\boldsymbol{\mu}' \in \mathcal{M}$, there is a $\widetilde{\boldsymbol{\mu}} \in \mathcal{M}$ such that
\begin{eqnarray} \label{eq_lipschitz}
|Q_J (\boldsymbol{\mu}) - Q_J (\boldsymbol{\mu}') | &=& \left|  \frac{2}{T_0} \sum_{t \in \mathcal{T}_0} \boldsymbol{\lambda}_t' [\boldsymbol{\lambda}_t(\boldsymbol{\mu}_0 - \widetilde{\boldsymbol{\mu}}) + \epsilon_{0t} -   {\widehat{\mathbf{b}}(\widetilde{\boldsymbol{\mu}})} ' {\boldsymbol{\epsilon}}_t ]  \cdot (\boldsymbol{\mu} - \boldsymbol{\mu}') \right|\\ \nonumber
&\leq& \norm{\frac{2}{T_0} \sum_{t \in \mathcal{T}_0} \boldsymbol{\lambda}_t' [\boldsymbol{\lambda}_t(\boldsymbol{\mu}_0 - \widetilde{\boldsymbol{\mu}}) + \epsilon_{0t} -   {\widehat{\mathbf{b}}(\widetilde{\boldsymbol{\mu}})} ' {\boldsymbol{\epsilon}}_t ] }_2 \times \norm{\boldsymbol{\mu} - \boldsymbol{\mu}'}_2 \\ \nonumber
&\leq&  \left(\norm{ \sum_{t \in \mathcal{T}_0} \frac{\boldsymbol{\lambda}_t' \boldsymbol{\lambda}_t}{T_0}}_2 \norm{\boldsymbol{\mu}_0 - \widetilde{\boldsymbol{\mu}}}_2 +\norm{\sum_{t \in \mathcal{T}_0} \frac{\boldsymbol{\lambda}_t' \epsilon_{0t}}{T_0}}_2 +  \norm{\sum_{t \in \mathcal{T}_0} \frac{ \boldsymbol{\lambda}_t' {\boldsymbol{\epsilon}}_t' }{T_0} \widehat{\mathbf{b}}(\widetilde{\boldsymbol{\mu}})}_2  \right) \times \norm{\boldsymbol{\mu} - \boldsymbol{\mu}'}_2. 
 \end{eqnarray}

Since $\norm{\boldsymbol{\mu}_0 - \widetilde{\boldsymbol{\mu}}}_2 \leq C$ for some constant $C$  (Assumption \ref{Assumption_mu}), and $ \sum_{t \in \mathcal{T}_0} \frac{\boldsymbol{\lambda}_t' \boldsymbol{\lambda}_t}{T_0} \rightarrow \boldsymbol{\Omega}$ (Assumption \ref{Assumption_lambda}), we have that  $\norm{ \sum_{t \in \mathcal{T}_0} \frac{\boldsymbol{\lambda}_t' \boldsymbol{\lambda}_t}{T_0}}_2 \norm{\boldsymbol{\mu}_0 - \widetilde{\boldsymbol{\mu}}}_2 \leq C \norm{ \sum_{t \in \mathcal{T}_0} \frac{\boldsymbol{\lambda}_t' \boldsymbol{\lambda}_t}{T_0}}_2  = O_p(1)$. From Assumption \ref{Assumption_technical}(a),   $\norm{\sum_{t \in \mathcal{T}_0} \frac{\boldsymbol{\lambda}_t' \epsilon_{0t}}{T_0}}_2=o_p(1)$. Finally,  $\norm{\sum_{t \in \mathcal{T}_0} \frac{ \boldsymbol{\lambda}_t' {\boldsymbol{\epsilon}}_t' }{T_0} \widehat{\mathbf{b}}(\widetilde{\boldsymbol{\mu}})}_2 \leq \sum_{f=1}^F \left| \sum_{t \in \mathcal{T}_0} \frac{ \lambda_{ft} {\boldsymbol{\epsilon}}_t' }{T_0} \widehat{\mathbf{b}}(\widetilde{\boldsymbol{\mu}}) \right| \leq \sum_{f=1}^F \norm{\sum_{t \in \mathcal{T}_0} \frac{ \lambda_{ft} {\boldsymbol{\epsilon}}_t }{T_0}}_\infty \norm{\widehat{\mathbf{b}}(\widetilde{\boldsymbol{\mu}})}_1 \leq \sum_{f=1}^F \norm{\sum_{t \in \mathcal{T}_0} \frac{ \lambda_{ft} {\boldsymbol{\epsilon}}_t }{T_0}}_\infty = o_p(1)$ from Assumption \ref{Assumption_technical}(a). Therefore, $|Q_J (\boldsymbol{\mu}) - Q_J (\boldsymbol{\mu}') | \leq \tilde K \norm{\boldsymbol{\mu} - \boldsymbol{\mu}'}_2$, where $\tilde K=O_p(1)$ and does not depend on $\boldsymbol{\mu}$ and $\boldsymbol{\mu}'$. We define $K$ used in  function $\widetilde{\mathcal{H}}_{J} (\boldsymbol{\mu}) $ as $K = 1+\tilde K$, so $K>0$ and $K=O_p(1)$. Given that $K$ is greater than the Lipschitz constant of $Q_J (\boldsymbol{\mu})$, we  have that  $\widetilde{\mathcal{H}}_{J}^{LB} (\boldsymbol{\mu})  = Q_J (\boldsymbol{\mu})$ for all $\boldsymbol{\mu} \in \mathcal{M}$. Therefore, $\widetilde{\mathcal{H}}_{J}^{LB} (\boldsymbol{\mu}) $ is Lipschitz with a constant $O_p(1)$.

Now we show that $\widetilde{\mathcal{H}}_{J}^{LB} (\boldsymbol{\mu})  \buildrel p \over \rightarrow \sigma_{\bar \lambda}^2(\boldsymbol{\mu})$ pointwise.  Note that $ \frac{1}{T_0} \sum_{t \in \mathcal{T}_0} ( \bar {\lambda}_t(\boldsymbol{\mu}) -   {\mathbf{b}} ' {\boldsymbol{\epsilon}}_t )^2 =  \frac{1}{T_0} \sum_{t \in \mathcal{T}_0} ( \boldsymbol{\lambda}_t(\boldsymbol{\mu}_0 - \boldsymbol{\mu}) )^2 +  \frac{2}{T_0} \sum_{t \in \mathcal{T}_0}( \boldsymbol{\lambda}_t(\boldsymbol{\mu}_0 - \boldsymbol{\mu}) ) (\epsilon_{0t} -  {\mathbf{b}} ' {\boldsymbol{\epsilon}}_t ) +  \frac{1}{T_0} \sum_{t \in \mathcal{T}_0} (\epsilon_{0t} -  {\mathbf{b}} ' {\boldsymbol{\epsilon}}_t )^2 $. We define  $\widehat{\mathbf{c}} \in \underset{\mathbf{b} \in \mathcal{W}}{\mbox{argmin}} \left\{\frac{1}{T_0} \sum_{t \in \mathcal{T}_0} (\epsilon_{0t} -  {\mathbf{b}} ' {\boldsymbol{\epsilon}}_t )^2\right\}$. Since $\mathcal{W}$ is compact, we can also define $\widehat{\mathbf{d}} \in \underset{\mathbf{b} \in \mathcal{W}}{\mbox{argmin}} \left\{\frac{2}{T_0} \sum_{t \in \mathcal{T}_0}( \boldsymbol{\lambda}_t(\boldsymbol{\mu}_0 - \boldsymbol{\mu}) ) (\epsilon_{0t} -  {\mathbf{b}} ' {\boldsymbol{\epsilon}}_t )\right\}$. Therefore,
\begin{eqnarray} \label{Appendix_inequality}
\frac{1}{T_0} \sum_{t \in \mathcal{T}_0} ( \boldsymbol{\lambda}_t(\boldsymbol{\mu}_0 - \boldsymbol{\mu}) )^2 +  \frac{2}{T_0} \sum_{t \in \mathcal{T}_0}( \boldsymbol{\lambda}_t(\boldsymbol{\mu}_0 - \boldsymbol{\mu}) ) (\epsilon_{0t} -  {\widehat{\mathbf{d}}} ' {\boldsymbol{\epsilon}}_t ) +  \frac{1}{T_0} \sum_{t \in \mathcal{T}_0} (\epsilon_{0t} -  {\widehat{\mathbf{c}}} ' {\boldsymbol{\epsilon}}_t )^2 \leq Q_J(\boldsymbol{\mu}) \leq \\
\leq \frac{1}{T_0} \sum_{t \in \mathcal{T}_0} ( \boldsymbol{\lambda}_t(\boldsymbol{\mu}_0 - \boldsymbol{\mu}) )^2 +  \frac{2}{T_0} \sum_{t \in \mathcal{T}_0}( \boldsymbol{\lambda}_t(\boldsymbol{\mu}_0 - \boldsymbol{\mu}) ) (\epsilon_{0t} -  {\widehat{\mathbf{c}}} ' {\boldsymbol{\epsilon}}_t ) +  \frac{1}{T_0} \sum_{t \in \mathcal{T}_0} (\epsilon_{0t} -  {\widehat{\mathbf{c}}} ' {\boldsymbol{\epsilon}}_t )^2,
\end{eqnarray}
where the first inequality holds because we give more flexibility in the minimization problem in $Q_J(\boldsymbol{\mu})$ by allowing different parameters to minimize the second and the third terms. The second inequality holds because $\widehat{\mathbf{c}}$ is a candidate solution to the minimization problem in $Q_J(\boldsymbol{\mu})$. Note that $\frac{1}{T_0} \sum_{t \in \mathcal{T}_0} ( \boldsymbol{\lambda}_t(\boldsymbol{\mu}_0 - \boldsymbol{\mu}) )^2 \rightarrow (\boldsymbol{\mu}_0 -\boldsymbol{\mu})' \boldsymbol{\Omega} (\boldsymbol{\mu}_0 - \boldsymbol{\mu})$ and $ \frac{2}{T_0} \sum_{t \in \mathcal{T}_0}( \boldsymbol{\lambda}_t(\boldsymbol{\mu}_0 - \boldsymbol{\mu}) ) (\epsilon_{0t} -  {{\mathbf{b}}} ' {\boldsymbol{\epsilon}}_t )=o_p(1)$ regardless of ${\mathbf{b}}$. Therefore, we only have to show that $\frac{1}{T_0} \sum_{t \in \mathcal{T}_0} (\epsilon_{0t} -  {\widehat{\mathbf{c}}} ' {\boldsymbol{\epsilon}}_t )^2  \buildrel p \over \rightarrow \sigma^2_0$ to conclude that $Q_J(\boldsymbol{\mu})  \buildrel p \over \rightarrow  \sigma^2_{\bar \lambda}(\boldsymbol{\mu})$.

We essentially apply Lemma 2  from \cite{Chernozhukov}.  Since $\widehat{\mathbf{c}} \in \mathcal{W}$ is the argmin of $ \left\{\frac{1}{T_0} \sum_{t \in \mathcal{T}_0} (\epsilon_{0t} -  {\mathbf{b}} ' {\boldsymbol{\epsilon}}_t )^2\right\}$, and $0 \in \mathcal{W}$,  it follows that  $\norm{\boldsymbol{\epsilon}_0-{\mathbf{E}\widehat{\mathbf{c}}}}_2^2 \leq \norm{\boldsymbol{\epsilon}_0}_2^2$, where $\mathbf{E}$ is the $T_0 \times J$ matrix with information on $\epsilon_{jt}$ for all $j=1,...,J$ and $t \in \mathcal{T}_0$, and $\boldsymbol{\epsilon}_0$ is the $T_0$ vector with information on $\epsilon_{0t}$ for all $t \in \mathcal{T}_0$.   Therefore, equivalently to equation 40 from \cite{Chernozhukov}, we have the inequality
\begin{eqnarray} \label{inequality_chernozhukov}
\frac{1}{T_0} \norm{{\mathbf{E}\widehat{\mathbf{c}}}}_2^2 \leq  \frac{1}{T_0} 2 {\boldsymbol{\epsilon}_0'{\mathbf{E}\widehat{\mathbf{c}}}} \leq \frac{1}{T_0}  2 \norm{\mathbf{E}'\boldsymbol{\epsilon}_0 }_\infty \norm{\widehat{\mathbf{c}}}_1 \leq \frac{1}{T_0}  2 \norm{\mathbf{E}'\boldsymbol{\epsilon}_0 }_\infty.
\end{eqnarray}

From Assumption \ref{Assumption_technical}(a), $\frac{1}{T_0}  2 \norm{\mathbf{E}'\boldsymbol{\epsilon}_0 }_\infty=o_p(1)$, which implies that $\frac{1}{T_0} \norm{{\mathbf{E}\widehat{\mathbf{c}}}}_2^2=o_p(1)$ and $ \frac{1}{T_0} 2 {\boldsymbol{\epsilon}_0'{\mathbf{E}\widehat{\mathbf{c}}}}=o_p(1)$. Therefore, it follows that  $\widetilde{\mathcal{H}}^{LB}_{T_0}(\boldsymbol{\mu})  \buildrel p \over \rightarrow \sigma_{ \bar \lambda}^2(\boldsymbol{\mu})$.   Since $\sigma_{\bar \lambda}^2(\boldsymbol{\mu})$ is continuous and $\mathcal{M}$ is compact, then,  based on  Corollary 2.2 of \cite{Newey}, we have that $ \mathcal{H}^{LB}_{T_0}(\boldsymbol{\mu})$ converges uniformly in probability to $\sigma_{\bar \lambda}^2(\boldsymbol{\mu})$.

Combining the results above that (i) $\widetilde{\mathcal{H}}_{J} (\boldsymbol{\mu}_0)  \leq  \widetilde{\mathcal{H}}_{J}^{UB} (\boldsymbol{\mu}_0)  \buildrel p \over \rightarrow \sigma_{\bar \lambda}^2(\boldsymbol{\mu}_0)$, and (ii) $\widetilde{\mathcal{H}}_{J} (\boldsymbol{\mu})  \geq  \widetilde{\mathcal{H}}_{J}^{LB} (\boldsymbol{\mu})  \buildrel p \over \rightarrow \sigma_{\bar \lambda}^2(\boldsymbol{\mu})$ uniformly in $\boldsymbol{\mu} \in \mathcal{M}$, we show that $\widehat{\boldsymbol{\mu}}_{\mbox{\tiny SC}}  \buildrel p \over \rightarrow \boldsymbol{\mu}_0$. This is a simple extension of Theorem 2.1 from \cite{NM}. For a given $\eta>0$, since $\widehat{\boldsymbol{\mu}}_{\mbox{\tiny SC}} = \underset{ \boldsymbol{\mu} \in \mathcal{M}  }{\mbox{argmin}} \widetilde{\mathcal{H}}_{J}(\boldsymbol{\mu})$, $\widetilde{\mathcal{H}}_{J} (\widehat{\boldsymbol{\mu}}_{\mbox{\tiny SC}}) < \widetilde{\mathcal{H}}_{J} (\boldsymbol{\mu}_0) + \frac{\eta}{3}$ with probability approaching one (wpa1). From (ii), we have that  $\sigma_{\bar \lambda}^2(\boldsymbol{\mu}) <  \widetilde{\mathcal{H}}_{J}^{LB} (\boldsymbol{\mu}) + \frac{\eta}{3} \leq  \widetilde{\mathcal{H}}_{J} (\boldsymbol{\mu}) + \frac{\eta}{3}$ for all $\boldsymbol{\mu}$ wpa1. Combining (i) and (ii), we have that $ \widetilde{\mathcal{H}}_{J} (\boldsymbol{\mu}_0)  \buildrel p \over \rightarrow \sigma_{\bar \lambda}^2(\boldsymbol{\mu}_0)$, which implies that $ \widetilde{\mathcal{H}}_{J} (\boldsymbol{\mu}_0) < \sigma_{\bar \lambda}^2(\boldsymbol{\mu}_0) + \frac{\eta}{3}$ wpa1. Combining these three inequalities, we have that $\sigma_{\bar \lambda}^2(\widehat{\boldsymbol{\mu}}_{\mbox{\tiny SC}}) < \sigma_{\bar \lambda}^2(\boldsymbol{\mu}_0) +\eta$ wpa1. Now let $\mathcal{V}$ be any open subset of $\mathcal{M}$ containing $\boldsymbol{\mu}_0$. Since $\mathcal{M} \cap \mathcal{V}^C$ is compact, $\boldsymbol{\mu}_0 =  \underset{ \boldsymbol{\mu} \in \mathcal{M}  }{\mbox{argmin}} \sigma_{\bar \lambda}^2(\boldsymbol{\mu})$, and $\sigma_{\bar \lambda}^2(\boldsymbol{\mu})$ is continuous, then $\underset{ \boldsymbol{\mu} \in \mathcal{M}\cap \mathcal{V}^C  }{\mbox{inf}} \sigma_{\bar \lambda}^2(\boldsymbol{\mu}) = \sigma_{\bar \lambda}^2(\boldsymbol{\mu}^\ast) >\sigma_{\bar \lambda}^2(\boldsymbol{\mu}_0)$ for some $\boldsymbol{\mu}^\ast \in  \mathcal{M}\cap \mathcal{V}^C $. Let $\eta = \sigma_{\bar \lambda}^2(\boldsymbol{\mu}^\ast) - \sigma_{\bar \lambda}^2(\boldsymbol{\mu}_0)$. Then, wpa1, $\sigma_{\bar \lambda}^2(\widehat{\boldsymbol{\mu}}_{\mbox{\tiny SC}}) < \sigma_{\bar \lambda}^2(\boldsymbol{\mu}^\ast)$, which implies that $\widehat{\boldsymbol{\mu}}_{\mbox{\tiny SC}} \in \mathcal{V}$. Therefore, $\widehat{\boldsymbol{\mu}}_{\mbox{\tiny SC}}  \buildrel p \over \rightarrow \boldsymbol{\mu}_0$.

Now we prove the second result from Proposition \ref{Convergence_SC}, that  $\frac{1}{T_0} \sum_{t \in \mathcal{T}_0} \left( y_{0t} -\widehat{\mathbf{w}}_{\mbox{\tiny SC}} ' {\mathbf{y}}_t  \right)^2  \buildrel p \over \rightarrow  \sigma_{\bar \lambda}^2(\boldsymbol{\mu}_0) = \sigma^2_0$. Since $\boldsymbol{\mu}_J^\ast  \in \mathcal{M}_J$, and since  $\mathbf{w}^\ast_J$ is a candidate solution for the minimization problem defined in $ \mathcal{H}_J (\boldsymbol{\mu}^\ast_J)$, it follows that
\begin{eqnarray}
\widetilde{\mathcal{H}}_{J} (\widehat{\boldsymbol{\mu}}_{\mbox{\tiny SC}}) & \leq &\mathcal{H}_J(\boldsymbol{\mu}^\ast_J) + K \norm{\widehat{\boldsymbol{\mu}}_{\mbox{\tiny SC}}-\boldsymbol{\mu}_J^\ast}_2 \leq \frac{1}{T_0} \sum_{t \in \mathcal{T}_0} \left( \bar {\lambda}_t(\boldsymbol{\mu}^\ast_J) - ({\boldsymbol{\epsilon}}_t'\mathbf{w}^\ast_J)  \right)^2 + K \norm{\widehat{\boldsymbol{\mu}}_{\mbox{\tiny SC}}-\boldsymbol{\mu}_J^\ast}_2. \nonumber
\end{eqnarray} 

Following the same arguments as above, $\frac{1}{T_0} \sum_{t \in \mathcal{T}_0} \left( \bar {\lambda}_t(\boldsymbol{\mu}^\ast_J) - ({\boldsymbol{\epsilon}}_t'\mathbf{w}^\ast_J)  \right)^2  \buildrel p \over \rightarrow \sigma^2_{\bar \lambda}(\boldsymbol{\mu}_0)$. Moreover, since $\widehat{\boldsymbol{\mu}}_{\mbox{\tiny SC}}  \buildrel p \over \rightarrow \boldsymbol{\mu}_0$ and $\boldsymbol{\mu}^\ast_J \rightarrow \boldsymbol{\mu}_0$, it follows that $ K \norm{\widehat{\boldsymbol{\mu}}_{\mbox{\tiny SC}} -\boldsymbol{\mu}_J^\ast}_2 \buildrel p \over \rightarrow 0 $. Therefore,  $\widetilde{\mathcal{H}}_{J} (\widehat{\boldsymbol{\mu}}_{\mbox{\tiny SC}}) $ is bounded from above by a term that converges in probability to $\sigma^2_{\bar \lambda}(\boldsymbol{\mu}_0)$. Since we also have that   $\widetilde{\mathcal{H}}_{J} (\boldsymbol{\mu}) $ is bounded from below by a function that converges uniformly in $\boldsymbol{\mu} \in \mathcal{M}$ to the continuous function $\sigma^2_{\bar \lambda}(\boldsymbol{\mu})$, it follows that $\widetilde{\mathcal{H}}_{J} (\widehat{\boldsymbol{\mu}}_{\mbox{\tiny SC}}) = \frac{1}{T_0} \sum_{t \in \mathcal{T}_0} \left( y_{0t} -\widehat{\mathbf{w}}_{\mbox{\tiny SC}} ' {\mathbf{y}}_t  \right)^2  \buildrel p \over \rightarrow \sigma_{\bar \lambda}^2(\boldsymbol{\mu}_0) = \sigma^2_0$.

Finally, we consider the third result, that  $\norm{\widehat{\mathbf{w}}_{\mbox{\tiny SC}}}_2 \buildrel p \over \rightarrow 0$. We first show that  $\frac{1}{T_0} \sum_{t \in \mathcal{T}_0} \left(\boldsymbol{\epsilon}_t ' \widehat{\mathbf{w}}_{\mbox{\tiny SC}} \right)^2 \buildrel p \over \rightarrow 0$. Let $\boldsymbol{\Lambda}$ be the $(T_0 \times F)$ matrix with rows $\boldsymbol{\lambda}_t$.  Since $\widehat{\mathbf{w}}_{\mbox{\tiny SC}}$ is the argmin of  equation \ref{eq_objective}, it follows that $\norm{\boldsymbol{\epsilon}_0  + \boldsymbol{\Lambda}(\boldsymbol{\mu}_0 - \widehat{\boldsymbol{\mu}}_{\mbox{\tiny SC}}) - \mathbf{E}  \widehat{\mathbf{w}}_{\mbox{\tiny SC}}}_2^2 \leq \norm{\boldsymbol{\epsilon}_0  + \boldsymbol{\Lambda} (\boldsymbol{\mu}_0 - \boldsymbol{\mu}^\ast_J) - \mathbf{E}  \mathbf{w}^\ast_J}_2^2 $, which implies
\begin{eqnarray}
\norm{\mathbf{E}  \widehat{\mathbf{w}}_{\mbox{\tiny SC}}}_2^2 &\leq & \norm{\mathbf{E}   \widehat{\mathbf{w}}_{\mbox{\tiny SC}}}_2^2 + \norm{\boldsymbol{\Lambda}(\boldsymbol{\mu}_0 - \widehat{\boldsymbol{\mu}}_{\mbox{\tiny SC}})}_2^2  \leq 2 \left| \boldsymbol{\epsilon}_0' \boldsymbol{\Lambda} (\boldsymbol{\mu}_0-\widehat{\boldsymbol{\mu}}_{\mbox{\tiny SC}})  \right| + 2 \left| \boldsymbol{\epsilon}_0' \boldsymbol{\Lambda} (\boldsymbol{\mu}_0- \boldsymbol{\mu}_J^\ast)  \right| \\
&& +  2 \left| (\boldsymbol{\mu}_0-\widehat{\boldsymbol{\mu}}_{\mbox{\tiny SC}})' \boldsymbol{\Lambda}' \mathbf{E}  \widehat{\mathbf{w}}_{\mbox{\tiny SC}} \right| + 2 \left| (\boldsymbol{\mu}_0- \boldsymbol{\mu}_J^\ast)' \boldsymbol{\Lambda}' \mathbf{E}  \mathbf{w}_J^\ast    \right|  + 2 \left| \boldsymbol{\epsilon}_0' \mathbf{E}  \widehat{\mathbf{w}}_{\mbox{\tiny SC}} \right| + \\
&&+ 2 \left| \boldsymbol{\epsilon}_0' \mathbf{E}  \mathbf{w}_J^\ast \right| +  \norm{\mathbf{E}  \mathbf{w}_J^\ast}_2^2 +  \norm{\boldsymbol{\Lambda}(\boldsymbol{\mu}_0 -  \boldsymbol{\mu}_J^\ast)}_2^2.
\end{eqnarray}

We show that all the terms on the right hand side of the above equation are $o_p(1)$ when divided by $T_0$. For any  $\boldsymbol{\mu}$, $2 \left| \boldsymbol{\epsilon}_0' \boldsymbol{\Lambda} (\boldsymbol{\mu}_0- \boldsymbol{\mu})  \right| \leq \norm{\boldsymbol{\epsilon}_0' \boldsymbol{\Lambda}}_\infty \norm{\boldsymbol{\mu}_0- \boldsymbol{\mu}}_1 \leq c\norm{\boldsymbol{\epsilon}_0' \boldsymbol{\lambda}}_\infty$, where $\frac{1}{T_0}\norm{\boldsymbol{\epsilon}_0' \boldsymbol{\lambda}}_\infty = o_p(1) $ from Assumption \ref{Assumption_technical}(a). We also have  $ \left| (\boldsymbol{\mu}_0-\widehat{\boldsymbol{\mu}}_{\mbox{\tiny SC}})' \boldsymbol{\Lambda}' \mathbf{E}  \widehat{\mathbf{w}}_{\mbox{\tiny SC}} \right| \leq \norm{\boldsymbol{\mu}_0-\widehat{\boldsymbol{\mu}}_{\mbox{\tiny SC}}}_2 \norm{\boldsymbol{\Lambda}' \mathbf{E}  \widehat{\mathbf{w}}_{\mbox{\tiny SC}}}_2 \leq c\norm{\boldsymbol{\Lambda}' \mathbf{E}  \widehat{\mathbf{w}}_{\mbox{\tiny SC}}}_1 \leq c\norm{\boldsymbol{\Lambda}' \mathbf{E} }_\infty \norm{\widehat{\mathbf{w}}_{\mbox{\tiny SC}}}_1 \leq c\norm{\boldsymbol{\Lambda}' \mathbf{E} }_\infty$, where $\frac{1}{T_0}\norm{\boldsymbol{\Lambda}' \mathbf{E} }_\infty=o_p(1)$ from Assumption \ref{Assumption_technical}(a). Likewise, $\frac{1}{T_0}\left| \boldsymbol{\epsilon}_0' \mathbf{E}  \widehat{\mathbf{w}}_{\mbox{\tiny SC}} \right| =o_p(1)$ because $\frac{1}{T_0}\norm{\boldsymbol{\epsilon}_0' \mathbf{E} }_\infty=o_p(1)$ from Assumption \ref{Assumption_technical}(a). Moreover, from equation \ref{eq_w_star}, $\frac{1}{T_0} \sum_{t \in \mathcal{T}_0} ({\boldsymbol{\epsilon}}_t'\mathbf{w}^\ast_J)^2 = o_p(1)$.  Finally,
\begin{eqnarray}
\frac{1}{T_0} \norm{\boldsymbol{\Lambda}(\boldsymbol{\mu}_0 -  \boldsymbol{\mu}_J^\ast)}_2^2 = (\boldsymbol{\mu}_0 - \boldsymbol{\mu}_J^\ast)' \left( \frac{1}{T_0} \sum_{t \in \mathcal{T}_0} \boldsymbol{\lambda}_t'\boldsymbol{\lambda}_t \right)(\boldsymbol{\mu}_0 - \boldsymbol{\mu}_J^\ast)= o(1).
\end{eqnarray}

Combining all these results, we have $\frac{1}{T_0} \sum_{t \in \mathcal{T}_0} \left(\boldsymbol{\epsilon}_t ' \widehat{\mathbf{w}}_{\mbox{\tiny SC}} \right)^2 \buildrel p \over \rightarrow 0$. Now suppose $\norm{\widehat{\mathbf{w}}_{\mbox{\tiny SC}}}_2$ does not converge in probability to zero. Since $\mathbf{w}  \in \Delta^{J-1}$, this would imply that there is a constant $b$ such that $P\left( \mbox{max}_i \left\{ \widehat w_i^2\right\} > b  \right) $ does not converge to zero. Note that this would not be true if we considered $\mathbf{w} \in \mathbb{R}^{J}$. However, given the restriction $\mathbf{w}  \in \Delta^{J-1}$, we have that $\norm{\mathbf{w}}_2^2 \leq { \mbox{max}_i \left\{ w_i \right\} + \mbox{max}_i \left\{ w_i^2\right\}}$. In this case, for infinitely many $T_0$, we have with probability greater than some $\xi >0$,
\begin{eqnarray}
\frac{1}{T_0} \sum_{t \in \mathcal{T}_0} \left(\boldsymbol{\epsilon}_t ' \widehat{\mathbf{w}}_{\mbox{\tiny SC}} \right)^2  &=& \frac{1}{T_0}\sum_{t \in \mathcal{T}_0}  \sum_{i=1}^J {\epsilon}_{it}^2 \widehat{{w}}_{i}^2 +  \frac{1}{T_0}\sum_{t \in \mathcal{T}_0}  \sum_{i \neq j} {\epsilon}_{it}{\epsilon}_{jt} \widehat{{w}}_{i}\widehat{{w}}_{j} \geq \\ 
& \geq & b \underset{1 \leq i \leq J}{\mbox{min}} \left\{ \frac{1}{T_0}\sum_{t \in \mathcal{T}_0}  {\epsilon}_{it}^2 \right\} +  \frac{1}{T_0}\sum_{t \in \mathcal{T}_0}  \sum_{i \neq j} {\epsilon}_{it}{\epsilon}_{jt} \widehat{{w}}_{i}\widehat{{w}}_{j}.
\end{eqnarray}

From Assumption \ref{Assumption_technical}(b), $b \underset{1 \leq i \leq J}{\mbox{min}} \left\{ \frac{1}{T_0}\sum_{t \in \mathcal{T}_0}  {\epsilon}_{it}^2 \right\} > bc$ with probability $1-o(1)$. Finally, note that $\widehat{{w}}_{i}\widehat{{w}}_{j} > 0$ and $\sum_{i \neq j} \widehat{{w}}_{i}\widehat{{w}}_{j} < 1$. Therefore,
\begin{eqnarray}
\left | \frac{1}{T_0}\sum_{t \in \mathcal{T}_0}  \sum_{i \neq j} {\epsilon}_{it}{\epsilon}_{jt} \widehat{{w}}_{i}\widehat{{w}}_{j} \right| \leq \underset{1 \leq i, j \leq J, i \neq j}{\mbox{max}} \left\{ \left| \frac{1}{T_0} \sum_{t \in \mathcal{T}_0}  \epsilon_{it} \epsilon_{jt} \right| \right\}=o_p(1),
\end{eqnarray}
from Assumption \ref{Assumption_technical}(b). Combining these results, this contradicts  $\frac{1}{T_0} \sum_{t \in \mathcal{T}_0} \left(\boldsymbol{\epsilon}_t ' \widehat{\mathbf{w}}_{\mbox{\tiny SC}} \right)^2 =o_p(1)$, which implies that  $\norm{\widehat{\mathbf{w}}_{\mbox{\tiny SC}}}_2 \buildrel p \over \rightarrow 0$. \end{proof}

\subsubsection{Proof of Corollary \ref{corollary}}
\label{Proof_corollary}

\begin{corollary_b}{\ref{corollary}} 

Suppose all assumptions for Proposition \ref{Convergence_SC} are satisfied, and that, for all $t \in \mathcal{T}_1$, $\epsilon_{it}$ is independent from $\{\epsilon_{i\tau}\}_{\tau \in \mathcal{T}_0}$. Then, for any $t \in \mathcal{T}_1$, $\hat  \alpha_{0t}^{\mbox{\tiny SC}}  \buildrel p \over \rightarrow \alpha_{0t} + \epsilon_{0t}$ when $T_0 \rightarrow \infty$.
 
\end{corollary_b}

\begin{proof}
Note that, for $t \in \mathcal{T}_1$, $\hat  \alpha_{0t}^{\mbox{\tiny SC}} =\alpha_{0t} + \epsilon_{0t} +  \boldsymbol{\lambda}_t (\boldsymbol{\mu}_0 - \widehat{{\boldsymbol{\mu}}}_{\mbox{\tiny SC}}) - \boldsymbol{\epsilon}_t' \widehat{{\mathbf{w}}}_{\mbox{\tiny SC}}$. Since we consider $  \boldsymbol{\lambda}_t $ fixed, it follows from Proposition \ref{Convergence_SC}(i) that $ \boldsymbol{\lambda}_t (\boldsymbol{\mu}_0 - \widehat{{\boldsymbol{\mu}}}_{\mbox{\tiny SC}})  \buildrel p \over \rightarrow 0$. Now under the assumptions from Corollary \ref{corollary}, it follows that $\boldsymbol{\epsilon}_t$ for $t \in \mathcal{T}_1$ is independent of $\widehat{{\mathbf{w}}}_{\mbox{\tiny SC}}$. In this case, $\mathbb{E}\left[ \left(\boldsymbol{\epsilon}_t' \widehat{{\mathbf{w}}}_{\mbox{\tiny SC}}  \right)^2  \right] \leq \left[ \mbox{sup}_{1 \leq i \leq J}  \mathbb{E} \left[\epsilon_{it}^2  \right]\right]  \mathbb{E} \left[ \norm{{\widehat{\mathbf{w}}}_{\mbox{\tiny SC}} }_2^2  \right]$, where the first term is $O(1)$ given Assumption \ref{Assumption_e}, and $\mathbb{E} \left[ \norm{{\widehat{\mathbf{w}}}_{\mbox{\tiny SC}} }_2^2  \right] \rightarrow 0$ from Proposition \ref{Convergence_SC}(iii) and $\norm{\widehat{{\mathbf{w}}}}_2^2$ bounded. Combining these results, $\hat  \alpha_{0t}^{\mbox{\tiny SC}}  \buildrel p \over \rightarrow \alpha_{0t} + \epsilon_{0t}$ when $T_0 \rightarrow \infty$. 
\end{proof}

\subsubsection{Proof of Proposition \ref{Prop_OLS_K_large}}
\label{Proof_OLS_large}

\begin{proposition_b}{\ref{Prop_OLS_K_large}}
Suppose we observe $(y_{0t},...,y_{Jt})$ for periods $t \in \{ -T_0+1,...,-1,0,1,...,T_1 \}$, where $J$ is a function of $T_0$. Potential outcomes are defined in equation (\ref{model}).  Let  $\widehat{\boldsymbol{\mu}}_{\mbox{\tiny OLS}}$  be defined as ${\mathbf{M}_J}' \widehat{{\mathbf{b}}}_{\mbox{\tiny OLS}}$, where $\widehat{{\mathbf{b}}}_{\mbox{\tiny OLS}}$ is defined in equation $(\ref{OLS_eq})$. Assume that $J/T_0 \rightarrow c \in [0,1)$, and that Assumptions \ref{Assumption_e}, \ref{Assumption_mu_ols}, and  \ref{Assumption_lambda_ols} hold. Then, when  $T_0 \rightarrow \infty$,  $\widehat{\boldsymbol{\mu}}_{\mbox{\tiny OLS}} \buildrel p \over \rightarrow \boldsymbol{\mu}_0$.

\end{proposition_b}

\begin{proof}
Given Assumption \ref{Assumption_mu_ols}, we can label the first $RF$ control units so that each block of $F$ control units is such that the $F \times F$ matrix of factor loadings for each of those blocks, $\boldsymbol{\mu}(p)$ for $p=1,...,R$, are invertible with uniformly bounded $\norm{\boldsymbol{\mu}(p)}_2$ and $\norm{(\boldsymbol{\mu}(p))^{-1}}_2$. The  $(J-RF) \times F$ matrix with the factor loadings  of the remaining $J-RF$ control units is defined as $\boldsymbol{\mu}(R+1)$. Therefore, $\mathbf{M}_J = [\boldsymbol{\mu}(1)' ~ \hdots ~ \boldsymbol{\mu}(R)' ~  \boldsymbol{\mu}(R+1)']'$. Likewise, let $\mathbf{y}_t(p)$ ($\boldsymbol{\epsilon}_t(p)$) be the $F \times 1$ vector of outcomes (errors) for the $F$ control units in block $p \in \{1,...,R\}$ at time $t$, while $\mathbf{y}_t(R+1)$  ($\boldsymbol{\epsilon}_t(R+1)$) is the same information for the remaining $J-RF$ control units. We also define $\mathbf{Y}(p)$ ($\mathbf{E}(p)$) as the $T_0 \times F$ matrix with the pre-treatment periods observations for the outcomes (errors) of control units in group $p$. These terms without the index in the parenthesis will refer to information on all $J$  control units. Finally, we define $\mathbf{y}_0$ ($\boldsymbol{\epsilon}_0$) as the $T_0 \times 1$ vector of pre-treatment outcomes (errors) of the treated unit. 

Under Assumption \ref{Assumption_mu_ols}, we can construct a $\boldsymbol{\beta}^\ast_J  = (\boldsymbol{\beta}^\ast_J(1)', ~ ... ~ ,\boldsymbol{\beta}^\ast_J(R)' , \boldsymbol{\beta}^\ast_J(R+1)')' \in \mathbb{R}^J$ such that ${\mathbf{M}_J}' \boldsymbol{\beta}^\ast_J = \sum_{p=1}^{R+1} \boldsymbol{\mu}(p)' \boldsymbol{\beta}^\ast_J(p) = \boldsymbol{\mu}_0$. For each $p=1,...,R$, let $\boldsymbol{\beta}^\ast_J(p) = \frac{1}{R} (\boldsymbol{\mu}(p)')^{-1}\boldsymbol{\mu}_0 $, and  $\boldsymbol{\beta}^\ast_J(R+1)=0$. In this case, $\boldsymbol{\beta}^\ast_J$ satisfies $\boldsymbol{\mu}_0 = {\mathbf{M}_J}' \boldsymbol{\beta}^\ast_J$, and we have that  $\norm{\boldsymbol{\beta}^\ast_J}_2 = o(1) $.  It follows that  $\mathbf{y}_{0} = \mathbf{Y}\boldsymbol{\beta}^\ast_J + \boldsymbol{\epsilon}_0 - \mathbf{E} \boldsymbol{\beta}^\ast_J$. We consider a change in variables so that we can focus on $\boldsymbol{\mu}_0$. Since $\boldsymbol{\mu}(1)$ is invertible, we have that  $\mathbf{y}_{0} = \mathbf{Y} \mathbf{H} \mathbf{H}^{-1} \boldsymbol{\beta}^\ast_J + \boldsymbol{\epsilon}_0 - \mathbf{E} \boldsymbol{\beta}^\ast_J$, where 
\begin{eqnarray}
\mathbf{H} = \left[  \begin{array}{ccccc} 
(\boldsymbol{\mu}(1)')^{-1} & -(\boldsymbol{\mu}(1)')^{-1}\boldsymbol{\mu}(2)' &   -(\boldsymbol{\mu}(1)')^{-1}\boldsymbol{\mu}(3)' & \hdots & -(\boldsymbol{\mu}(1)')^{-1}\boldsymbol{\mu}(R+1)' \\
0 &\mathbb{I}_F & 0 &... & 0  \\
\vdots  & & \ddots &  \\
0 & 0 & 0 & \hdots & \mathbb{I}_{J-RF} \\
 \end{array}  \right]
\end{eqnarray}

 and 
\begin{eqnarray}
\mathbf{H}^{-1} = \left[  \begin{array}{ccccc} 
\boldsymbol{\mu}(1)' & \boldsymbol{\mu}(2)' & \boldsymbol{\mu}(3)' &... & \boldsymbol{\mu}(R+1)'  \\
0 &\mathbb{I}_F & 0 &... & 0  \\
\vdots  & & \ddots &  \\
0 & 0 & 0 & \hdots & \mathbb{I}_{J-RF} \\
 \end{array}  \right],
\end{eqnarray}
where $ \mathbb{I}_{q}$ is a $q \times q$  identity matrix. Therefore, we have 
\begin{eqnarray}
\mathbf{y}_0 = \left[ \mathbf{Y}(1) (\boldsymbol{\mu}(1)')^{-1} \right] \boldsymbol{\mu}_0 + \sum_{p=2}^{R+1}\left[ \mathbf{Y}(p) -  \mathbf{Y}(1)(\boldsymbol{\mu}(1)')^{-1}\boldsymbol{\mu}(p)'  \right] \boldsymbol{\beta}^\ast_J(p) + \boldsymbol{\epsilon}_0 - \mathbf{E} \boldsymbol{\beta}^\ast_J.
\end{eqnarray}

Note that $\mathbf{Y}(p) -  \mathbf{Y}(1)(\boldsymbol{\mu}(1)')^{-1}\boldsymbol{\mu}(p) = \boldsymbol{\epsilon}(p) -  \boldsymbol{\epsilon}(1)(\boldsymbol{\mu}(1)')^{-1}\boldsymbol{\mu}(p)'$, which implies that  
\begin{eqnarray}
\mathbf{y}_0 = \left[ \mathbf{Y}(1) (\boldsymbol{\mu}(1)')^{-1} \right] \boldsymbol{\mu}_0 + \sum_{p=2}^{R+1}\left[ \boldsymbol{\epsilon}(p) -  \boldsymbol{\epsilon}(1)(\boldsymbol{\mu}(1)')^{-1}\boldsymbol{\mu}(p)'  \right] \boldsymbol{\beta}^\ast_J(p) + \boldsymbol{\epsilon}_0 - \mathbf{E} \boldsymbol{\beta}^\ast_J.
\end{eqnarray}

Now let $\widehat{{\mathbf{b}}}_{\mbox{\tiny OLS}}$ be the OLS estimator of $\mathbf{y}_0$ on $\mathbf{Y}$. Doing the same changes in variables as above, we have that
\begin{eqnarray}
\mathbf{y}_0 = \left[ \mathbf{Y}(1) (\boldsymbol{\mu}(1)')^{-1} \right] \widehat{\boldsymbol{\mu}}_{\mbox{\tiny OLS}} + \sum_{p=2}^{R+1}\left[ \boldsymbol{\epsilon}(p) -  \boldsymbol{\epsilon}(1)(\boldsymbol{\mu}(1)')^{-1}\boldsymbol{\mu}(p)'  \right] \widehat{{\mathbf{b}}}_{\mbox{\tiny OLS}}(p) + \widehat{\mathbf{u}},
\end{eqnarray}
where $ \widehat{\boldsymbol{\mu}}_{\mbox{\tiny OLS}}  = {\mathbf{M}_J}' \widehat{{\mathbf{b}}}_{\mbox{\tiny OLS}}$, and $ \widehat{\mathbf{u}}$ is the OLS residual from $\mathbf{y}_0$ on $\mathbf{Y}$.

Using Frisch-Waugh-Lovell theorem, we have that
\begin{eqnarray} \label{equation_ols_hat}
\widehat{\boldsymbol{\mu}}_{\mbox{\tiny OLS}} &=& \left( (\boldsymbol{\mu}(1))^{-1} \mathbf{Y}(1)' \mathbf{Q} \mathbf{Y}(1) (\boldsymbol{\mu}(1)')^{-1}\right)^{-1} \left( (\boldsymbol{\mu}(1))^{-1} \mathbf{Y}(1)' \mathbf{Q} \mathbf{y}_0 \right) \\
&=&   \boldsymbol{\mu}(1)'\left(  \mathbf{Y}(1)' \mathbf{Q} \mathbf{Y}(1) \right)^{-1} \left(  \mathbf{Y}(1)' \mathbf{Q} \mathbf{y}_0 \right)  \\
&=& \boldsymbol{\mu}_0 + \boldsymbol{\mu}(1)' \left(  \mathbf{Y}(1)' \mathbf{Q} \mathbf{Y}(1) \right)^{-1} \left(  \mathbf{Y}(1)' \mathbf{Q} \boldsymbol{\epsilon_0} \right) \\
&&- \boldsymbol{\mu}(1)'\left(  \mathbf{Y}(1)' \mathbf{Q} \mathbf{Y}(1) \right)^{-1} \left(  \mathbf{Y}(1)' \mathbf{Q}  \mathbf{E} \boldsymbol{\beta}^\ast_J \right) ,
\end{eqnarray}
where $\mathbf{Q}$ is the $(T_0 \times T_0)$ residual-maker matrix for a regression on $\left\{\boldsymbol{\epsilon}(p) -  \boldsymbol{\epsilon}(1)(\boldsymbol{\mu}(1)')^{-1}\boldsymbol{\mu}(p)' \right\}_{p=2}^{R+1}$.

We want to show that $\widehat{\boldsymbol{\mu}}_{\mbox{\tiny OLS}}  \buildrel p \over \rightarrow \boldsymbol{\mu}_0$. Consider first the term $  \mathbf{Y}(1)' \mathbf{Q} \mathbf{Y}(1)  =  \boldsymbol{\mu}(1) \boldsymbol{\Lambda}' \mathbf{Q} \boldsymbol{\Lambda} \boldsymbol{\mu}(1)'  + 2 \boldsymbol{\mu}(1)  \boldsymbol{\Lambda}' \mathbf{Q} \boldsymbol{\epsilon}(1) +  \boldsymbol{\epsilon}(1)' \mathbf{Q} \boldsymbol{\epsilon}(1)$.  Let $K = T_0 - J + F$. Since $J/T_0 \rightarrow c \in [0,1)$, we have that $K \rightarrow \infty$. Also, $rank(\mathbf{Q}) = K$. From Assumption \ref{Assumption_lambda_ols}, we have that $\norm{\frac{1}{\sqrt{K}}   \boldsymbol{\Lambda}' \mathbf{Q} }_2^2  = \mbox{tr} \left( \frac{1}{K}  \boldsymbol{\Lambda}' \mathbf{Q} \boldsymbol{\Lambda} \right) = O_p(1)$, which implies that $\norm{\frac{1}{\sqrt{K}} \boldsymbol{\mu}(1)  \boldsymbol{\Lambda}' \mathbf{Q} }_2^2 = \mbox{tr} \left( \frac{1}{K} \boldsymbol{\mu}(1) \boldsymbol{\Lambda}' \mathbf{Q} \boldsymbol{\Lambda} \boldsymbol{\mu}(1)'  \right)= O_p(1)$. Now consider the term $\mathbf{Q}\boldsymbol{\epsilon}(1)$. By definition of $\mathbf{Q}$, we have that $\mathbf{Q}(\boldsymbol{\epsilon}(p) -  \boldsymbol{\epsilon}(1)(\boldsymbol{\mu}(1)')^{-1}\boldsymbol{\mu}(p)') =0$, which implies $\mathbf{Q} \boldsymbol{\epsilon}(1) = \mathbf{Q} \boldsymbol{\epsilon}(p)(\boldsymbol{\mu}(p)'))^{-1}\boldsymbol{\mu}(1)'$ for all $p=1,..,R$. Therefore, $\mathbf{Q} \boldsymbol{\epsilon}(1) = \mathbf{Q} \frac{1}{R} \sum_{p=1}^R \boldsymbol{\epsilon}(p)(\boldsymbol{\mu}(p)')^{-1}\boldsymbol{\mu}(1)'$. Now define the $T_0 \times F$ matrix $\widetilde{\boldsymbol{\epsilon}}(p) \equiv \boldsymbol{\epsilon}(p)(\boldsymbol{\mu}(p)')^{-1}\boldsymbol{\mu}(1)'$, with elements $\tilde \epsilon_{ft}(p) =\mathbf{a}_{f}(p)' \boldsymbol{\epsilon}_t(p)$, where $\mathbf{a}_{f}(p)$ is an $F \times 1 $ given by the $f$-th column of $(\boldsymbol{\mu}(p)')^{-1}\boldsymbol{\mu}(1)'$. Given that $var(\epsilon_{it})$ is uniformly bounded by $\bar \gamma$, it follows that $var(\tilde \epsilon_{ft}(p)) \leq \bar \gamma \norm{\mathbf{a}_{f}(p)}_2^2$. Given Assumption \ref{Assumption_mu_ols},  $\norm{\mathbf{a}_{f}(p)}_2^2$ is uniformly bounded by an $\bar{\mathbf{a}}$, which implies that $var(\tilde \epsilon_{ft}(p)) \leq \bar \gamma \bar{\mathbf{a}}$. Now note that $\norm{\mathbf{Q}\frac{1}{R} \sum_{p=1}^R \widetilde{\boldsymbol{\epsilon}}(p)}_2^2 = \sum_{f=1}^F \norm{ \left[ \mathbf{Q}\frac{1}{R} \sum_{p=1}^R \widetilde{\boldsymbol{\epsilon}}(p) \right]_f}_2^2$, where, for a generic matrix $\mathbf{A}$, we define $[\mathbf{A}]_f$ as the $f$-th column of  $\mathbf{A}$. Note that $\norm{ \left[ \mathbf{Q}\frac{1}{R} \sum_{p=1}^R \widetilde{\boldsymbol{\epsilon}}(p) \right]_f}_2^2$ is the sum of  squared residual of the OLS regression of $\left[ \frac{1}{R} \sum_{p=1}^R \widetilde{\boldsymbol{\epsilon}}(p) \right]_f$ on $\left\{\boldsymbol{\epsilon}(p) -  \boldsymbol{\epsilon}(1)(\boldsymbol{\mu}(1)')^{-1}\boldsymbol{\mu}(p)' \right\}_{p=2}^{R+1}$.   Since $\mathbf{b} = 0 \in \mathbb{R}^{J-F}$ is a candidate solution for the OLS, we have
\begin{eqnarray} \nonumber
\norm{\frac{1}{\sqrt{K}} \left( \left[\mathbf{Q} \frac{1}{R} \sum_{p=1}^R \widetilde{\boldsymbol{\epsilon}}(p) \right]_f \right) }_2^2  &\leq& \frac{1}{K} \sum_{t \in \mathcal{T}_0}\left(\frac{1}{R} \sum_{p=1}^R \tilde \epsilon_{ft}(p) \right)^2  = \frac{1}{R} \frac{T_0}{K} \frac{1}{T_0} \sum_{t \in \mathcal{T}_0} \left(\frac{1}{\sqrt{R}} \sum_{p=1}^R \tilde \epsilon_{ft}(p) \right)^2 \\
&& \leq \frac{1}{R} \bar \gamma \bar{\mathbf{a}} \frac{T_0}{K} \frac{1}{T_0} \sum_{t \in \mathcal{T}_0} \left( \frac{ \frac{1}{\sqrt{R}} \sum_{p=1}^R \tilde \epsilon_{ft}(p)}{\sqrt{ var \left( \frac{1}{\sqrt{R}} \sum_{p=1}^R \tilde \epsilon_{ft}(p) \right)}} \right)^2.
\end{eqnarray}

Let $z_t = \left( \frac{ \frac{1}{\sqrt{R}} \sum_{p=1}^R \tilde \epsilon_{ft}(p)}{\sqrt{ var \left( \frac{1}{\sqrt{R}} \sum_{p=1}^R \tilde \epsilon_{ft}(p) \right)}} \right)^2$. By construction $\mathbb{E}[z_t] = 1$. Since  $\left(var \left( \frac{1}{\sqrt{R}} \sum_{p=1}^R \tilde \epsilon_{ft}(p) \right) \right)^2  \geq \underline{\gamma}^2 \left( \frac{1}{R} \sum_{p=1}^R \norm{\mathbf{a}_f (p)}_2^2 \right)^2$, and  $\mathbb{E} \left[ \left( \frac{1}{\sqrt{R}} \sum_{p=1}^R \tilde \epsilon_{ft}(p) \right)^4 \right] \leq \mbox{max} \{ \bar \gamma^2 , \bar \xi \} \left( \frac{1}{R} \sum_{p=1}^R \norm{\mathbf{a}_f (p)}_2^2 \right)^2 $, then $var(z_t)$ is uniformly bounded.\footnote{$\bar \xi$ is defined in the proof of Proposition \ref{Convergence_SC} as an upper bound for the fourth moment of $\epsilon_{it}$.} Therefore, $\frac{1}{T_0} \sum_{t \in \mathcal{T}_0} z_t \buildrel p \over \rightarrow 1$. Since $\frac{T_0}{K} \rightarrow \frac{1}{1-c}$, and    $R \rightarrow \infty$,  it follows that  $\frac{1}{\sqrt{K}}\mathbf{Q} \boldsymbol{\epsilon}(1) = o_p(1)$. Combining the results above, we have that  $ \frac{1}{K} \mathbf{Y}(1)' \mathbf{Q} \mathbf{Y}(1) = \boldsymbol{\mu}(1) \left[ \frac{1}{K}  \boldsymbol{\Lambda}' \mathbf{Q} \boldsymbol{\Lambda} \right] \boldsymbol{\mu}(1)'+ o_p(1)$. From Assumption \ref{Assumption_lambda_ols}, we have 
\begin{eqnarray}
\left( \boldsymbol{\mu}(1) \left[ \frac{1}{K}  \boldsymbol{\Lambda}' \mathbf{Q} \boldsymbol{\Lambda} \right] \boldsymbol{\mu}(1)' \right)^{-1}=(\boldsymbol{\mu}(1)')^{-1}  \left[ \frac{1}{K}  \boldsymbol{\Lambda}' \mathbf{Q} \boldsymbol{\Lambda} \right]^{-1} (\boldsymbol{\mu}(1))^{-1} = O_p(1), 
\end{eqnarray}
which implies that  $(\frac{1}{K}\mathbf{Y}' \mathbf{Q} \mathbf{Y})^{-1}=O_p(1)$.

Consider now $  \mathbf{Y}(1)' \mathbf{Q}  \mathbf{E} \boldsymbol{\beta}^\ast_J$. From the definition of $\mathbf{Q}$, we have that 
\begin{eqnarray} \nonumber \label{eq_Me}
\mathbf{Q} \mathbf{E} \boldsymbol{\beta}^\ast_J(p) &=& \mathbf{Q}\sum_{p=1}^{R+1} \boldsymbol{\epsilon}(p) \boldsymbol{\beta}^\ast_J(p)  =\sum_{p=1}^{R+1} \mathbf{Q} \boldsymbol{\epsilon}(1)(\boldsymbol{\mu}(1)')^{-1}\boldsymbol{\mu}(p)'\boldsymbol{\beta}^\ast_J(p)  \\
&=&  \mathbf{Q} \boldsymbol{\epsilon}(1)(\boldsymbol{\mu}(1)')^{-1} \sum_{p=1}^{R+1}\boldsymbol{\mu}(p)'\boldsymbol{\beta}^\ast_J (p)
= \mathbf{Q} \boldsymbol{\epsilon}(1)(\boldsymbol{\mu}(1)')^{-1} \boldsymbol{\mu}_0,
\end{eqnarray}
which implies that $\frac{1}{\sqrt{K}}\mathbf{Q} \mathbf{E} \boldsymbol{\beta}^\ast_J =o_p(1)$. Since $\frac{1}{\sqrt{K}}\mathbf{Q}\mathbf{Y}(1) =O_p(1)$, it follows that $\frac{1}{K} \mathbf{Y}(1)' \mathbf{Q}  \mathbf{E} \boldsymbol{\beta}^\ast_J=o_p(1)$.

Consider now $  \mathbf{Y}(1)' \mathbf{Q}  \boldsymbol{\epsilon}_0 = \boldsymbol{\mu}(1) \boldsymbol{\lambda}'  \mathbf{Q}  \boldsymbol{\epsilon}_0+\boldsymbol{\epsilon}(1)'\mathbf{Q}  \boldsymbol{\epsilon}_0$. Following the same arguments as above $\norm{\frac{1}{\sqrt{K}}\mathbf{Q}  \boldsymbol{\epsilon}_0}_2^2 \leq \frac{1}{K} \sum_{t \in \mathcal{T}_0} \epsilon_{0t}^2 = O_p(1)$, which implies  $\frac{1}{K} \boldsymbol{\epsilon}(1)'\mathbf{Q}  \boldsymbol{\epsilon}_0 = o_p(1)$. From Assumption \ref{Assumption_lambda_ols}, we have that $\frac{1}{K}  \boldsymbol{\lambda}'  \mathbf{Q}  \boldsymbol{\epsilon}_0 = o_p(1)$, which implies $ \frac{1}{K} \mathbf{Y}(1)' \mathbf{Q}  \boldsymbol{\epsilon}_0 =o_p(1)$. 

Combining all these results into equation \ref{equation_ols_hat}, we have that $\widehat{\boldsymbol{\mu}}_{\mbox{\tiny OLS}}  \buildrel p \over \rightarrow \boldsymbol{\mu}_0$.
\end{proof}

\subsubsection{Proof of Proposition \ref{Prop_OLS_K_finite}}
\label{Proof_OLS_finite}

\begin{proposition_b}{\ref{Prop_OLS_K_finite}}

Suppose we observe $(y_{0t},...,y_{Jt})$ for periods $t \in \{ -T_0+1,...,-1,0,1,...,T_1 \}$, where $J$ is a function of $T_0$. Potential outcomes are defined in equation (\ref{model}).  Let  $\widehat{\boldsymbol{\mu}}_{\mbox{\tiny OLS}}$  be defined as ${\mathbf{M}_J}' \widehat{{\mathbf{b}}}_{\mbox{\tiny OLS}}$, where $\widehat{{\mathbf{b}}}_{\mbox{\tiny OLS}}$ is defined in equation $(\ref{OLS_eq})$. Assume that $T_0 \geq J$, and that Assumptions \ref{Assumption_mu_ols} and \ref{Assumption_normal_ols} hold. Then,  when $T_0 \rightarrow \infty$, $\mathbb{E}[\widehat{\boldsymbol{\mu}}_{\mbox{\tiny OLS}}  - \boldsymbol{\mu}_0] \rightarrow 0$. 
\end{proposition_b}

\begin{proof}
Following the same steps as the proof of Proposition \ref{Prop_OLS_K_large}, we have that 
\begin{eqnarray} \label{equation_ols_hat} \nonumber
\widehat{\boldsymbol{\mu}}_{\mbox{\tiny OLS}} &=& \boldsymbol{\mu}_0 + \boldsymbol{\mu}(1)' \left(  \mathbf{Y}(1)' \mathbf{Q} \mathbf{Y}(1) \right)^{-1} \left(  \mathbf{Y}(1)' \mathbf{Q} \boldsymbol{\epsilon_0} \right) - \boldsymbol{\mu}(1)'\left(  \mathbf{Y}(1)' \mathbf{Q} \mathbf{Y}(1) \right)^{-1} \left(  \mathbf{Y}(1)' \mathbf{Q}  \mathbf{E} \boldsymbol{\beta}^\ast_J \right) ,
\end{eqnarray}
where $\mathbf{Q}$ is the $(T_0 \times T_0)$ residual-maker matrix for a regression on $\left\{\boldsymbol{\epsilon}(p) -  \boldsymbol{\epsilon}(1)(\boldsymbol{\mu}(1)')^{-1}\boldsymbol{\mu}(p)' \right\}_{p=2}^{R+1}$.

We want to show that $\mathbb{E} \left[\widehat{\boldsymbol{\mu}}_{\mbox{\tiny OLS}} - \boldsymbol{\mu}_0 \right] \rightarrow 0$. First, note that 
\begin{eqnarray} \nonumber
\mathbb{E} \left[ \boldsymbol{\mu}(1)' \left(  \mathbf{Y}(1)' \mathbf{Q} \mathbf{Y}(1) \right)^{-1}  \mathbf{Y}(1)' \mathbf{Q} \boldsymbol{\epsilon_0}  | \mathbf{Y} \right] &=&    \boldsymbol{\mu}(1)' \left( \mathbf{Y}(1)' \mathbf{Q} \mathbf{Y}(1) \right)^{-1}  \mathbf{Y}(1)' \mathbf{Q} \mathbb{E} \left[\boldsymbol{\epsilon_0}| \mathbf{Y} \right]  = 0.
\end{eqnarray}

Consider now the term $\boldsymbol{\mu}(1)'\left(  \mathbf{Y}(1)' \mathbf{Q} \mathbf{Y}(1) \right)^{-1} \left(  \mathbf{Y}(1)' \mathbf{Q}  \mathbf{E} \boldsymbol{\beta}^\ast_J \right)$. Note that $\mathbb{E}[\mathbf{E} | \mathbf{Y}] \neq 0$. Therefore, with finite $J$, the estimator is biased, which is consistent with the results from \cite{FP_SC}. We show that, in a setting in which $J \rightarrow \infty$, this bias goes to zero.  From equation \ref{eq_Me}, we have $
\mathbf{Q} \mathbf{E} \boldsymbol{\beta}^\ast_J = \mathbf{Q} \boldsymbol{\epsilon}(1)(\boldsymbol{\mu}(1)')^{-1} \boldsymbol{\mu}_0$.  Therefore, we need to consider the conditional expectation  $\mathbb{E}[\boldsymbol{\epsilon}(1) | \mathbf{Y}]$.

Given that $(\boldsymbol{\lambda}_t,\epsilon_{0t},\boldsymbol{\epsilon}_t)$ is iid multivariate normal (Assumption \ref{Assumption_normal_ols}), it follows that $\mathbb{E}[\boldsymbol{\epsilon}_t(1) | \mathbf{Y}] = \mathbb{E}[\boldsymbol{\epsilon}_t(1) | \mathbf{y}_t]$, where this conditional expectation is linear in $\mathbf{y}_t$. Using a change in variables, we can re-write this linear conditional expectation as $\mathbb{E}[\boldsymbol{\epsilon}_t(1) | \mathbf{y}_t] = \mathbf{B}^\ast \widetilde{\mathbf{y}}_t$, where $\widetilde{\mathbf{y}}_t = (\mathbf{y}_t(1)', ~ \mathbf{y}_t(2)' -  (\boldsymbol{\mu}(2)(\boldsymbol{\mu}(1))^{-1}\mathbf{y}_t(1))',\hdots,\mathbf{y}_t(R+1)' -   (\boldsymbol{\mu}(R+1)(\boldsymbol{\mu}(1))^{-1}\mathbf{y}_t(1))')'$, and $\mathbf{B}^\ast$ is an $F \times J$ matrix. The $j$-th row of matrix $\mathbf{B}^\ast$ is given by  $\mathbf{b}^\ast_j = \underset{\mathbf{b} \in \mathbb{R}^J}{\mbox{argmin}} \mathbb{E} \left[ \left( \epsilon_{jt} - \mathbf{b}' \widetilde{\mathbf{y}}_t \right)^2 \right] $. We show that the parameters associated with $\mathbf{y}_t(1)$, $\mathbf{b}_j^\ast(1)$, converge to zero when $J \rightarrow \infty$. For any $\widetilde{\mathbf{b}}_j$ such that  $\widetilde{\mathbf{b}}_j(1) \neq 0$, note that $\mathbb{E} \left[ \epsilon_{jt} - \widetilde{\mathbf{b}}_j' \widetilde{\mathbf{y}}_t \right]^2 \geq \widetilde{\mathbf{b}}_j(1)' \boldsymbol{\mu}(1) \mathbb{E}[\boldsymbol{\lambda}_t'\boldsymbol{\lambda}_t] \boldsymbol{\mu}(1)' \widetilde{\mathbf{b}}_j(1)>0$, since $\boldsymbol{\mu}(1) \mathbb{E}[\boldsymbol{\lambda}_t'\boldsymbol{\lambda}_t] \boldsymbol{\mu}(1)'$ is positive definite. Now note that 
\begin{eqnarray}
\left[ -\frac{1}{R} \boldsymbol{\mu}(1) (\boldsymbol{\mu}(p))^{-1} \right][\mathbf{y}_t(p) -  \boldsymbol{\mu}(p)(\boldsymbol{\mu}(1))^{-1}\mathbf{y}_t(1)] = \frac{1}{R} \left[\boldsymbol{\epsilon}_t(1) - \boldsymbol{\mu}(1) (\boldsymbol{\mu}(p))^{-1} \boldsymbol{\epsilon}_t(p) \right].
\end{eqnarray}

 Let  $\widetilde{\mathbf{a}}_j(p) $ be the $j$-th row of $-\boldsymbol{\mu}(1) (\boldsymbol{\mu}(p))^{-1} $. Then the $j$-th column of $\left[ -\frac{1}{R} \boldsymbol{\mu}(1) (\boldsymbol{\mu}(p))^{-1} \right][\mathbf{y}_t(p) -  \boldsymbol{\mu}(p)(\boldsymbol{\mu}(1))^{-1}\mathbf{y}_t(1)]$ is given by $\frac{1}{R} \left({\epsilon}_{jt} - \widetilde{\mathbf{a}}_j(p) \boldsymbol{\epsilon}_t(p) \right) $. Given Assumption \ref{Assumption_mu_ols}, $\norm{(\boldsymbol{\mu}(p))^{-1}}_2$ is uniformly bounded, for $p=1,...,R$. Therefore, $\norm{\widetilde{\mathbf{a}}_j(p)}_2^2$ is uniformly bounded, which implies that $var(\widetilde{\mathbf{a}}_j(p) \boldsymbol{\epsilon}_t(p))$ is uniformly bounded. Therefore, if we choose $\mathbf{b}$ with $-\frac{1}{R}\widetilde{\mathbf{a}}_j(p)$ in the $j$-entry of block $p$, and zero otherwise, we have  $\mathbb{E} \left[ \epsilon_{jt} - {\mathbf{b}}' \widetilde{\mathbf{y}}_t \right]^2 \rightarrow 0$ when $R \rightarrow \infty$. Therefore, it must be that $\mathbf{b}^\ast_j(1) \rightarrow 0$.

Back to equation \ref{eq_Me}, we have
\begin{eqnarray} 
\mathbf{Q} \mathbb{E}[\mathbf{E}  | \mathbf{Y}]\boldsymbol{\beta}^\ast_J &=& \mathbf{Q} \mathbb{E}[ \boldsymbol{\epsilon}(1)| \mathbf{Y}](\boldsymbol{\mu}(1)')^{-1} \boldsymbol{\mu}_0 = \mathbf{Q} \widetilde{\mathbf{Y}}(\mathbf{B}^\ast)'(\boldsymbol{\mu}(1)')^{-1}  \boldsymbol{\mu}_0 \\
& =&  \mathbf{Q} {\mathbf{Y}}(1) \mathbf{B}^\ast(1)(\boldsymbol{\mu}(1)')^{-1} \boldsymbol{\mu}_0 ,
\end{eqnarray}
where $\mathbf{B}^\ast(1)$ is the first $F$ columns of $\mathbf{B}^\ast$. The last equality follows from the definition of matrix $\mathbf{Q}$. Therefore,
\begin{eqnarray}
(\boldsymbol{\mu}(1)') \left(  \mathbf{Y}(1)' \mathbf{Q} \mathbf{Y}(1) \right)^{-1} \left(  \mathbf{Y}(1)' \mathbf{Q} \mathbb{E}[ \mathbf{E} | \mathbf{Y}] \boldsymbol{\beta}^\ast_J \right) = (\boldsymbol{\mu}(1)') \mathbf{B}^\ast(1)(\boldsymbol{\mu}(1)')^{-1} \boldsymbol{\mu}_0 \rightarrow 0.
\end{eqnarray}

Combining the results above, we have that $\mathbb{E}[\widehat{\boldsymbol{\mu}}_{\mbox{\tiny OLS}} - \boldsymbol{\mu}_0] \rightarrow 0$.
\end{proof}

\begin{rem}
\normalfont

In the proof of Proposition \ref{Prop_OLS_K_finite}  we have to consider the expectation of the term $\left(  \mathbf{Y}(1)' \mathbf{Q} \mathbf{Y}(1) \right)^{-1} \left(  \mathbf{Y}(1)' \mathbf{Q}  \mathbf{E} \boldsymbol{\beta}^\ast_J \right)$ conditional on $\widetilde{\mathbf{Y}}$, because this term involves a non-linear function of these variables.  We show that $\mathbb{E} \left[\left(  \mathbf{Y}(1)' \mathbf{Q} \mathbf{Y}(1) \right)^{-1} \left(  \mathbf{Y}(1)' \mathbf{Q}  \mathbf{E} \boldsymbol{\beta}^\ast_J \right) | \widetilde{\mathbf{Y}}    \right]$ equals a term that does not depend on $\widetilde{\mathbf{Y}} $, and converges to zero. Therefore, we also have that  the unconditional expectation $\mathbb{E} \left[\left(  \mathbf{Y}(1)' \mathbf{Q} \mathbf{Y}(1) \right)^{-1} \left(  \mathbf{Y}(1)' \mathbf{Q}  \mathbf{E} \boldsymbol{\beta}^\ast_J \right)     \right] \rightarrow 0$. If we consider a fixed sequence of $\boldsymbol{\Lambda}$, then this proof would not work, because $\mathbb{E} \left[\mathbf{E} | \widetilde{\mathbf{Y}} , \boldsymbol{\Lambda}  \right] = \mathbf{E}$. Therefore, we cannot guarantee that $\mathbb{E} \left[\left(  \mathbf{Y}(1)' \mathbf{Q} \mathbf{Y}(1) \right)^{-1} \left(  \mathbf{Y}(1)' \mathbf{Q}  \mathbf{E} \boldsymbol{\beta}^\ast_J \right)  | \boldsymbol{\Lambda}   \right] \rightarrow 0$. 
\end{rem}

\subsection{Other results}

\subsubsection{Conditions for Assumption \ref{Assumption_mu} }
\label{Appendix_mu}

Suppose the underlying distribution of $\boldsymbol{\mu}_i$  has finite support $\{\mathbf{m}_1,...,\mathbf{m}_{\bar q} \}$, with $Pr(\boldsymbol{\mu}_i = \mathbf{m}_q) = p_q > 0$ independent across $i$. Fix a $\boldsymbol{\mu}_0 = \mathbf{m}_q$. If we let $J_q$ be the number of observations with $\boldsymbol{\mu}_i = \mathbf{m}_q$, then by the Strong Law of Large Numbers, we have that $P \left( \frac{J_q}{J} \rightarrow p_q \right)=1$. Therefore, with probability one, there is a $\widetilde J \in \mathbb{N}$ such that   $J > \widetilde J$ implies $J_q/J > \frac{p_q}{2}$, which implies $J_q > cJ$ for a constant $c$. 
Now consider a $\mathbf{w}^\ast_J$ that assigns weights $ \frac{1}{\left\lfloor cJ \right \rfloor} $ for  $\left\lfloor cJ \right \rfloor$ control units with  $\boldsymbol{\mu}_0 = \mathbf{m}_q$. By construction ${\mathbf{M}_J}' \mathbf{w}^\ast_J = \mathbf{m}_q$. Moreover, we have that $\norm{\mathbf{w}^\ast_J}_2^2 = \frac{1}{\left\lfloor cJ \right \rfloor}$, which implies that $\norm{\mathbf{w}^\ast_J}_2^2 \rightarrow 0$.

\subsubsection{Conditions for Assumption \ref{Assumption_technical} }
\label{Appendix_technical}

We show a very simple example in which Assumption \ref{Assumption_technical} is satisfied. Suppose $\epsilon_{it}$ is independent across $i$ and $t$, and has uniformly bounded $k$-th moments across $i$ and $t$, for an even $k$. In this case, for any $\eta>0$,
\begin{eqnarray}
P \left(  \underset{1 \leq j \leq J}{\mbox{max}} \left\{ \left| \frac{1}{T_0} \sum_{t \in \mathcal{T}_0} \epsilon_{0t} \epsilon_{jt} \right| \right\}   > \eta \right) &\leq& \sum_{j=1}^J P \left( \left|  \frac{1}{T_0} \sum_{t \in \mathcal{T}_0} \epsilon_{0t} \epsilon_{jt} \right| > \eta \right) \\
&\leq&  \sum_{j=1}^J \frac{\mathbb{E}\left[ \left( \frac{1}{T_0}  \sum_{t \in \mathcal{T}_0} \epsilon_{0t} \epsilon_{jt}  \right)^k   \right]}{\eta^k} \\
&\leq & \frac{J}{T_0^k} \left(\sum_{h=1}^\frac{k}{2} C_h T_0^h \right), 
\end{eqnarray}
for constants $C_1,...,C_{{k}/{2}}$. Therefore, $\underset{1 \leq j \leq J}{\mbox{max}} \left\{ \left| \frac{1}{T_0} \sum_{t \in \mathcal{T}_0} \epsilon_{0t} \epsilon_{jt} \right| \right\}  =o_p(1)$ if $\frac{J}{T_0^{k/2}} \rightarrow 0$, which implies that this condition can be valid even when $J$ grows at a faster rate than $T_0$ if we assume enough uniformly bounded moments for the idiosyncratic shocks.  We can check the other conditions considered in Assumption \ref{Assumption_technical} following the same idea. Note that the term $\underset{1 \leq i, j \leq J, i \neq j}{\mbox{max}} \left\{ \left| \frac{1}{T_0} \sum_{t \in \mathcal{T}_0}  \epsilon_{it} \epsilon_{jt} \right| \right\}$ would be bounded by the sum of $J(J-1)/2$ terms, which implies that we would require a larger $k$ to guarantee this conditions.

\subsubsection{Alternative for Corollary \ref{corollary} }
\label{Appendix_corollary}

We consider a different set of assumptions in which we can derive $\hat  \alpha_{0t}^{\mbox{\tiny SC}}  \buildrel p \over \rightarrow \alpha_{0t} + \epsilon_{0t}$ when $T_0 \rightarrow \infty$, allowing time-dependency for the idiosyncratic shocks. We consider the following set of assumptions, which are similar to the ones considered by \cite{Chernozhukov} for their Lemma 2.

\begin{ass}{(Number of control units and pre-treatment periods)} \label{Assumption_appendix_T}
\normalfont
$\mbox{log}(J) = o(T_0^{\frac{\tau}{3\tau + 1}})$, where $\tau$ is a constant defined in Assumption \ref{Assumption_appendix_e}.

\end{ass}

\begin{ass}{(idiosyncratic shocks)} \label{Assumption_appendix_e}
\normalfont
(a) $\mathbb{E}[\epsilon_{it}]=0$  for all $i$ and $t$; (b) $\{ \epsilon_{it} \}_{t \in \mathcal{T}_0 \cup \mathcal{T}_1}$ are independent across $i$; (c)  $\{\epsilon_{0t},...,\epsilon_{Jt} \}_{t \in \mathcal{T}_0}$ is $\beta$-mixing, with coefficients satisfying $\beta(t) \leq D_1 \mbox{exp}(-D_2 t^\tau)$, where $D_1,D_2,\tau>0$ are constants; (d) $\epsilon_{it}$ have uniformly bounded fourth moments across $i$ and $t$, and $\frac{1}{T_0}\sum_{t \in \mathcal{T}_0} \mathbb{E}[\epsilon_{0t}^2] \rightarrow \sigma^2_0$; (e) $\exists \underline{\gamma}>0$ such that  $ \mathbb{E}[\epsilon_{it}^2]  \geq \underline{\gamma}$ across $i$ and $t$.
\end{ass}

\begin{ass}{(factor loadings)} \label{Assumption_appendix_mu}
\normalfont
(a) As $J \rightarrow \infty$, there is a sequence $\mathbf{w}^\ast_J \in \Delta^{J-1}$ such that $\norm{{\mathbf{M}_J}' \mathbf{w}^\ast_J - \boldsymbol{\mu}_0}_2 \rightarrow 0$, and $\norm{\mathbf{w}^\ast_J}_2 \rightarrow 0$, and (b) the sequence $\boldsymbol{\mu}_i$ is uniformly bounded. 

\end{ass}

\begin{ass}{(common factors)} \label{Assumption_appendix_lambda}
\normalfont
(a) $\frac{1}{T_0} \sum_{t \in \mathcal{T}_0} \boldsymbol{\lambda}_t ' \boldsymbol{\lambda}_t   \rightarrow \boldsymbol{\Omega}$ positive definite; (b) let $m = \lfloor [4D_2^{-1} \mbox{log}(JT_0)]^\frac{1}{\tau} \rfloor$ and $k = \lfloor T_0 / m \rfloor$, and define the sets $H_p = \{-p, - m -p, -2m-p, \hdots, -(k-1)m-p   \}$ for $p=1,...,m$.   We assume that there are positive constants $b_1$ and $b_2$ such that $\underset{T_0 \rightarrow \infty }{\mbox{liminf}} \left(  \underset{p=1,...,m}{\mbox{min}}\left\{ \frac{1}{k} \sum_{t \in H_p} \left| \lambda_{ft} \right|^2  \right\} \right) > b_1$ and  $\underset{T_0 \rightarrow \infty }{\mbox{limsup}} \left(  \underset{p=1,...,m}{\mbox{max}}\left\{ \frac{1}{k} \sum_{t \in H_p} \left| \lambda_{ft} \right|^3  \right\} \right) < b_2$.

\end{ass}

\begin{ass}{(other assumptions)} \label{Assumption_appendix_technical}
\normalfont
(a) $\exists$ $c>0$ such that  $\mbox{max}_{1 \leq j \leq J} \sum_{t \in \mathcal{T}_0} \left| \epsilon_{0t}^2 \epsilon_{jt}^2 \right| \leq c^2 T_0$ and $\mbox{max}_{1 \leq j \leq J} \sum_{t \in \mathcal{T}_0} \left| \lambda_{ft}^2 \epsilon_{jt}^2 \right| \leq c^2 T_0$ for all $f \in \{1,...,F \}$ with probability $1-o(1)$; (b) there is a sequence $l_J>0 $ such that $l_J [\mbox{log}(T_0 \lor J)]^{\frac{1+\tau}{2\tau}}T_0^{-1/2} \rightarrow 0$ that satisfies (i) for any $t \in \mathcal{T}_1$, $(\boldsymbol{\epsilon}_t' \delta)^2 \leq l_J \frac{1}{T_0} \sum_{q \in \mathcal{T}_0} (\boldsymbol{\epsilon}_q'  \delta)^2$ for all $\delta \in \Delta^{J-1}$ with probability $1-o(1)$, and (ii) for $\mathbf{w}^\ast_J$ defined in Assumption \ref{Assumption_appendix_mu}, $l_J^{1/2} \norm{{\mathbf{M}_J}' \mathbf{w}^\ast_J - \boldsymbol{\mu}_0}_2 \rightarrow 0$, and $l_J^{1/2}\norm{\mathbf{w}^\ast_J}_2 \rightarrow 0$.

\end{ass}

We first prove  the following lemma, which is based on Lemma 18 from \cite{Chernozhukov}. 

\begin{lem} \label{lemma}
\normalfont
Under Assumptions \ref{Assumption_appendix_T}, \ref{Assumption_appendix_e}, \ref{Assumption_appendix_lambda},  and \ref{Assumption_appendix_technical}(a), we have that $ \frac{1}{T_0} \norm{\sum_{t \in \mathcal{T}_0} \boldsymbol{\epsilon}_t \epsilon_{0t} }_\infty=o_p(1)$, $ \frac{1}{T_0} \norm{\sum_{t \in \mathcal{T}_0} \boldsymbol{\epsilon}_t \lambda_{ft} }_\infty=o_p(1)$, and  $ \frac{1}{T_0} \norm{\sum_{t \in \mathcal{T}_0} {\epsilon}_{0t} \lambda_{ft} }_\infty=o_p(1)$. Moreover, $l_J \frac{1}{T_0} \norm{\sum_{t \in \mathcal{T}_0} \boldsymbol{\epsilon}_t \epsilon_{0t} }_\infty=o_p(1)$, $l_J \frac{1}{T_0} \norm{\sum_{t \in \mathcal{T}_0} \boldsymbol{\epsilon}_t \lambda_{ft} }_\infty=o_p(1)$, and  $l_J \frac{1}{T_0} \norm{\sum_{t \in \mathcal{T}_0} {\epsilon}_{0t} \lambda_{ft} }_\infty=o_p(1)$, for $l_t$  defined in Assumption \ref{Assumption_appendix_technical}(b).
\end{lem}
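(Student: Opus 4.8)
The plan is to prove all three uniform bounds by a single blocking-plus-coupling argument adapted to the $\beta$-mixing structure of Assumption \ref{Assumption_appendix_e}(c), and then to close the rates using the moment controls in Assumptions \ref{Assumption_appendix_lambda}(b) and \ref{Assumption_appendix_technical}(a) together with the dimension restriction in Assumption \ref{Assumption_appendix_T}. I would treat the three statements in parallel, writing $X_{jt}$ for the generic summand — $\epsilon_{jt}\epsilon_{0t}$ in the first case, $\epsilon_{jt}\lambda_{ft}$ in the second, and $\epsilon_{0t}\lambda_{ft}$ in the third — and noting that in every case $\mathbb{E}[X_{jt}]=0$ by Assumption \ref{Assumption_appendix_e}(a)--(b) (for the first term the maximum is over $j\geq 1$, so cross-unit independence applies; for the others $\lambda_{ft}$ is fixed). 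It then suffices to bound $\max_j|\frac{1}{T_0}\sum_{t\in\mathcal{T}_0}X_{jt}|$ (and, for the third term, the trivial max over the finitely many $f$) in probability by a vanishing sequence. Using the interlaced index sets $H_p$ from Assumption \ref{Assumption_appendix_lambda}(b), I would split $\frac{1}{T_0}\sum_{t\in\mathcal{T}_0}X_{jt}=\frac{1}{T_0}\sum_{p=1}^m\sum_{t\in H_p}X_{jt}$, so that it is enough to control $\max_p\max_j|\frac{1}{k}\sum_{t\in H_p}X_{jt}|$, since $T_0\asymp km$.

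Within each subsequence $H_p$ the time indices are separated by $m$ steps, which is exactly what lets the mixing decay be exploited. By a $\beta$-mixing coupling construction (Berbee's lemma) I would build, for each $(p,j)$, independent variables $\tilde X_{jt}$ with the same marginal law as the blocks, agreeing with $X_{jt}$ off an event of probability at most $(k-1)\beta(m)$. The calibration $m=\lfloor[4D_2^{-1}\log(JT_0)]^{1/\tau}\rfloor$ forces $\beta(m)\leq D_1\exp(-D_2 m^\tau)\leq D_1(JT_0)^{-4}$, so the total coupling error survives a union bound over $p$ and $j$ and is $o(1)$. On the coupled independent array I would apply a Bernstein-type inequality. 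Because the products $X_{jt}$ are unbounded under only fourth moments, I would truncate them at a level growing polynomially in $T_0$, controlling the residual by Markov's inequality via the uniformly bounded fourth moments, and I would use the self-normalized form in which Assumption \ref{Assumption_appendix_technical}(a) supplies the variance proxy ($\max_j\sum_{t\in\mathcal{T}_0}X_{jt}^2\leq c^2T_0$ with probability $1-o(1)$), while the liminf/limsup conditions of Assumption \ref{Assumption_appendix_lambda}(b) control the empirical second and third moments of $\lambda_{ft}$ over each $H_p$ for the factor-bearing terms.

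This produces, per coordinate, a tail of the form $\exp(-c'k\eta^2/m)$ for a threshold $\eta$, after which a union bound over the $m$ subsequences and the $J$ (resp. $F$) coordinates requires $\log(mJ)\asymp\log J$ to be dominated by the exponent. Balancing the block length $m\asymp[\log(JT_0)]^{1/\tau}$, the number of blocks $k\asymp T_0/m$, the truncation level, and $\log J$ is precisely where the specific power in Assumption \ref{Assumption_appendix_T}, $\log(J)=o(T_0^{\tau/(3\tau+1)})$, comes from; under it the total probability vanishes for a suitable $\eta=\eta_{T_0}\to 0$, which establishes the first three claims. For the $l_J$-weighted refinements I would track the admissible decay of $\eta_{T_0}$ coming out of the union bound and verify that $l_J\eta_{T_0}\to 0$; this follows from $l_J[\log(T_0\lor J)]^{(1+\tau)/(2\tau)}T_0^{-1/2}\to 0$ in Assumption \ref{Assumption_appendix_technical}(b), which is stated in exactly the form needed to absorb the $l_J$ factor into the surviving rate.

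The main obstacle I anticipate is this four-way calibration under only fourth-moment control: since the summands $X_{jt}$ are unbounded products, a naive Bernstein bound does not apply, and the truncation argument must be coordinated simultaneously with the mixing-driven block length $m$, the number of blocks $k$, and the dimension growth $\log J$ — it is the interplay of all four that yields the nonstandard exponent $\tau/(3\tau+1)$. The resolution is to lean on the self-normalized exponential inequality underlying Lemma 18 of \cite{Chernozhukov}, using Assumption \ref{Assumption_appendix_technical}(a) to replace the missing boundedness by an empirical second-moment bound; obtaining the $l_J$-weighted versions then reduces to checking that the quantitative rate dominates $l_J$, which Assumption \ref{Assumption_appendix_technical}(b) guarantees.
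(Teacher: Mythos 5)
Your proposal is correct and follows essentially the same route as the paper: interlaced blocking over the sets $H_p$, Berbee coupling calibrated so that $\beta(m) \leq D_1 (JT_0)^{-4}$ survives the union bound over $j$ and $p$, a self-normalized moderate-deviation inequality (the paper invokes Theorem 7.4 of \cite{delapena2004}, the same inequality underlying Lemma 18 of \cite{Chernozhukov}, with the moment-ratio condition verified via Assumptions \ref{Assumption_appendix_e}(d)--(e) and \ref{Assumption_appendix_lambda}(b)), conversion of the self-normalized bound into a rate via the empirical second-moment bound of Assumption \ref{Assumption_appendix_technical}(a), and Assumption \ref{Assumption_appendix_technical}(b) to absorb the $l_J$ factor. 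One small note: your intermediate truncation-plus-Bernstein detour is unnecessary --- as you recognize in your final paragraph, the self-normalized inequality handles the unbounded products directly, which is exactly how the paper proceeds (and for the $\epsilon_{0t}\epsilon_{jt}$ term the paper additionally records, via Theorem 5.2 of \cite{bradley2005}, that the product process inherits the $\beta$-mixing property).
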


 \begin{proof}
The result $l_J \frac{1}{T_0} \norm{\sum_{t \in \mathcal{T}_0} \boldsymbol{\epsilon}_t \epsilon_{0t} }_\infty=o_p(1)$ follows simply from checking that the assumptions for Lemma 18 from \cite{Chernozhukov} are valid given  our assumptions. The results $l_J \frac{1}{T_0} \norm{\sum_{t \in \mathcal{T}_0} \boldsymbol{\epsilon}_t \lambda_{ft} }_\infty=o_p(1)$, and  $l_J \frac{1}{T_0} \norm{\sum_{t \in \mathcal{T}_0} {\epsilon}_{0t} \lambda_{ft} }_\infty=o_p(1)$ follow from a minor adjustment on Lemma 18 from \cite{Chernozhukov} to allow for a fixed sequence of $\boldsymbol{\lambda}_t$. We re-write their proof for our setting, considering these adjustments to allow for a   fixed sequence of $\boldsymbol{\lambda}_t$.

Let  $m = \lfloor [4D_2^{-1} \mbox{log}(JT_0)]^\frac{1}{\tau} \rfloor$ and $k = \lfloor T_0 / m \rfloor$, and define the sets $H_p = \{-p, - m -p, -2m-p, \hdots, -(k-1)m-p   \}$ for $p=1,...,m$.  We assume for now that $ T_0 / m $ is an integer. 
From Berbee's coupling, there exist a sequence of random variables $\{ \tilde \epsilon_{it} \}_{t \in H_p}$ such that (1)  $\{ \tilde \epsilon_{it} \}_{t \in H_p}$ is independent across $t$, (2) $ \tilde \epsilon_{it}$ has the same distribution as $\epsilon_{it}$ for $t \in H_p$, and (3) $P \left( \cup_{t \in H_p} \{\tilde \epsilon_{it} \neq \epsilon_{it} \} \right) \leq k \beta(m)$. Now note that 
\begin{eqnarray}
\frac{1}{k} \sum_{t \in H_p} \mathbb{E} \left|  \lambda_{ft} \tilde \epsilon_{it}  \right|^2  = \frac{1}{k} \sum_{t \in H_p} \left|  \lambda_{ft} \right|^2 \mathbb{E} \left| \tilde \epsilon_{it}  \right|^2 \geq \underline \gamma \frac{1}{k} \sum_{t \in H_p} \left|  \lambda_{ft} \right|^2,
\end{eqnarray}
where the last inequality follows from Assumption \ref{Assumption_appendix_e}. From Assumption \ref{Assumption_appendix_lambda}, there is a $T^\ast_1$ such that, for $T_0>T^\ast_1$, $\frac{1}{k} \sum_{t \in H_p} \left|  \lambda_{ft} \right|^2 > b_1$, implying that $\frac{1}{k} \sum_{t \in H_p} \mathbb{E} \left|  \lambda_{ft} \tilde \epsilon_{it}  \right|^2 > \underline{\gamma} b_1$. Likewise, there is a  $T^\ast_2$ such that $\frac{1}{k} \sum_{t \in H_p} \mathbb{E} \left|  \lambda_{ft} \tilde \epsilon_{it}  \right|^3 < \bar \gamma b_2$ for $T_0 > T_2^\ast$. Therefore, for $T_0 > \mbox{max} \{T_1^\ast, T_2^\ast \}$, we have 
\begin{eqnarray} \label{Pena}
\frac{  \left( \sum_{t \in H_p} \mathbb{E} \left|  \lambda_{ft} \tilde \epsilon_{it}  \right|^2 \right)^\frac{1}{2} }{\left(  \sum_{t \in H_p} \mathbb{E} \left|  \lambda_{ft} \tilde \epsilon_{it}  \right|^3 \right)^\frac{1}{3}    }  =\frac{ k^\frac{1}{2} \left( \frac{1}{k} \sum_{t \in H_p} \mathbb{E} \left|  \lambda_{ft} \tilde \epsilon_{it}  \right|^2 \right)^\frac{1}{2} }{ k^\frac{1}{3}\left( \frac{1}{k} \sum_{t \in H_p} \mathbb{E} \left|  \lambda_{ft} \tilde \epsilon_{it}  \right|^3 \right)^\frac{1}{3} } > k^\frac{1}{6}C_0 , 
\end{eqnarray}
for a constant $C_0$. Now let $W_{it} = \lambda_{ft}  \epsilon_{it}$ and $\widetilde W_{it} = \lambda_{ft} \tilde \epsilon_{it}$. Since $\mathbb{E}[ \lambda_{ft} \tilde \epsilon_{it}]=0$, by Theorem 7.4 of \cite{delapena2004}, there is a constant $C_1$ such that, for any $0 \leq x \leq C_0 k^{\frac{1}{6}}$,
\begin{eqnarray}
P  \left( \left|  \frac{\sum_{t \in H_p} \widetilde W_{it}}{\sqrt{\sum_{t \in H_p} \widetilde W_{it}^2}} \right| > x \right) \leq C_1(1- \Phi(x)),
\end{eqnarray}
where $\Phi(.)$ is the cdf of a $N(0,1)$. Therefore, for any $0 \leq x \leq C_0 k^{\frac{1}{6}}$,
\begin{eqnarray}
P  \left( \left|  \frac{\sum_{t \in H_p}  W_{it}}{\sqrt{\sum_{t \in H_p}  W_{it}^2}} \right|  >x \right) &\leq&   P  \left( \left|  \frac{\sum_{t \in H_p} \widetilde W_{it}}{\sqrt{\sum_{t \in H_p} \widetilde W_{it}^2}} \right| >x \right) + P \left(\cup_{t \in H_p} \{\tilde \epsilon_{it} \neq \epsilon_{it} \} \right) \\
& \leq&  C_1(1- \Phi(x)) + k\beta(m).
\end{eqnarray}

Following exactly the same steps as the proof of Lemma 18 from \cite{Chernozhukov}, we have that, for any $0 \leq x \leq C_0 k^{1/6} \sqrt{m}$,
\begin{eqnarray} \label{bound_lemma}
P  \left( \underset{1 \leq i \leq J}{\mbox{max}} \left|  \frac{\sum_{t \in \mathcal{T}_0}  W_{it}}{\sqrt{\sum_{t \in\mathcal{T}_0}  W_{it}^2}} \right|  >x \right) &\leq& C_1 J m^{3/2} x^{-1} \mbox{exp} \left(  -\frac{x^2}{2m} \right) + D_1 J T_0 \mbox{exp} \left(  -D_2 m^\tau \right).
 \end{eqnarray}

Setting $x = 2 \sqrt{m \mbox{log}(Jm^{3/2})}$, and using that $\mbox{log}(J) = o(T_0^{\frac{\tau}{3\tau + 1}})$ (Assumption \ref{Assumption_appendix_T}), we have that, for large enough $T_0$, $x < C_0 k^{1/6} \sqrt{m}$. Moreover, we can show that the right-hand-side of equation \ref{bound_lemma} is $o(1)$. Therefore, for some constant $\kappa$,
\begin{eqnarray}
\underset{1 \leq i \leq J}{\mbox{max}} \left|  \sum_{t \in \mathcal{T}_0}  \lambda_{ft} \epsilon_{it} \right| < \kappa [\mbox{log}(T_0 \vee J)]^{(1+\tau)/(2\tau)}\underset{1 \leq i \leq J}{\mbox{max}} \sqrt{\sum_{t \in\mathcal{T}_0}   \lambda_{ft}^2 \epsilon_{it}^2}
\end{eqnarray}
with probability $1-o(1)$. Under Assumption \ref{Assumption_appendix_technical}(a), we have that, with probability $1-o(1)$,
\begin{eqnarray}
\underset{1 \leq i \leq J}{\mbox{max}} \frac{1}{T_0} \left|  \sum_{t \in \mathcal{T}_0}  \lambda_{ft} \epsilon_{it} \right| &<& \kappa [\mbox{log}(T_0 \vee J)]^{(1+\tau)/(2\tau)} T_{0}^{-1/2}  \underset{1 \leq i \leq J}{\mbox{max}} \sqrt{\frac{1}{T_0} \sum_{t \in\mathcal{T}_0}   \lambda_{ft}^2 \epsilon_{it}^2} \\
&<& \kappa [\mbox{log}(T_0 \vee J)]^{(1+\tau)/(2\tau)} T_{0}^{-1/2} c.
\end{eqnarray}

It follows that $ \underset{1 \leq i \leq J}{\mbox{max}} \frac{1}{T_0} \left|  \sum_{t \in \mathcal{T}_0}  \lambda_{ft} \epsilon_{it} \right|  \buildrel p \over \rightarrow 0$ and, for $l_J$ defined in Assumption \ref{Assumption_appendix_technical}(b), $l_J \underset{1 \leq i \leq J}{\mbox{max}} \frac{1}{T_0} \left|  \sum_{t \in \mathcal{T}_0}  \lambda_{ft} \epsilon_{it} \right|  \buildrel p \over \rightarrow 0$. 

The proof that $ \frac{1}{T_0} \norm{\sum_{t \in \mathcal{T}_0} \lambda_{ft} \epsilon_{0t} }_\infty=o_p(1)$ and $l_J \frac{1}{T_0} \norm{\sum_{t \in \mathcal{T}_0} \lambda_{ft} \epsilon_{0t} }_\infty=o_p(1)$ follows the same steps as above, by setting $J=1$ and considering only $i=0$. The proof that $ \frac{1}{T_0} \norm{\sum_{t \in \mathcal{T}_0} \boldsymbol{\epsilon}_t \epsilon_{0t} }_\infty=o_p(1)$ and $l_J \frac{1}{T_0} \norm{\sum_{t \in \mathcal{T}_0} \boldsymbol{\epsilon}_t \epsilon_{0t} }_\infty=o_p(1)$ follows from noting that, since $\epsilon_{0t}$ and $\epsilon_{it}$ are $\beta$-mixing and independent, $\epsilon_{0t}\epsilon_{it}$ is also $\beta$-mixing (Theorem 5.2 from \cite{bradley2005}). Moreover, from Assumption \ref{Assumption_appendix_e}, we have, similar to equation \ref{Pena},
\begin{eqnarray}
\frac{  \left( \sum_{t \in H_p} \mathbb{E} \left| \tilde  \epsilon_{0t} \tilde \epsilon_{it}  \right|^2 \right)^\frac{1}{2} }{\left(  \sum_{t \in H_p} \mathbb{E} \left| \tilde  \epsilon_{0t} \tilde \epsilon_{it}  \right|^3 \right)^\frac{1}{3}    }   > k^\frac{1}{6}C_0 ,  
\end{eqnarray}
for some constant $C_0$. This inequality is valid for all $T_0$. Then we just follow the same steps of the proof setting $ W_{it} =  \epsilon_{0t}  \epsilon_{it}$  and $\widetilde W_{it} = \tilde \epsilon_{0t} \tilde \epsilon_{it}$. 
\end{proof}

 Given Lemma \ref{lemma}, note that Assumptions \ref{Assumption_appendix_T}, \ref{Assumption_appendix_e},  \ref{Assumption_appendix_mu} \ref{Assumption_appendix_lambda},  and \ref{Assumption_appendix_technical}(a) imply Assumptions  \ref{Assumption_e}, \ref{Assumption_mu}, \ref{Assumption_lambda},   and \ref{Assumption_technical}(a). Therefore, the results (i) and (ii) from Proposition \ref{Convergence_SC} remain valid under the assumptions considered in this appendix. 

We now present the following conditions in which $\hat  \alpha_{0t}^{\mbox{\tiny SC}}  \buildrel p \over \rightarrow \alpha_{0t} + \epsilon_{0t}$.
 
\begin{cor}
\label{appendix_corollary}
Suppose Assumptions \ref{Assumption_appendix_T}, \ref{Assumption_appendix_e},  \ref{Assumption_appendix_mu}, \ref{Assumption_appendix_lambda},  and \ref{Assumption_appendix_technical} hold. Then, for any $t \in \mathcal{T}_1$, $\hat  \alpha_{0t}^{\mbox{\tiny SC}}  \buildrel p \over \rightarrow \alpha_{0t} + \epsilon_{0t}$ when $T_0 \rightarrow \infty$.
 
\end{cor}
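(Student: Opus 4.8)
The plan is to follow the proof of Corollary~\ref{corollary}, but to replace the independence argument (which controlled $\boldsymbol{\epsilon}_t'\widehat{\mathbf{w}}_{\mbox{\tiny SC}}$ by passing to $\mathbb{E}[\norm{\widehat{\mathbf{w}}_{\mbox{\tiny SC}}}_2^2]$) by the self-normalizing inequality in Assumption~\ref{Assumption_appendix_technical}(b)(i). Exactly as there, for $t \in \mathcal{T}_1$ I would write
\[
\hat\alpha_{0t}^{\mbox{\tiny SC}} = \alpha_{0t} + \epsilon_{0t} + \boldsymbol{\lambda}_t(\boldsymbol{\mu}_0 - \widehat{\boldsymbol{\mu}}_{\mbox{\tiny SC}}) - \boldsymbol{\epsilon}_t'\widehat{\mathbf{w}}_{\mbox{\tiny SC}},
\]
so it suffices to show the last two terms are $o_p(1)$. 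Since, as noted above (via Lemma~\ref{lemma}), the assumptions of this appendix imply Assumptions~\ref{Assumption_e}--\ref{Assumption_lambda} and \ref{Assumption_technical}(a), Proposition~\ref{Convergence_SC}(i) gives $\widehat{\boldsymbol{\mu}}_{\mbox{\tiny SC}} \buildrel p \over \rightarrow \boldsymbol{\mu}_0$; because $\boldsymbol{\lambda}_t$ is fixed, the bias term $\boldsymbol{\lambda}_t(\boldsymbol{\mu}_0 - \widehat{\boldsymbol{\mu}}_{\mbox{\tiny SC}}) \buildrel p \over \rightarrow 0$ is then immediate.

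The substance is the term $\boldsymbol{\epsilon}_t'\widehat{\mathbf{w}}_{\mbox{\tiny SC}}$. Since $\widehat{\mathbf{w}}_{\mbox{\tiny SC}} \in \Delta^{J-1}$, Assumption~\ref{Assumption_appendix_technical}(b)(i) gives, with probability $1-o(1)$, $(\boldsymbol{\epsilon}_t'\widehat{\mathbf{w}}_{\mbox{\tiny SC}})^2 \leq l_J T_0^{-1}\sum_{q \in \mathcal{T}_0}(\boldsymbol{\epsilon}_q'\widehat{\mathbf{w}}_{\mbox{\tiny SC}})^2$, so it is enough to establish the strengthened form of Proposition~\ref{Convergence_SC}(iii), namely $l_J T_0^{-1}\sum_{q\in\mathcal{T}_0}(\boldsymbol{\epsilon}_q'\widehat{\mathbf{w}}_{\mbox{\tiny SC}})^2 \buildrel p \over \rightarrow 0$. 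To get this I would reuse verbatim the deterministic bound in the proof of Proposition~\ref{Convergence_SC}(iii): optimality of $\widehat{\mathbf{w}}_{\mbox{\tiny SC}}$ against the candidate $\mathbf{w}^\ast_J$ bounds $T_0^{-1}\norm{\mathbf{E}\widehat{\mathbf{w}}_{\mbox{\tiny SC}}}_2^2 = T_0^{-1}\sum_q(\boldsymbol{\epsilon}_q'\widehat{\mathbf{w}}_{\mbox{\tiny SC}})^2$ by a fixed list of terms, each a factor of the form $T_0^{-1}\norm{\boldsymbol{\epsilon}_0'\boldsymbol{\Lambda}}_\infty$, $T_0^{-1}\norm{\boldsymbol{\Lambda}'\mathbf{E}}_\infty$, $T_0^{-1}\norm{\boldsymbol{\epsilon}_0'\mathbf{E}}_\infty$, $T_0^{-1}\sum_q(\boldsymbol{\epsilon}_q'\mathbf{w}^\ast_J)^2$, or $T_0^{-1}\norm{\boldsymbol{\Lambda}(\boldsymbol{\mu}_0 - \boldsymbol{\mu}^\ast_J)}_2^2$, times quantities bounded uniformly in $J$ (the simplex constraint keeps $\norm{\widehat{\mathbf{w}}_{\mbox{\tiny SC}}}_1 = 1$, and Assumption~\ref{Assumption_appendix_mu}(b) keeps $\norm{\boldsymbol{\mu}_0 - \widehat{\boldsymbol{\mu}}_{\mbox{\tiny SC}}}_2$ bounded). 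Multiplying the whole bound by $l_J$, each max-norm factor times $l_J T_0^{-1}$ is $o_p(1)$ by the $l_J$-weighted statements in Lemma~\ref{lemma}; the term $l_J T_0^{-1}\sum_q(\boldsymbol{\epsilon}_q'\mathbf{w}^\ast_J)^2 \lesssim l_J\norm{\mathbf{w}^\ast_J}_2^2 = (l_J^{1/2}\norm{\mathbf{w}^\ast_J}_2)^2 \to 0$ and $l_J T_0^{-1}\norm{\boldsymbol{\Lambda}(\boldsymbol{\mu}_0-\boldsymbol{\mu}^\ast_J)}_2^2 \lesssim l_J\norm{\boldsymbol{\mu}_0 - \boldsymbol{\mu}^\ast_J}_2^2 \to 0$, both by Assumption~\ref{Assumption_appendix_technical}(b)(ii). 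Combining then yields $\boldsymbol{\epsilon}_t'\widehat{\mathbf{w}}_{\mbox{\tiny SC}} \buildrel p \over \rightarrow 0$ and hence the claim.

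The hard part is not this bookkeeping but the uniform bounds themselves, i.e. precisely Lemma~\ref{lemma}. Without independence between post- and pre-treatment shocks one cannot take expectations, so the quantities $\max_j |T_0^{-1}\sum_t \epsilon_{0t}\epsilon_{jt}|$ and $\max_{f,j}|T_0^{-1}\sum_t \lambda_{ft}\epsilon_{jt}|$ must be controlled uniformly over the growing index set and, crucially, at a rate fast enough to survive multiplication by the blow-up factor $l_J$. This is what the Berbee coupling combined with the de~la~Peña self-normalized moderate-deviation inequality delivers, under the exponential $\beta$-mixing and moment conditions of Assumptions~\ref{Assumption_appendix_T} and \ref{Assumption_appendix_e} and the block conditions on $\boldsymbol{\lambda}_t$ in Assumption~\ref{Assumption_appendix_lambda}(b). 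Calibrating $l_J$ (through Assumption~\ref{Assumption_appendix_technical}(b)) so that it simultaneously dominates the post/pre transfer ratio and still vanishes against these uniform rates is the delicate balance on which the whole argument hinges.
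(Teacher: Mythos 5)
Your proposal is correct and matches the paper's own proof essentially step for step: the same decomposition of $\hat\alpha_{0t}^{\mbox{\tiny SC}}$, the same reuse of the deterministic bound on $\norm{\mathbf{E}\widehat{\mathbf{w}}_{\mbox{\tiny SC}}}_2^2$ from the proof of Proposition \ref{Convergence_SC}(iii), multiplied by $l_J/T_0$ and controlled via the $l_J$-weighted rates in Lemma \ref{lemma} together with Assumption \ref{Assumption_appendix_technical}(b)(ii), followed by the self-normalizing inequality in Assumption \ref{Assumption_appendix_technical}(b)(i) to transfer the pre-treatment bound to $(\boldsymbol{\epsilon}_t'\widehat{\mathbf{w}}_{\mbox{\tiny SC}})^2$ for $t \in \mathcal{T}_1$. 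Your closing observation that the real work lives in Lemma \ref{lemma} (Berbee coupling plus the de la Pe\~{n}a inequality) is also exactly where the paper places it.
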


\begin{proof}

The proof follows exactly the same steps as the proof of Proposition \ref{Convergence_SC} for results (i) and (ii).  Now note that $\hat  \alpha_{0t}^{\mbox{\tiny SC}} = \alpha_{0t} + \epsilon_{0t} + \boldsymbol{\lambda}_t (\boldsymbol{\mu}_0 - \widehat{\boldsymbol{\mu}}_{\mbox{\tiny SC}}) - \boldsymbol{\epsilon}_t ' \widehat{\mathbf{w}}_{\mbox{\tiny SC}}$. Since we are considering a fixed sequence of $\boldsymbol{\lambda}_t$, and $\widehat{\boldsymbol{\mu}}_{\mbox{\tiny SC}}  \buildrel p \over \rightarrow \boldsymbol{\mu}_0$, it follows that $\boldsymbol{\lambda}_t (\boldsymbol{\mu}_0 - \widehat{\boldsymbol{\mu}}_{\mbox{\tiny SC}})  \buildrel p \over \rightarrow 0$. It remains to show that $ \boldsymbol{\epsilon}_t ' \widehat{\mathbf{w}}_{\mbox{\tiny SC}}  \buildrel p \over \rightarrow 0$.

From the proof of result (iii) from Proposition \ref{Convergence_SC}, we have that 
\begin{eqnarray}
\norm{\mathbf{E}  \widehat{\mathbf{w}}_{\mbox{\tiny SC}}}_2^2 &\leq & \norm{\mathbf{E}   \widehat{\mathbf{w}}_{\mbox{\tiny SC}}}_2^2 + \norm{\boldsymbol{\Lambda}(\boldsymbol{\mu}_0 - \widehat{\boldsymbol{\mu}}_{\mbox{\tiny SC}})}_2^2  \leq 2 \left| \boldsymbol{\epsilon}_0' \boldsymbol{\Lambda} (\boldsymbol{\mu}_0-\widehat{\boldsymbol{\mu}}_{\mbox{\tiny SC}})  \right| + 2 \left| \boldsymbol{\epsilon}_0' \boldsymbol{\Lambda} (\boldsymbol{\mu}_0- \boldsymbol{\mu}_J^\ast)  \right| \\
&& +  2 \left| (\boldsymbol{\mu}_0-\widehat{\boldsymbol{\mu}}_{\mbox{\tiny SC}})' \boldsymbol{\Lambda}' \mathbf{E}  \widehat{\mathbf{w}}_{\mbox{\tiny SC}} \right| + 2 \left| (\boldsymbol{\mu}_0- \boldsymbol{\mu}_J^\ast)' \boldsymbol{\Lambda}' \mathbf{E}  \mathbf{w}_J^\ast    \right|  + 2 \left| \boldsymbol{\epsilon}_0' \mathbf{E}  \widehat{\mathbf{w}}_{\mbox{\tiny SC}} \right| + \\
&&+ 2 \left| \boldsymbol{\epsilon}_0' \mathbf{E}  \mathbf{w}_J^\ast \right| +  \norm{\mathbf{E}  \mathbf{w}_J^\ast}_2^2 +  \norm{\boldsymbol{\Lambda}(\boldsymbol{\mu}_0 -  \boldsymbol{\mu}_J^\ast)}_2^2.
\end{eqnarray}

Now differently from what we do in the proof of Proposition \ref{Convergence_SC}, we can show using the assumptions considered in the appendix and Lemma \ref{lemma} that all the terms in the right hand side of the equation above are $o_p(1)$ when multiplied by $l_J/T_0$, for $l_J$ defined in Assumption \ref{Assumption_appendix_technical}(b). Therefore, $\frac{l_J}{T_0} \norm{ \mathbf{E}  \widehat{\mathbf{w}}_{\mbox{\tiny SC}}}_2^2 = o_p(1)$. Now using Assumption \ref{Assumption_appendix_technical}(b), we have that, for $t \in \mathcal{T}_1$, with probability $1-o(1)$, $(\boldsymbol{\epsilon}_t ' \widehat{\mathbf{w}}_{\mbox{\tiny SC}})^2 \leq \frac{l_J}{T_0} \norm{ \mathbf{E}  \widehat{\mathbf{w}}_{\mbox{\tiny SC}}}_2^2 = o_p(1)$, which completes the proof. \end{proof}

\subsubsection{Demeaned SC estimator} \label{Appendix_demeaned}

Consider the demeaned SC estimator, where the weights are estimated by
\begin{eqnarray} 
\widehat{\mathbf{w}}_{\mbox{\tiny SC$'$}} =  \underset{{\mathbf{w}}  \in \Delta^{J-1}}{\mbox{argmin}} \left \{ \frac{1}{T_0} \sum_{t \in \mathcal{T}_0} \left( y_{0t} - { \mathbf{w}} ' {\mathbf{y}}_t  - (\bar{{y}}_0 - \bar{\mathbf{y}} ' \mathbf{w}) \right)^2    \right\} =  \underset{{\mathbf{w}}  \in \Delta^{J-1}}{\mbox{argmin}} \left \{ \frac{1}{T_0} \sum_{t \in \mathcal{T}_0} \left( y_{0t} - { \mathbf{w}} ' {\mathbf{y}}_t \right)^2  - \left(\bar{{y}}_0 - \bar{\mathbf{y}} ' \mathbf{w} \right)^2    \right\},
\end{eqnarray}
where $\bar{{y}}_0 $ is the pre-treatment average of $y_{0t}$, and $\bar{\mathbf{y}}$ is the pre-treatment average of $\mathbf{y}_t$. We show that the results from Proposition \ref{Convergence_SC} remain valid for the demeaned SC estimator with minor adjustments in the proof. 

We start adjusting the objective function in equation \ref{eq_H} to
\begin{eqnarray} \label{eq_H_prime}  
\ddot{\mathcal{H}}_{J} (\boldsymbol{\mu}) =  \underset{{\mathbf{w}}  \in \Delta^{J-1}: { ~ {\mathbf{M}_J}'{\mathbf{w}}  = \boldsymbol{\mu}}}{\mbox{min}} \left \{   \frac{1}{T_0} \sum_{t \in \mathcal{T}_0} ( \bar {\lambda}_t(\boldsymbol{\mu}) -   { \mathbf{w}} ' {\boldsymbol{\epsilon}}_t )  ^2  -  \left( \bar{\boldsymbol{\lambda}}(\boldsymbol{\mu}_0 - \boldsymbol{\mu}) +\bar{{\epsilon}}_0 -\bar{\boldsymbol{\epsilon}}' \mathbf{w}  \right)^2 \right \},
 \end{eqnarray}
where  $\bar{\boldsymbol{\lambda}}$, $\bar{{\epsilon}}_0$, and $\bar{\boldsymbol{\epsilon}}$ are pre-treatment averages of, respectively, $\boldsymbol{\lambda}_t$, $\epsilon_{0t}$, and $\boldsymbol{\epsilon}_t$.  We add to Assumption \ref{Assumption_lambda} that $\bar{\boldsymbol{\lambda}} \rightarrow \boldsymbol{\omega}_0$, and that $\boldsymbol{\Omega} - \boldsymbol{\omega}_0'\boldsymbol{\omega}_0$ is positive definite.\footnote{The matrix $\boldsymbol{\Omega} - \boldsymbol{\omega}_0'\boldsymbol{\omega}_0$ will not be positive definite if there is a time-invariant common factor. However, we can redefine $\boldsymbol{\lambda}_t$ so that it does not include time-invariant common factors. Since  time-invariant common factors are eliminated in the demeaning process, this will not lead to any problem in the analysis.  } We define $\widetilde{\ddot{\mathcal{H}}}_{T_0} (\boldsymbol{\mu})$ as we do in equation \ref{Definition_H_tilde}.  We also redefine the function $\sigma^2_{\bar \lambda}(\boldsymbol{\mu})$ to $\ddot \sigma^2_{\bar \lambda}(\boldsymbol{\mu}) = (\boldsymbol{\mu}_0 - \boldsymbol{\mu})' (\boldsymbol{\Omega} - \boldsymbol{\omega}_0'\boldsymbol{\omega}_0) (\boldsymbol{\mu}_0 - \boldsymbol{\mu}) + \sigma^2_{\epsilon}$, which is uniquely minimized at $\boldsymbol{\mu}_0$.

Following similar steps to the proof of Proposition \ref{Convergence_SC}, we can construct an upper bound to $\ddot{\mathcal{H}}_{J} ( \boldsymbol{\mu}_0)$,
\begin{eqnarray}
\widetilde{\ddot{\mathcal{H}}}_{T_0} ( \boldsymbol{\mu}_0) &\leq & \widetilde{\ddot{\mathcal{H}}}_{J}^{UB} (\boldsymbol{\mu}_0) \equiv \frac{1}{T_0} \sum_{t \in \mathcal{T}_0} \bar {\lambda}_t(\boldsymbol{\mu}^\ast_J)^2 + \frac{1}{T_0} \sum_{t \in \mathcal{T}_0} ({\boldsymbol{\epsilon}}_t'\mathbf{w}^\ast_J)^2 - 2\frac{1}{T_0} \sum_{t \in \mathcal{T}_0} \bar {\lambda}_t(\boldsymbol{\mu}^\ast_J) ({\boldsymbol{\epsilon}}_t'\mathbf{w}^\ast_J)  +  \\
 && -  \left( \bar{\lambda}(\boldsymbol{\mu}_0 -  \boldsymbol{\mu}^\ast_J) +\bar{{\epsilon}}_0 -\bar{\boldsymbol{\epsilon}}' \mathbf{w}^\ast_J  \right)^2+ K \norm{\boldsymbol{\mu}_0-\boldsymbol{\mu}_J^\ast}_2  \buildrel p \over \rightarrow \sigma^2_{\epsilon}.
  \end{eqnarray}

We can also define $ \widetilde{\ddot{\mathcal{H}}}_{J}^{LB} (\boldsymbol{\mu}_0)$ such that 
\begin{eqnarray} 
\widetilde{\ddot{\mathcal{H}}}_{T_0} (\boldsymbol{\mu}) &\geq & \widetilde{\ddot{\mathcal{H}}}_{J}^{LB} (\boldsymbol{\mu})=  \underset{{\mathbf{w}}  \in \mathcal{W}}{\mbox{min}} \left \{   \frac{1}{T_0} \sum_{t \in \mathcal{T}_0} ( \bar {\lambda}_t(\boldsymbol{\mu}) -   { \mathbf{w}} ' {\boldsymbol{\epsilon}}_t )  ^2  -  \left( \bar{\lambda}(\boldsymbol{\mu}_0 - \boldsymbol{\mu}) +\bar{{\epsilon}}_0 -\bar{\boldsymbol{\epsilon}}' \mathbf{w}  \right)^2 \right \},
 \end{eqnarray}
and show that $\widetilde{\ddot{\mathcal{H}}}_{J}^{LB} (\boldsymbol{\mu})$ converges uniformly in probability to $\ddot \sigma^2_{\bar \lambda}(\boldsymbol{\mu})$. We just have to show that the function $  \underset{{\mathbf{w}}  \in \mathcal{W}}{\mbox{min}} \left \{   \frac{1}{T_0} \sum_{t \in \mathcal{T}_0} ( \bar {\lambda}_t(\boldsymbol{\mu}) -   { \mathbf{w}} ' {\boldsymbol{\epsilon}}_t )  ^2  -  \left( \bar{\lambda}(\boldsymbol{\mu}_0 - \boldsymbol{\mu}) +\bar{{\epsilon}}_0 -\bar{\boldsymbol{\epsilon}}' \mathbf{w}  \right)^2 \right \}$ is Lipschitz as we do in equation \ref{eq_lipschitz}, and then use the same strategy as we do in equation \ref{Appendix_inequality} to show pointwise convergence. 

For the third result, we use that $\norm{\boldsymbol{\epsilon}_0  + \boldsymbol{\lambda}(\boldsymbol{\mu}_0 - \widehat{\boldsymbol{\mu}}_{\mbox{\tiny SC$'$}}) - \mathbf{E}  \widehat{\mathbf{w}}_{\mbox{\tiny SC$'$}} - (\bar{{\epsilon}}_0  + \bar{\boldsymbol{\lambda}}(\boldsymbol{\mu}_0 - \widehat{\boldsymbol{\mu}}_{\mbox{\tiny SC$'$}}) - \bar{\boldsymbol{\epsilon}}  \widehat{\mathbf{w}}_{\mbox{\tiny SC$'$}}) }_2^2 \leq \norm{\boldsymbol{\epsilon}_0  + \boldsymbol{\lambda} (\boldsymbol{\mu}_0 - \boldsymbol{\mu}^\ast_J) - \mathbf{E}  \mathbf{w}^\ast_J - (\bar{{\epsilon}}_0  + \bar{\boldsymbol{\lambda}}(\boldsymbol{\mu}_0 -  \boldsymbol{\mu}_J^\ast) - \bar{\boldsymbol{\epsilon}}  \mathbf{w}_J^\ast)}_2^2 $. Then we can follow the same steps as in the proof of part (iii) of Proposition \ref{Convergence_SC}.

\subsubsection{SC specifications with covariates}  \label{Appendix_covariates}

\textbf{Theoretical results}

 Consider now the case with covariates. The model for potential outcomes are now given by 
\begin{eqnarray} \label{model2}
\begin{cases} y_{it}^N =  \boldsymbol{\lambda}_t \boldsymbol{\mu}_i + \boldsymbol{\theta}_t \mathbf{z}_i + \epsilon_{it}  \\ 
y_{it}^I = \alpha_{it} + y_{it}^N, \end{cases}
\end{eqnarray}
where $\mathbf{z}_i$ is a $q \times 1$ vector of  observed time-invariant covariates, and $\boldsymbol{\theta}_t$ are unobserved time-varying effects.

Define $\boldsymbol{\rho}_i \equiv (\boldsymbol{\mu}_i',\mathbf{z}_i')'$ and $\boldsymbol{\gamma}_t \equiv (\boldsymbol{\lambda}_t, \boldsymbol{\theta}_t)$, and assume that Assumptions \ref{Assumption_mu}, \ref{Assumption_lambda}, and  \ref{Assumption_technical} are valid for $\boldsymbol{\rho}_i$ and $\boldsymbol{\gamma}_t$ instead of $\boldsymbol{\mu}_i$ and $\boldsymbol{\lambda}_t$.\footnote{\cite{FB} discuss cases in which the sequence $\boldsymbol{\gamma}_t$ might be multicolinear. In this case, Assumption \ref{Assumption_lambda} would not be valid for $\boldsymbol{\gamma}_t$, but the SC estimator would still control for the effects of these observed and unobserved covariates. While, for simplicity, we focus on the case in which Assumption \ref{Assumption_lambda} is valid for $\boldsymbol{\gamma}_t$, the same conclusions from \cite{FB} apply here.} In this case, if we also have Assumption \ref{Assumption_e}, then Proposition \ref{Convergence_SC} is valid for the model $y_{it}^N =  \boldsymbol{\gamma}_t \boldsymbol{\rho}_i + \epsilon_{it}$, where we treat the observed variables $\mathbf{z}_i$ as unobserved factor loadings, as  \cite{FB} do. Therefore, the SC weights using all pre-treatment outcomes as predictors will be such that  $\widehat{\boldsymbol{\mu}}_{\mbox{\tiny SC}}  \buildrel p \over \rightarrow \boldsymbol{\mu}_0$ and $\widehat{\mathbf{z}}_{\mbox{\tiny SC}}  \equiv \mathbf{Z}_J'  \widehat{\mathbf{w}}_{\mbox{\tiny SC}}  \buildrel p \over \rightarrow \mathbf{z}_0$, where $\mathbf{Z}_J$ is the $J \times q$ matrix with information on the covariates $\mathbf{z}_i$ for all the controls. 

Now consider an alternative SC specification, where the SC weights are estimated using both pre-treatment outcomes and the covariates as predictors. \cite{Abadie2003} suggest a nested minimization problem where in the first step we select an $(R \times 1)$ vector of predictors for the treated unit, $\mathbf{x}_0$, and the corresponding $(R \times J)$ matrix of predictors of the control units, $\mathbf{X}_1$.  The rows of these matrices may include functions of pre-treatment outcomes, and observed covariates. In the first step, they propose choosing weights that minimize the distance $\norm{\mathbf{x}_0-\mathbf{X}_1 \mathbf{w}}_\mathbf{V}$ for a given positive semi-definite matrix $\mathbf{V}$.  Then the matrix $\mathbf{V}$ is chosen to minimize mean squared errors of the difference between the pre-treatment outcomes and the weighted average of the control outcomes with weights  $\mathbf{w}(\mathbf{V}) \in \mbox{argmin}_{\mathbf{w} \in \Delta^{J-1}}  \norm{\mathbf{x}_0-\mathbf{X}_1 \mathbf{w}}_\mathbf{V}$. Therefore,   we can re-write this problem as $\widehat{\mathbf{w}}_{cov}= \underset{{\mathbf{w}}  \in \Theta_J }{\mbox{argmin}} \left \{ \frac{1}{T_0} \sum_{t \in \mathcal{T}_0} \left( y_{0t} - { \mathbf{w}} ' {\mathbf{y}}_t  \right)^2    \right\}$, where $\Theta_J = \{ \widetilde{\mathbf{w}} | \widetilde{\mathbf{w}} \in \mbox{argmin}_{\mathbf{w} \in \Delta^{J-1}}  \norm{\mathbf{x}_0-\mathbf{X}_1\mathbf{w}}_\mathbf{V} \mbox{ for some } \mathbf{V} \}$.
We can also consider a time frame for the minimization problem that defines $\widehat{\mathbf{w}}_{cov}$ different from the time frame for the minimization $\norm{\mathbf{x}_0-\mathbf{X}_1 \mathbf{w}}_\mathbf{V}$, as suggested by \cite{Abadie2015}. Our results still apply, as long as the number of periods in both minimization problems goes to infinity. As we do in Section \ref{Section_original_SC}, we define 
\begin{eqnarray} \label{H_cov}
\widetilde{\mathcal{H}}_J^{cov} (\boldsymbol{\rho})=  \underset{ \widetilde{\boldsymbol{\rho}} \in \Gamma_J }{\mbox{min}} \left\{  {\underset{{\mathbf{w}}  \in \Theta_J : { ~ {[\mathbf{M}_J}'   ~  \mathbf{Z}_J' ] {\mathbf{w}}  = \widetilde{\boldsymbol{\rho}}}}{\mbox{min}} \left \{   \frac{1}{T_0} \sum_{t \in \mathcal{T}_0} ( \bar \gamma_t(\widetilde{\boldsymbol{\rho}}) -   { \mathbf{w}} ' {\boldsymbol{\epsilon}}_t )  ^2 \right \}} + K \norm{\boldsymbol{\rho} - \widetilde{\boldsymbol{\rho}}}_2  \right\},
\end{eqnarray} 
where $\Gamma_J$ is the set of $\boldsymbol{\rho}$ that are attainable with the specification that includes covariates when there are $J$ control units, and $\bar \gamma_t(\boldsymbol{\rho}) = \boldsymbol{\gamma}_t (\boldsymbol{\rho}_0 - \boldsymbol{\rho}) + \epsilon_{0t}$.  As before, we have that $\widehat{\boldsymbol{\rho}}_{cov} = \mbox{argmin}_{\boldsymbol{\rho} \in \tilde \Gamma} \widetilde{\mathcal{H}}_J^{cov} (\boldsymbol{\rho})$, where $\tilde \Gamma = \mbox{cl} \left(\cup_{J \in \mathbb{N}} \Gamma_J \right)$.

Assume that the pre-treatment outcomes included as predictors when there are $T_0$ pre-treatment periods are such that Assumptions \ref{Assumption_lambda} and  \ref{Assumption_technical} are satisfied if we consider only the pre-treatment periods used as predictors. Therefore, it must be that the number of pre-treatment outcomes used as predictors goes to infinity when $T_0 \rightarrow \infty$. In this case, we show that the implied estimators for $\boldsymbol{\rho}_0 = (\boldsymbol{\mu}_0',\mathbf{z}_0')'$ are such that $\widehat{\boldsymbol{\mu}}_{cov}  \buildrel p \over \rightarrow \boldsymbol{\mu}_0$ and $ \widehat{\mathbf{z}}_{cov}  \buildrel p \over \rightarrow \mathbf{z}_0$. First, consider a sequence of diagonal matrices $\mathbf{V}_J$ where the diagonal elements are equal to one for the pre-treatment outcomes, and zero otherwise. In this case, the minimization problem $\norm{\mathbf{x}_0-\mathbf{X}_1 \mathbf{w}}_{\mathbf{V}_J}$ is equivalent to the problem analyzed in Section \ref{Section_original_SC}, and  satisfies the conditions from Proposition \ref{Convergence_SC}. Therefore,  $\widehat{\boldsymbol{\mu}}(\mathbf{V}_J) \equiv {\mathbf{M}_J}' \widehat{\mathbf{w}}(\mathbf{V}_J)  \buildrel p \over \rightarrow \boldsymbol{\mu}_0 $ and $\widehat{\mathbf{z}}(\mathbf{V}_J) \equiv {\mathbf{Z}_J}' \widehat{\mathbf{w}}(\mathbf{V}_J)  \buildrel p \over \rightarrow \mathbf{z}_0 $.  Since these are candidate solutions for the minimization problem \ref{H_cov}, it follows that  
 \begin{eqnarray} 
 \widetilde{\mathcal{H}}_J^{cov} (\boldsymbol{\rho}_0) \leq    \left \{   \frac{1}{T_0} \sum_{t \in \mathcal{T}_0} ( \bar \gamma_t(\widehat{\boldsymbol{\rho}}_J(\mathbf{V}_J)) -   \widehat{\mathbf{w}}(\mathbf{V}_J) ' {\boldsymbol{\epsilon}}_t )  ^2 \right \} + K\norm{\boldsymbol{\rho}_0 - \widehat{\boldsymbol{\rho}}(\mathbf{V}_J)}_2  \buildrel p \over \rightarrow \mbox{plim} \frac{1}{T_0} \sum_{t \in \mathcal{T}_0}  \bar \gamma_t^2(\boldsymbol{\rho}_0).  
\end{eqnarray}

 Also, similarly to the proof of Proposition \ref{Convergence_SC}, $ \widetilde{\mathcal{H}}_J^{cov} (\boldsymbol{\rho}) $ is bounded from below by a function that converges uniformly to $ \mbox{plim} \frac{1}{T_0} \sum_{t \in \mathcal{T}_0}  \bar \gamma_t^2(\boldsymbol{\rho})$, which is uniquely minimized at $\boldsymbol{\rho}_0$. Following the same steps of the proof of Proposition \ref{Convergence_SC}, it follows that $\widehat{\boldsymbol{\mu}}_{cov}  \buildrel p \over \rightarrow \boldsymbol{\mu}_0$ and $ \widehat{\mathbf{z}}_{cov}  \buildrel p \over \rightarrow \mathbf{z}_0$.

\textbf{Monte Carlo simulations}

We present an MC  exercise similar to the one presented in Section \ref{MC}, but with the inclusion of observed covariates. We continue to consider a setting with $\boldsymbol{\lambda}_t = [\lambda_{1t} ~ \lambda_{2t}]$, but now we also add two observed covariates $\mathbf{z}_i = [z_{1i} ~ z_{2i}]'$, with effects that vary with time, $\theta_{1t} \sim N(0,1)$ and $\theta_{2t} \sim N(0,1)$. The treated unit has $\boldsymbol{\mu}_0 = (1,0)$, and  $\mathbf{z}_0 = (1,0)$. The control units are divided in four groups of equal size, with a combination of $\boldsymbol{\mu}_i \in \{ (1,0), (0,1) \}$ and  $\mathbf{z}_i \in \{ (1,0), (0,1) \}$. Therefore, the goal of the SC estimator is to allocate positive weights only to the control units with $\boldsymbol{\mu}_i = (1,0)$, and  $\mathbf{z}_i = (1,0)$.

We consider three specifications for the SC estimator. The first one uses all pre-treatment periods as predictors, the second one includes the first half of the pre-treatment outcomes and the two covariates as predictors, and the third one includes the pre-treatment outcome average and the two covariates as predictors. We present in Appendix Table \ref{Table_MC_covariates} the implied estimators for $\mu_{10}$ and $z_{10}$ (given the adding-up constraint, $\hat \mu_{20} =  1-\hat\mu_{10}$ and $\hat z_{20} =  1-\hat z_{10}$). Overall, we see that the first two specifications perform very similarly. Moreover, when $J$ and $T_0$ gets large, the SC unit using either of these two specifications control well both for the unobserved factor loading $\boldsymbol{\mu}$ and the observed covariates $\mathbf{z}$.  The third specification, which does not satisfy the properties considered in the theory presented in this section, does a better job in matching the observed covariates, but does a poor job in matching the unobserved factor loadings. Even when $J,T_0 \rightarrow \infty$, the estimator for $\mu_{10}$ remains roughly constant around $0.67$, which suggests that the implied estimator for the factor loadings will not generally be consistent if we consider such specification to estimate the SC weights.   

\begin{center}

[Appendix Table \ref{Table_MC_covariates} here]

\end{center}

\subsubsection{Simple example with $F=1$} \label{Appendix_simple}

Consider a simple example in which $y_{it} = {\lambda}_t + \epsilon_{it}$ for all $i=0,...,J$. Assume ${\lambda}_t  \buildrel iid \over \sim N(0,\sigma^2_\lambda)$, $\epsilon_{jt}  \buildrel iid \over \sim N(0,\sigma^2)$, and that these variables are all independent from each other. Finally, assume that $J/T_0 \rightarrow  c \in [0,1)$. Let $\mathbf{y}_i$ and $\boldsymbol{\epsilon}_i$ be the $T_0 \times 1$ vectors with the outcomes and errors of unit $i$. Then $\mathbf{y}_0 = \sum_{i=1}^J \frac{1}{J} \mathbf{y}_i + \boldsymbol{\epsilon}_0 - \sum_{i=1}^J \frac{1}{J}\boldsymbol{\epsilon}_i$. Using Using Frisch-Waugh-Lovell theorem, we have that the OLS estimator associated to unit $i$, $b_i$, is given by 
\begin{eqnarray}
b_i &=& \left( \mathbf{y}_i' \mathbf{Q}_{(i)} \mathbf{y}_i \right)^{-1}  \left( \mathbf{y}_i' \mathbf{Q}_{(i)} \mathbf{y}_0 \right) = \frac{1}{J} + \left( \mathbf{y}_i' \mathbf{Q}_{(i)} \mathbf{y}_i \right)^{-1}  \left( \mathbf{y}_i' \mathbf{Q}_{(i)} \left( \boldsymbol{\epsilon}_0 - \sum_{i'=1}^J \frac{1}{J}\boldsymbol{\epsilon}_{i'}  \right) \right) \\
&=& \frac{1}{J} + \frac{\boldsymbol{\Lambda}' \mathbf{Q}_{(i)} \boldsymbol{\epsilon}_0 + \boldsymbol{\epsilon}_i' \mathbf{Q}_{(i)} \boldsymbol{\epsilon}_0 - \boldsymbol{\Lambda}' \mathbf{Q}_{(i)}  \sum_{i'=1}^J \frac{1}{J}\boldsymbol{\epsilon}_{i'}- \boldsymbol{\epsilon}_i' \mathbf{Q}_{(i)}  \sum_{i'=1}^J \frac{1}{J}\boldsymbol{\epsilon}_{i'} }{\boldsymbol{\Lambda}' \mathbf{Q}_{(i)} \boldsymbol{\Lambda} + 2\boldsymbol{\Lambda}' \mathbf{Q}_{(i)} \boldsymbol{\epsilon}_i + \boldsymbol{\epsilon}_i'\mathbf{Q}_{(i)} \boldsymbol{\epsilon}_i},
\end{eqnarray}
where $\mathbf{Q}_{(i)}$ is the residual-maker matrix of a regression on $\{ \mathbf{y}_1, \hdots,  \mathbf{y}_J \} \setminus \{ \mathbf{y}_i \} $. Let $A_i = \boldsymbol{\Lambda}' \mathbf{Q}_{(i)} \boldsymbol{\epsilon}_0 + \boldsymbol{\epsilon}_i' \mathbf{Q}_{(i)} \boldsymbol{\epsilon}_0 - \boldsymbol{\Lambda}' \mathbf{Q}_{(i)}  \sum_{i'=1}^J \frac{1}{J}\boldsymbol{\epsilon}_{i'}- \boldsymbol{\epsilon}_i' \mathbf{Q}_{(i)}  \sum_{i'=1}^J \frac{1}{J}\boldsymbol{\epsilon}_{i'} $ and $B_i = \boldsymbol{\Lambda}' \mathbf{Q}_{(i)}\boldsymbol{\Lambda} + 2\boldsymbol{\Lambda}' \mathbf{Q}_{(i)} \boldsymbol{\epsilon}_i + \boldsymbol{\epsilon}_i'\mathbf{Q}_{(i)} \boldsymbol{\epsilon}_i $. 

From Proposition \ref{Prop_OLS_K_large}, we know that $\sum_{i=1}^J b_i  \buildrel p \over \rightarrow 1$, which implies that $\sum_{i=1}^J \frac{A_i}{B_i}  \buildrel p \over \rightarrow 0$. We want to derive the probability limit of $\mathbf{b}'\mathbf{b} = \sum_{i=1}^J b_i^2$, which is given by 
\begin{eqnarray}
\sum_{i=1}^J b_i^2 = \frac{1}{J} + 2 \frac{1}{J} \sum_{i=1}^J \frac{A_i}{B_i} +\sum_{i=1}^J\left( \frac{A_i}{B_i} \right)^2 = \sum_{i=1}^J\left( \frac{A_i}{B_i} \right)^2 + o_p(1).
\end{eqnarray}

Now note that 
\begin{eqnarray} \label{bounds}
\frac{1}{\mbox{max}_{i=1,...,J} \left\{  \left(\frac{1}{K} B_i \right)^2 \right\}}  \sum_{i=1}^J\left(\frac{1}{K} {A_i} \right)^2 \leq \sum_{i=1}^J\left( \frac{A_i}{B_i} \right)^2 \leq \frac{1}{\mbox{min}_{i=1,...,J}\left\{  \left(\frac{1}{K} B_i \right)^2 \right\}} \sum_{i=1}^J\left( \frac{1}{K} {A_i} \right)^2, 
\end{eqnarray}
where $K = T_0 - J + 1$.

We first show that ${\mbox{min}_{i=1,...,J}\left\{  \left(\frac{1}{K} B_i \right)^2 \right\}}$ and ${\mbox{max}_{i=1,...,J}\left\{  \left(\frac{1}{K} B_i \right)^2 \right\}}$ converge in probability to $\sigma^4$. We start with the term $\frac{1}{K}\boldsymbol{\Lambda}' \mathbf{Q}_{(i)}  \boldsymbol{\Lambda}$.  We can write  $\boldsymbol{\Lambda} = \mathbf{Y}_{(i)} \boldsymbol{\phi} + \mathbf{u}_{(i)}$, where $\mathbf{Y}_{(i)}$ is a matrix with information on  $\{ \mathbf{y}_1, \hdots,  \mathbf{y}_J \} \setminus \{ \mathbf{y}_i \} $, and $\boldsymbol{\phi}$ is the population OLS parameters of a regression of $\lambda_t$ on  $\{ {y}_{1t}, \hdots,  {y}_{Jt} \} \setminus \{ {y}_{it} \} $. Given that the data is iid normal, we have that  $\mathbf{u}_{(i)} |  \mathbf{Y}_{(i)} \sim N(0 , \sigma_u^2 \mathbb{I}_{T_0})$. Moreover, it is easy to show that $\sigma_u^2=o(1)$ and $J \sigma_u^2 = O(1)$. The intuition is that with the average of many observations $y_{it}$ across $i$ we become close to $\lambda_t$, so  the variance of the error in this population OLS regression goes to zero when $J \rightarrow \infty$. 

Therefore,  conditional on $\mathbf{Y}_{(i)}$, $ \frac{\boldsymbol{\Lambda}' \mathbf{Q}_{(i)}  \boldsymbol{\Lambda}}{\sigma_u^2} = \frac{\mathbf{u}_{(i)}'\mathbf{Q}_{(i)}  \mathbf{u}_{(i)}}{\sigma_u^2} \sim \chi^2_K$, which implies that $\mathbb{E} \left[ \frac{1}{K}\frac{\boldsymbol{\Lambda}' \mathbf{Q}_{(i)}  \boldsymbol{\Lambda}}{\sigma_u^2} \right]= 1$ and $\mathbb{E} \left[ \left( \frac{1}{K}\frac{\boldsymbol{\Lambda}' \mathbf{Q}_{(i)}  \boldsymbol{\Lambda}}{\sigma_u^2} \right)^2 \right]= O(1)$. Given that, for any $e>0$,
\begin{eqnarray}
P\left( \underset{{i=1,...,J}}{\mbox{max}}  \left| \frac{1}{K}{\boldsymbol{\Lambda}' \mathbf{Q}_{(i)}  \boldsymbol{\Lambda}} \right|  > e \right) \leq J P\left(\left|  \frac{1}{K}\frac{\boldsymbol{\Lambda}' \mathbf{Q}_{(i)}  \boldsymbol{\Lambda}}{\sigma_u^2}\right|  > \frac{e}{\sigma_u^2} \right) \leq  \sigma_u^2 (J \sigma_u^2)    \frac{\mathbb{E} \left[ \left( \frac{1}{K}\frac{\boldsymbol{\Lambda}' \mathbf{Q}_{(i)}  \boldsymbol{\Lambda}}{\sigma_u^2} \right)^2 \right]}{e^2},
\end{eqnarray}
where $\sigma_u^2 = o(1)$ and the other two terms are $O(1)$, which implies that $\underset{{i=1,...,J}}{\mbox{max}}  \left| \frac{1}{K}{\boldsymbol{\Lambda}' \mathbf{Q}_{(i)}  \boldsymbol{\Lambda}} \right|  \buildrel p \over \rightarrow  0$. Now since $\boldsymbol{\epsilon}_i$ is independent from  $\{ \mathbf{y}_1, \hdots,  \mathbf{y}_J \} \setminus \{ \mathbf{y}_i \} $, we have $\frac{\boldsymbol{\epsilon}_i'\mathbf{Q}_{(i)} \boldsymbol{\epsilon}_i}{\sigma^2} \sim \chi^2_K$, which implies that $\frac{1}{K^2}\mathbb{E} \left[ \left( {\boldsymbol{\epsilon}_i'\mathbf{Q}_{(i)} \boldsymbol{\epsilon}_i} - K \sigma^2 \right)^4 \right] = O(1)$. 
Therefore, for any $e>0$,
\begin{eqnarray}
P\left( \underset{{i=1,...,J}}{\mbox{max}}  \left| \frac{1}{K} {\boldsymbol{\epsilon}_i'\mathbf{Q}_{(i)} \boldsymbol{\epsilon}_i} - \sigma^2  \right|  > e \right) &\leq& J P\left(  \left| \frac{1}{K} {\boldsymbol{\epsilon}_i'\mathbf{Q}_{(i)} \boldsymbol{\epsilon}_i} - \sigma^2 \right|  > e \right) \\ \nonumber
& \leq & \frac{1}{K}\frac{J}{K} \frac{\frac{1}{K^2} \mathbb{E}\left[ \left|  {\boldsymbol{\epsilon}_i'\mathbf{Q}_{(i)} \boldsymbol{\epsilon}_i} - K \sigma^2  \right| ^4 \right]}{e^4}  = o(1)O(1)O(1),
\end{eqnarray}
which implies that $\underset{{i=1,...,J}}{\mbox{max}}  \left| \frac{1}{K} {\boldsymbol{\epsilon}_i'\mathbf{Q}_{(i)} \boldsymbol{\epsilon}_i}   \right| \buildrel p \over \rightarrow \sigma^2$. 

Finally, consider the term $\boldsymbol{\Lambda}' \mathbf{Q}_{(i)} \boldsymbol{\epsilon}_i = \sum_{1=1}^K \tilde u_{q(i)}\tilde \epsilon_{iq}$, where $\tilde u_{q(i)}  \buildrel iid \over \sim N(0, \sigma_u^2)$, and  $\tilde \epsilon_{iq}  \buildrel iid \over \sim N(0, \sigma^2)$. Moreover, $\tilde u_{q(i)} $ and $\tilde \epsilon_q$ are independent. Therefore, $\mathbb{E}[\tilde u_{q(i)}\tilde \epsilon_{iq}]=0$, and $var[\tilde u_{q(i)}\tilde \epsilon_{iq}]=\sigma_u^2 \sigma^2$. Therefore,
\begin{eqnarray}
P\left( \underset{{i=1,...,J}}{\mbox{max}}  \left| \frac{1}{K} \sum_{q=1}^{K} \tilde u_{q(i)}\tilde \epsilon_{iq} \right|  > e \right) &\leq& J P\left(  \left| \frac{1}{K} \sum_{q=1}^{K}\tilde u_{q(i)}\tilde \epsilon_{iq}\right|  > e \right) \leq  \frac{J}{K} \frac{ \sigma_u^2 \sigma^2}{e^2} = o(1),
\end{eqnarray}
because $J/K=O(1)$ and $\sigma_u^2 = o(1)$. 

Likewise, we can do the same calculations for ${\mbox{min}_{i=1,...,J}\left\{  \left(\frac{1}{K} B_i \right)^2 \right\}}$. Combining all these results, we have that ${\mbox{min}_{i=1,...,J}\left\{  \left(\frac{1}{K} B_i \right)^2 \right\}}$ and ${\mbox{max}_{i=1,...,J}\left\{  \left(\frac{1}{K} B_i \right)^2 \right\}}$ converge in probability to $\sigma^4$.

Now we consider $\sum_{i=1}^J \left( \frac{1}{K}A_i \right)^2$. Consider first $\sum_{i=1}^J \left(\frac{1}{K} \boldsymbol{\epsilon}_i ' \mathbf{Q}_{(i)} \boldsymbol{\epsilon}_0 \right)^2 = \sum_{i=1}^J \left(\frac{1}{K} \sum_{q=1}^K \tilde \epsilon_{iq} \tilde \epsilon_{0(i)q} \right)^2 = \frac{J}{K} \frac{1}{J} \sum_{i=1}^J \left(\frac{1}{\sqrt{K}} \sum_{q=1}^K \tilde \epsilon_{iq} \tilde \epsilon_{0(i)q} \right)^2$, where  $\tilde \epsilon_{iq}  \buildrel iid \over \sim N(0, \sigma^2)$, and $\tilde \epsilon_{0(i)q}  \buildrel iid \over \sim N(0, \sigma^2)$. Note that $\mathbb{E} \left[\left(\frac{1}{\sqrt{K}} \sum_{q=1}^K \tilde \epsilon_{iq} \tilde \epsilon_{0(i)q} \right)^2\right] = var \left[\left(\frac{1}{\sqrt{K}} \sum_{q=1}^K \tilde \epsilon_{iq} \tilde \epsilon_{0(i)q} \right) \right] = \sigma^4$. Moreover, we also have that $var \left( \frac{1}{J} \sum_{i=1}^J \left(\frac{1}{\sqrt{K}} \sum_{q=1}^K \tilde \epsilon_{iq} \tilde \epsilon_{0(i)q} \right)^2 \right) \rightarrow 0$, which implies that $\frac{J}{K} \frac{1}{J} \sum_{i=1}^J \left(\frac{1}{\sqrt{K}} \sum_{q=1}^K \tilde \epsilon_{iq} \tilde \epsilon_{0(i)q} \right)^2  \buildrel p \over \rightarrow \frac{c}{1-c} \sigma^4$.
Using similar calculations, all the other terms in the numerator converge in probability to zero.  

Therefore, both the upper and the lower bounds from equation \ref{bounds} converge in probability to $\frac{c}{1-c}$, which implies that  $\mathbf{b}'\mathbf{b}  \buildrel p \over \rightarrow \frac{c}{1-c}$. Now note that, for any $t \in \mathcal{T}_1$, $\hat \alpha_{0t} = \alpha_{0t} + \lambda_t(1 - \widehat{{\mu}}_{\mbox{\tiny OLS}}) + \epsilon_{0t} - \boldsymbol{\epsilon}_t ' \mathbf{b}$. From Proposition \ref{Prop_OLS_K_large}, $ \lambda_t(1 - \widehat{{\mu}}_{\mbox{\tiny OLS}})  \buildrel p \over \rightarrow 0$. Since data is iid normal across time, we have that $\epsilon_{0t} - \boldsymbol{\epsilon}_t ' \mathbf{b} | \mathbf{b} \sim N \left(0, \sigma^2 \left( 1 + \mathbf{b}'\mathbf{b} \right)  \right)$. Since $\mathbf{b}'\mathbf{b}  \buildrel p \over \rightarrow \frac{c}{1-c}$, it follows that $\epsilon_{0t} - \boldsymbol{\epsilon}_t ' \mathbf{b}  \buildrel d \over \rightarrow N \left(0, \frac{\sigma^2}{1-c}   \right)$, which implies that $\hat \alpha_{0t}  \buildrel d \over \rightarrow N \left(\alpha_{0t}, \frac{\sigma^2}{1-c}   \right)$.

\renewcommand{\refname}{References} 
%For modifying the bibliography heading

\bibliographystyle{apalike}
\bibliography{bib/bib.bib}

\begin{thebibliography}{}

\bibitem[Abadie et~al., 2010]{Abadie2010}
Abadie, A., Diamond, A., and Hainmueller, J. (2010).
\newblock {Synthetic Control Methods for Comparative Case Studies: Estimating
  the Effect of California's Tobacco Control Program}.
\newblock {\em Journal of the American Statiscal Association},
  105(490):493--505.

\bibitem[Abadie et~al., 2015]{Abadie2015}
Abadie, A., Diamond, A., and Hainmueller, J. (2015).
\newblock {Comparative Politics and the Synthetic Control Method}.
\newblock {\em American Journal of Political Science}, 59(2):495--510.

\bibitem[Abadie and Gardeazabal, 2003]{Abadie2003}
Abadie, A. and Gardeazabal, J. (2003).
\newblock {The Economic Costs of Conflict: A Case Study of the Basque Country}.
\newblock {\em American Economic Review}, 93(1):113--132.

\bibitem[Andrews, 1988]{Andrews1988}
Andrews, D. W.~K. (1988).
\newblock Laws of large numbers for dependent non-identically distributed
  random variables.
\newblock {\em Econometric Theory}, 4(3):458--467.

\bibitem[{Arkhangelsky} et~al., 2018]{SDID}
{Arkhangelsky}, D., {Athey}, S., {Hirshberg}, D.~A., {Imbens}, G.~W., and
  {Wager}, S. (2018).
\newblock {Synthetic Difference in Differences}.
\newblock {\em arXiv e-prints}, page arXiv:1812.09970.

\bibitem[{Athey} et~al., 2017]{Matrix}
{Athey}, S., {Bayati}, M., {Doudchenko}, N., {Imbens}, G., and {Khosravi}, K.
  (2017).
\newblock {Matrix Completion Methods for Causal Panel Data Models}.
\newblock {\em arXiv e-prints}, page arXiv:1710.10251.

\bibitem[Athey and Imbens, 2016]{Athey_Imbens}
Athey, S. and Imbens, G. (2016).
\newblock {The State of Applied Econometrics-Causality and Policy Evaluation}.
\newblock {\em arXiv preprint arXiv:1607.00699}.

\bibitem[Bai, 2009]{Bai}
Bai, J. (2009).
\newblock Panel data models with interactive fixed effects.
\newblock {\em Econometrica}, 77(4):1229--1279.

\bibitem[Botosaru and Ferman, 2019]{FB}
Botosaru, I. and Ferman, B. (2019).
\newblock {On the role of covariates in the synthetic control method}.
\newblock {\em The Econometrics Journal}, 22(2):117--130.

\bibitem[Bradley, 2005]{bradley2005}
Bradley, R.~C. (2005).
\newblock Basic properties of strong mixing conditions. a survey and some open
  questions.
\newblock {\em Probab. Surveys}, 2:107--144.

\bibitem[Carvalho et~al., 2018]{arco}
Carvalho, C., Masini, R., and Medeiros, M.~C. (2018).
\newblock Arco: An artificial counterfactual approach for high-dimensional
  panel time-series data.
\newblock {\em Journal of Econometrics}, 207(2):352--380.

\bibitem[Cattaneo et~al., 2018]{Cattaneo}
Cattaneo, M.~D., Jansson, M., and Newey, W.~K. (2018).
\newblock Inference in linear regression models with many covariates and
  heteroscedasticity.
\newblock {\em Journal of the American Statistical Association},
  113(523):1350--1361.

\bibitem[Chernozhukov et~al., 2019]{Chernozhukov}
Chernozhukov, V., Wuthrich, K., and Zhu, Y. (2019).
\newblock {An Exact and Robust Conformal Inference Method for Counterfactual
  and Synthetic Controls}.
\newblock Papers 1712.09089, arXiv.org.

\bibitem[de~la Pe\~{n}a et~al., 2004]{delapena2004}
de~la Pe\~{n}a, V.~H., Klass, M.~J., and Leung~Lai, T. (2004).
\newblock Self-normalized processes: exponential inequalities, moment bounds
  and iterated logarithm laws.
\newblock {\em Ann. Probab.}, 32(3):1902--1933.

\bibitem[Doudchenko and Imbens, 2016]{Doudchenko}
Doudchenko, N. and Imbens, G. (2016).
\newblock {Balancing, regression, difference-in-differences and synthetic
  control methods: A synthesis}.

\bibitem[{Ferman} and {Pinto}, 2019]{FP_SC}
{Ferman}, B. and {Pinto}, C. (2019).
\newblock {Synthetic Controls with Imperfect Pre-Treatment Fit}.
\newblock {\em arXiv e-prints}, page arXiv:1911.08521.

\bibitem[Ferman et~al., 2020]{FPP}
Ferman, B., Pinto, C., and Possebom, V. (2020).
\newblock Cherry picking with synthetic controls.
\newblock {\em Journal of Policy Analysis and Management}, 39(2):510--532.

\bibitem[Gobillon and Magnac, 2016]{Gobillon}
Gobillon, L. and Magnac, T. (2016).
\newblock Regional policy evaluation: Interactive fixed effects and synthetic
  controls.
\newblock {\em The Review of Economics and Statistics}, 98(3):535--551.

\bibitem[Hsiao et~al., 2012]{Hsiao}
Hsiao, C., Steve~Ching, H., and Ki~Wan, S. (2012).
\newblock A panel data approach for program evaluation: Measuring the benefits
  of political and economic integration of hong kong with mainland china.
\newblock {\em Journal of Applied Econometrics}, 27(5):705--740.

\bibitem[Li and Bell, 2017]{Li}
Li, K.~T. and Bell, D.~R. (2017).
\newblock Estimation of average treatment effects with panel data: Asymptotic
  theory and implementation.
\newblock {\em Journal of Econometrics}, 197(1):65 -- 75.

\bibitem[Newey, 1991]{Newey}
Newey, W.~K. (1991).
\newblock Uniform convergence in probability and stochastic equicontinuity.
\newblock {\em Econometrica}, 59(4):1161--1167.

\bibitem[Newey and McFadden, 1994]{NM}
Newey, W.~K. and McFadden, D. (1994).
\newblock Chapter 36 large sample estimation and hypothesis testing.
\newblock volume~4 of {\em Handbook of Econometrics}, pages 2111 -- 2245.
  Elsevier.

\bibitem[Powell, 2019]{Powell}
Powell, D. (2019).
\newblock Synthetic control estimation beyond case studies: Does the minimum
  wage reduce employment?

\bibitem[Xu, 2017]{XU}
Xu, Y. (2017).
\newblock Generalized synthetic control method: Causal inference with
  interactive fixed effects models.
\newblock {\em Political Analysis}, 25(2):57--76.

\end{thebibliography}
%The file containing the bibliography

\pagebreak

\begin{landscape}
\begin{table}[h!]
%\begin{footnotesize}
  \centering
\caption{{\bf Monte Carlo Simulations}} \label{Table_MC}
      \begin{lrbox}{\tablebox}
\begin{tabular}{ccccccccccccccc}
\hline
\hline

 & \multicolumn{4}{c}{SC }  & & \multicolumn{4}{c}{Unrestricted  }   & & \multicolumn{4}{c}{OLS with }  \\
&  \multicolumn{4}{c}{estimator} & & \multicolumn{4}{c}{OLS} & & \multicolumn{4}{c}{adding-up constraint}  \\  \cline{2-5} \cline{7-10} \cline{12-15}
 
$J$  &4 & 10 & 50 & 100 & &4 & 10 & 50 & 100  & &4 & 10 & 50 & 100  \\

     & (1) &  (2) & (3) & (4) & & (5) & (6) & (7) & (8)  & & (9) & (10) & (11) & (12) \\ 
     
     \hline
     
\multicolumn{15}{c}{Panel A: $T_0=J + 5$} \\
\\
$\mathbb{E}[\hat \mu_{01}]$ & 0.760 & 0.817 & 0.905 & 0.929 &  & 0.653 & 0.816 & 0.962 & 0.976 &  & 0.829 & 0.910 & 0.982 & 0.989 \\
$se[\hat \mu_{01}]$ & 0.206 & 0.156 & 0.076 & 0.054 &  & 0.489 & 0.516 & 0.501 & 0.506 &  & 0.319 & 0.324 & 0.320 & 0.325 \\
 \\
$\mathbb{E}[\hat \mu_{02}]$ & 0.240 & 0.183 & 0.095 & 0.071 &  & -0.002 & -0.002 & -0.005 & 0.010 &  & 0.171 & 0.090 & 0.018 & 0.011 \\
$se[\hat \mu_{02}]$ & 0.206 & 0.156 & 0.076 & 0.054 &  & 0.498 & 0.509 & 0.497 & 0.518 &  & 0.319 & 0.324 & 0.320 & 0.325 \\
 \\
$se(\hat \alpha)$ & 1.288 & 1.194 & 1.084 & 1.073 &  & 1.586 & 1.984 & 3.791 & 5.220 &  & 1.486 & 1.806 & 3.437 & 4.661 \\
 \\

\multicolumn{15}{c}{Panel B: $T_0=2 \times J$} \\
\\
$\mathbb{E}[\hat \mu_{01}]$ & 0.753 & 0.831 & 0.922 & 0.944 &  & 0.637 & 0.828 & 0.960 & 0.982 &  & 0.825 & 0.915 & 0.981 & 0.991 \\
$se[\hat \mu_{01}]$ & 0.217 & 0.136 & 0.057 & 0.040 &  & 0.569 & 0.343 & 0.143 & 0.103 &  & 0.354 & 0.231 & 0.100 & 0.072 \\
 \\
$\mathbb{E}[\hat \mu_{02}]$ & 0.247 & 0.169 & 0.078 & 0.056 &  & 0.001 & 0.003 & 0.000 & 0.001 &  & 0.175 & 0.085 & 0.019 & 0.009 \\
$se[\hat \mu_{02}]$ & 0.217 & 0.136 & 0.057 & 0.040 &  & 0.582 & 0.335 & 0.143 & 0.102 &  & 0.354 & 0.231 & 0.100 & 0.072 \\
 \\
$se(\hat \alpha)$ & 1.297 & 1.186 & 1.050 & 1.047 &  & 1.798 & 1.586 & 1.420 & 1.444 &  & 1.571 & 1.519 & 1.411 & 1.437 \\

\hline

\end{tabular}
   \end{lrbox}
\usebox{\tablebox}\\
\settowidth{\tableboxwidth}{\usebox{\tablebox}} \parbox{\tableboxwidth}{\footnotesize{Notes: this table presents the expected value and the standard error of the estimators for $\boldsymbol{\mu}_0 = (\mu_{01}, {\mu}_{02})$.  It also presents the standard error of $\hat \alpha$. Since $\mathbb{E}[{\lambda}_t]=0$, $\mathbb{E}[\hat \alpha_{01}] =0$, which is the true treatment effect. Panel A presents results with  $T_0=J + 5$, while Panel B presents results with  $T_0=2\times J$. Columns 1 to 4 present the results using the SC estimator to estimate the weights. Columns 5 to 8 present results using OLS estimator with no constraint. Columns 9 to 12 present results using OLS estimator with adding-up constraint. Results based on 5000 simulations.  The DGP is described in detail in Section \ref{MC}.

}
}
%\end{footnotesize}
\end{table}
\end{landscape}

 \setcounter{table}{0}
\renewcommand\thetable{A.\arabic{table}}

\setcounter{figure}{0}
\renewcommand\thefigure{A.\arabic{figure}}

\pagebreak

\begin{landscape}
\begin{table}[h!]
%\begin{footnotesize}
  \centering
\caption{{\bf Monte Carlo Simulations with Covariates}} \label{Table_MC_covariates}
      \begin{lrbox}{\tablebox}
\begin{tabular}{ccccccccccccccc}
\hline
\hline

 & \multicolumn{4}{c}{All pre-treatment }  & & \multicolumn{4}{c}{Half of the pre-treatment }   & & \multicolumn{4}{c}{Average of pre-treatment }  \\
&  \multicolumn{4}{c}{outcome lags} & & \multicolumn{4}{c}{outcome lags + covariates} & & \multicolumn{4}{c}{outcomes + covariates}  \\  \cline{2-5} \cline{7-10} \cline{12-15}
 
$J$  &4 & 12 & 40 & 100 & &4 & 12 & 40 & 100  & &4 & 12 & 40 & 100  \\

     & (1) &  (2) & (3) & (4) & & (5) & (6) & (7) & (8)  & & (9) & (10) & (11) & (12) \\ 
     
     \hline
     
\multicolumn{15}{c}{Panel A: $T_0=J + 5$} \\
\\
$\mathbb{E}[\hat \mu_{01}]$ & 0.732 & 0.814 & 0.885 & 0.925 &  & 0.731 & 0.811 & 0.889 & 0.927 &  & 0.675 & 0.686 & 0.659 & 0.673 \\
$se[\hat \mu_{01}]$ & 0.222 & 0.151 & 0.089 & 0.058 &  & 0.241 & 0.164 & 0.094 & 0.063 &  & 0.340 & 0.262 & 0.228 & 0.197 \\
 \\
$\mathbb{E}[\hat z_{01}]$ & 0.733 & 0.820 & 0.880 & 0.921 &  & 0.770 & 0.840 & 0.890 & 0.925 &  & 0.858 & 0.956 & 0.989 & 0.992 \\
$se[\hat z_{01}]$ & 0.200 & 0.148 & 0.090 & 0.055 &  & 0.202 & 0.147 & 0.091 & 0.060 &  & 0.188 & 0.122 & 0.048 & 0.038 \\
 \\
$se(\hat \alpha)$ & 1.408 & 1.275 & 1.132 & 1.063 &  & 1.430 & 1.277 & 1.142 & 1.070 &  & 1.496 & 1.353 & 1.222 & 1.184 \\

\\

\multicolumn{15}{c}{Panel B: $T_0=2 \times J$} \\
\\
$\mathbb{E}[\hat \mu_{01}]$ & 0.728 & 0.832 & 0.902 & 0.938 &  & 0.726 & 0.836 & 0.905 & 0.942 &  & 0.688 & 0.692 & 0.674 & 0.666 \\
$se[\hat \mu_{01}]$ & 0.219 & 0.126 & 0.069 & 0.039 &  & 0.230 & 0.131 & 0.073 & 0.042 &  & 0.342 & 0.264 & 0.231 & 0.192 \\
 \\
$\mathbb{E}[\hat z_{01}]$ & 0.738 & 0.827 & 0.908 & 0.938 &  & 0.772 & 0.840 & 0.912 & 0.941 &  & 0.865 & 0.962 & 0.986 & 0.995 \\
$se[\hat z_{01}]$ & 0.230 & 0.128 & 0.066 & 0.042 &  & 0.229 & 0.129 & 0.067 & 0.043 &  & 0.204 & 0.099 & 0.055 & 0.024 \\
 \\
$se(\hat \alpha)$ & 1.406 & 1.186 & 1.098 & 1.058 &  & 1.407 & 1.203 & 1.104 & 1.069 &  & 1.566 & 1.294 & 1.225 & 1.223 \\

\hline

\end{tabular}
   \end{lrbox}
\usebox{\tablebox}\\
\settowidth{\tableboxwidth}{\usebox{\tablebox}} \parbox{\tableboxwidth}{\footnotesize{Notes: this table presents the expected value and the standard error of the estimators for $\mu_{01}$ and $z_{01}$ using the specification that includes all pre-treatment outcomes lags as predictors (columns 1 to 4), the first half of the pre-treatment outcome lags and the covariates as predictors (columns 5 to 8), and the average of the pre-treatment outcomes and the covariates as predictors (columns 9 to 12).  It also presents the standard error of $\hat \alpha$. Since $\mathbb{E}[{\lambda}_t]=0$, $\mathbb{E}[\hat \alpha_{01}] =0$, which is the true treatment effect. Panel A presents results with  $T_0=J + 5$, while Panel B presents results with  $T_0=2\times J$. Results based on 500 simulations. The DGP is described in detail in Appendix Section \ref{Appendix_covariates}.

}
}
%\end{footnotesize}
\end{table}
\end{landscape}

\end{document}